\newcommand{\CC}{{\mathbb C}}
\newcommand{\II}{{\mathbb I}}
\newcommand{\KK}{{\mathbb K}}
\newcommand{\NN}{{\mathbb N}}
\newcommand{\RR}{{\mathbb R}}
\newcommand{\ZZ}{{\mathbb Z}}
\newcommand{\cB}{{\mathcal{B}}}
\newcommand{\cC}{{\mathcal{C}}}
\newcommand{\cD}{{\mathcal{D}}}
\newcommand{\cH}{{\mathcal{H}}}
\newcommand{\cK}{{\mathcal{K}}}
\newcommand{\cR}{{\mathcal{R}}}
\newcommand{\cS}{{\mathcal{S}}}
\newcommand{\bLambda}{{\mbox{\boldmath $\Lambda$}}}
\newcommand{\bia}{\mbox{\boldmath $a$}}
\newcommand{\bib}{\mbox{\boldmath $b$}}
\newcommand{\bix}{\mbox{\boldmath $x$}}
\newcommand{\biy}{\mbox{\boldmath $y$}}
\newcommand{\biP}{\mbox{\boldmath $P$}}
\newcommand{\biQ}{\mbox{\boldmath $Q$}}
\newcommand{\biR}{\mbox{\boldmath $R$}}
\newcommand{\Co}{{\mbox{\rm C}_0}}
\newcommand{\bdot}{\mbox{\boldmath $\cdot$}}
\newcommand{\lap}{\lambda^\prime}
\newcommand{\lapp}{\lambda^{\prime \prime}}
\newcommand{\Lap}{\Lambda^\prime}
\newcommand{\Lapp}{\Lambda^{\prime \prime}}
\newcommand{\lsml}{{\lap \smile \lapp}}
\newcommand{\tlsml}{\mbox{\tiny \boldmath $\lambda^\prime \! \! \smile \! \! \lambda^{\prime \prime}$}}
\newcommand{\lfrl}{{\lap \frown \lapp}}
\newcommand{\Z}{{\ZZ^d}}
\newcommand{\Lml}{{\Lambda \backslash \lambda}}
\newcommand{\Ms}{M^{\mbox{\tiny \boldmath $\lambda^\prime \! \! \smile \! \! \lambda^{\prime \prime}$}}}
\newcommand{\bVl}{\mbox{\boldmath $V$}_\lambda}
\newcommand{\sbVl}{\mbox{\boldmath $\scriptstyle V$}_{\! \lambda}}
\def\eg{{\it e.g.\ }}
\def\ie{{\it i.e.\ }}
\def\viz{{\it viz.\ }}
\title{The resolvent algebra for oscillating lattice systems.    
Dynamics, ground and equilibrium states}
\author{Detlev Buchholz}
\institute{Institut f\"ur Theoretische Physik, 
Universit\"at G\"ottingen, \\ 37077 G\"ottingen, Germany}
\authorrunning{Detlev Buchholz} 
\titlerunning{The resolvent algebra for oscillating lattice systems}
\date{}
\begin{document}

\maketitle

\begin{abstract} 
Within the C*-algebraic framework of the resolvent algebra
for can\-onical quantum systems, the structure of 
oscillating lattice systems with bounded nearest neighbor interactions
is studied in any number of dimensions. The global dynamics of such 
systems acts on the resolvent algebra by automorphisms and there exists 
a (in any regular representation) weakly dense subalgebra 
on which this action is pointwise norm continuous. Based on this observation,  
equilibrium (KMS) states as well as ground states are constructed 
which are shown to be regular.  It is also indicated 
how to deal with singular interactions and non-harmonic oscillations.  
\end{abstract}
\keywords{resolvent algebra -- dynamics -- equilibrium states -- 
local normality}

\section{Introduction}
\setcounter{equation}{0}

In this article we continue our study of the resolvent algebra
with emphasis on applications. 
The resolvent algebra, being a novel approach to the treatment of
canonical commutation relations of finite and infinite systems, 
was originally invented to overcome some technical 
difficulties in the analysis of supersymmetric models \cite{BuGr1}. 
It was quickly realized that, in contrast to the 
familiar Weyl algebra, it provides
a general C*-algebraic framework for the construction of non-trivial 
dynamics and of the corresponding 
states \cite{BuGr2}. Moreover, it encodes distinctive properties of 
finite and infinite quantum systems and has an intriguing algebraic 
structure~\cite{Bu1}. For a recent review of these developments, 
cf.\ \cite{BuGr3}.

It is the aim of the present article to supplement these results by a 
study of oscillating lattice systems with bounded nearest neighbor 
interaction, describing particles which are confined about 
their respective lattice positions by harmonic forces and which interact with 
each other. We shall show that for such systems the time 
evolution acts by automorphisms on the resolvent algebra.
This fact was already established in \cite{BuGr2} for one-dimensional
lattices and we extend here these results to any number of dimensions.
Moreover, we shall exhibit a subalgebra of the resolvent algebra which is
weakly dense in all regular representations and on which the   
automorphisms act pointwise norm continuously, \ie this subalgebra
together with the respective automorphisms constitutes C*-dynamical
systems. 

This observation facilitates the construction of global equilibrium 
and ground states, a topic which has not been discussed in the W*-dynamical 
approach to oscillating lattice systems, proposed in 
\cite{NaRaSchSi,NaSchSiStZa} and 
references quoted there. There the underlying algebra is equipped with the 
weak topology induced by some {\it ad hoc} choice of states. 
Yet, choosing from the outset a weak topology on the global 
algebra is a quite subtle issue since 
different equilibrium or ground states lead in
general to globally disjoint representations of the algebra 
\cite{Ta} and hence to different topologies. 
The best one may hope for in the present context is that the
representations of interest are quasi-equivalent on all subalgebras
affiliated with finite subsets of the lattice  
(\viz locally normal), but this feature requires some proof.

In order to illustrate the utility of the present approach in this
respect, we will construct global equilibrium (KMS) states for high 
temperatures as well as ground states 
on the above subalgebra for any of the given dynamics and show 
that they are all locally normal. 
They can then be extended to regular equilibrium
and ground states on the full resolvent algebra. 
Results of this type seem to be relatively rare, 
cf.\ for example the books \cite{AlKoKoRo,BrRo}
and references quoted there.
It is the primary purpose of the present article to reveal 
the simplifying features of our algebraic 
approach for the study of lattice systems; but it 
should be noted that the resolvent algebra can also be used for the
treatment of continuous systems (field theories) \cite{BuGr2}.

The subsequent section contains definitions and two relevant 
technical results. In Sec.~3 we establish the existence
of interacting dynamics of the resolvent algebra 
and exhibit the subalgebra which is stable and pointwise norm continuous 
under their action. Sec.~4 contains the construction 
of locally normal 
equilibrium (KMS) states for sufficiently high temperatures and 
Sec.~5 that of ground states.  In Sec.~6 we indicate how  
to deal with 
non-harmonic dynamics and singular interactions.
The article concludes with a brief summary and outlook.

\section{Preliminaries}
\setcounter{equation}{0}

Let $\Z$ be a $d$-dimensional cubic lattice, let 
$\Lambda \subset \Z$ denote any of 
its finite subsets and, more specifically, let 
$\lambda \in \Z$ denote any lattice point. 
For the case at hand, the resolvent algebra $\cR(\Z)$ 
is generated by the sums and products of the resolvents 
\begin{equation} \label{r2.1} 
(i c + \sum_j \bia_j \biP_{\lambda_j} + 
\sum_k \bib_k \biQ_{\lambda_k})^{-1} \, ,
\quad c \in \RR \backslash \{0\} \, , \ \bia_j, \bib_k \in \RR^d  \, ,
\end{equation}
formed by linear combinations of 
$d$-dimensional momentum and position 
operators $(\biP_\lambda, \biQ_\lambda)$ associated with 
the lattice points $\lambda \in \Z$; the bold face 
product denotes scalar products.
Operators at different lattice points commute, whereas 
at a given point they satisfy canonical commutation relations
in resolvent form, cf.~\cite[Def.~3.1]{BuGr2}. 
Similarly, one defines the resolvent algebras
$\cR(\Lambda)$ for finite subsets $\Lambda \subset \Z$
as the algebras generated by resolvents of 
$(\biP_\lambda, \biQ_\lambda)$
with $\lambda \in \Lambda$. As a matter of fact, one proceeds 
from abstract C*-versions of these
algebras. Yet, as has been shown in
\cite[Thm.~4.10]{BuGr2}, these C*-algebras 
are faithfully represented in any regular representation, \eg the 
Fock representation. So we may assume here that we are dealing with 
the concretely represented resolvent algebra in some such representation.

For the construction of interacting dynamics it matters that
the resolvent algebras have a non-trivial 
(two-sided) ideal structure \cite{BuGr2}.
We do not need to dive deeply here into this topic, cf. \cite{Bu1},
and will only make use of the following elementary facts. 
Given $\lambda \in \Lambda$, one can form the operators
$f(\biP_\lambda) g(\biQ_\lambda)$, where $f,g \in \Co(\RR^d)$,
the space of continuous functions which tend to $0$ at infinity. 
The C*-algebra $\cC(\lambda)$ generated by these operators forms
an ideal in the resolvent algebra $\cR(\lambda)$ which is isomorphic
to the algebra of compact operators on some separable Hilbert space;
more generally, given $\Lambda \subset \Z$, the products  
$\Pi_{\lambda \in \Lambda} \, f_\lambda (\biP_\lambda) g_\lambda (\biQ_\lambda)$,
where $f_\lambda, g_\lambda \in \Co(\RR^d)$, $\lambda \in \Lambda$,
form an ideal $\cC(\Lambda) \subset 
\cR(\Lambda)$ which, again, is isomorphic to 
the algebra of compacts \cite[Thm.\ 5.4]{BuGr2}. 
Note that the algebras 
$\cC(\Lambda)$ are isomorphic to the unique 
C*-tensor products $\otimes_{\lambda \in \Lambda} \cC(\lambda)$,
$\Lambda \subset \Z$, since the algebras $\cC(\lambda)$ 
are postliminal (type I) and commute with each other at different points 
$\lambda \in \Z$. 

Whereas the assignment 
$\Lambda \mapsto \cR(\Lambda)$ defines a net of C*-algebras
on $\Z$ so that we 
can proceed to its C*-inductive limit $\cR(\Z)$, the compact algebras
$\cC(\Lambda)$ are not nested. We therefore proceed from them to the 
algebras $\cK(\Lambda)$ which are formed by all compact 
algebras in the respective region $\Lambda$ and the unit operator. 

\medskip
\noindent \textbf{Definition:}  Let $\Lambda \subset \Z$. The 
algebra $\cK(\Lambda) \subset \cR(\Z)$ is the unital C*-algebra formed by
all algebras $\cC(\Lambda_1)$ with $\Lambda_1 \subseteq \Lambda$, 
and the unit operator $1$. 

\medskip
The algebras $\cK(\Lambda)$ are isomorphic to 
the unique tensor product $\otimes_{\lambda \in \Lambda} \cK(\lambda)$, 
and the assignment $\Lambda \mapsto \cK(\Lambda)$ defines a subnet 
of the resolvent algebra whose C*-inductive limit is denoted by $\cK(\Z)$. 
This subalgebra of the resolvent algebra will play a prominent role 
in the subsequent discussion.

Let us turn next to the definition of dynamics. We proceed from the
harmonic dynamics $\alpha^{(0)}_\RR$ which, for given frequency 
$\varpi > 0$ and any time $t \in \RR$, is fixed by its action on the 
position and momentum operators at any $\lambda \in \Z$,  
$$
\alpha_t^{(0)} (\biQ_\lambda) = \cos(t \varpi) \biQ_\lambda +
\sin(t \varpi)/\varpi \, \biP_\lambda,
\ 
\alpha_t^{(0)} (\biP_\lambda) = \cos(t \varpi) \biP_\lambda -
\varpi \sin(t \varpi) \,  \biQ_\lambda \, .
$$
Note that each compact subalgebra $\cC(\Lambda)$,
$\Lambda \subset \Z$, is stable under this action. 

The full dynamics  $\alpha_\RR$ is obtained from
$\alpha^{(0)}_\RR$ by 
introducing between neighboring points in increasing  
subsets $\Lambda \subset \Z$ some interaction. 
Pairs of neighboring points will be 
denoted by  
$\lsml \in \Lambda \times \Lambda$
and the interaction between any two such points is described  
by the (real) potential 
$ V_{\lfrl} \doteq
V(\biQ_{\lap} - \biQ_{\lapp})$,
where $V \in \Co(\RR^d)$ is kept fixed. 
Proceeding to the interaction picture, we consider 
cocycles $\gamma^\Lambda_\RR$ which are designed to define  
automorphisms of the resolvent
algebra. They satisfy the cocycle equation 
$\alpha_s^{(0)} \gamma_t^\Lambda \alpha_{-s}^{(0)} \gamma_s^\Lambda =
\gamma_{s + t}^\Lambda$ for \mbox{$s,t \in \RR$}  with the initial value  
$\partial_t \gamma_t^\Lambda (\, \bdot \,)|_{t = 0} = i \, 
\big[ \, \bdot \, , 
\sum_{\lsml \in \Lambda \times \Lambda} 
\,   V_{\lfrl} \big]$. 
Here the dot stands for elements of the 
resolvent algebra and the square bracket denotes the commutator;
cf.~below for precise definitions. 

We shall prove in the 
subsequent section that there exist solutions to these equations.
With their help one can define automorphisms \ 
$\alpha^\Lambda_t \doteq \alpha_t^{(0)} \gamma_{-t}^\Lambda$, $t \in \RR$, 
of the resolvent algebra, describing the dynamics in the 
presence of interaction between neighboring points in the sets 
$\Lambda \subset \Z$. Moreover, these automorphisms converge 
pointwise in norm to automorphisms  
$\alpha_t$, $t \in \RR$, of the resolvent algebra in the 
thermodynamic limit $\Lambda \nearrow \Z$. The latter automorphisms
no longer leave the individual algebras $\cC(\Lambda)$,
$\Lambda \subset \Z$, invariant, but
the subalgebra $\cK(\Z)$ remains stable under their action,
as we shall see.

For the proof of these assertions we need two technical lemmas which
we supply here for later reference. We begin with a definition.

\medskip 
\noindent \textbf{Definition:} Let 
$\lsml \in \Z \times \Z$ and let 
$$
(\biP_{\lfrl}, 
\biQ_{\lfrl})
\doteq (\biP_{\lap} - \biP_{\lapp},  
\biQ_{\lap} - \biQ_{\lapp}) \, .
$$ 
The C*-algebra generated by the resolvents of linear combinations of 
the components of $(\biP_{\lfrl}, \biQ_{\lfrl})$ 
is denoted by $\cR(\lfrl)$ and its subalgebra 
which is generated by the operators $f(\biP_{\lfrl}) g(\biQ_{\lfrl})$ 
with $f,g \in \Co(\RR^d)$ is denoted by 
$\cC(\lfrl) \subset \cR(\lfrl)$.
(Note that the latter algebra is,
once again, isomorphic to the algebra of 
compact operators on some Hilbert space; but it is not
contained in $\cK(\Z)$.) 

\begin{lemma} \label{l2.1}
Let $\lsml \in \Z \times \Z$ 
and let $ V_{\lfrl}(s) 
\doteq \alpha_s^{(0)}( V_{\lfrl})$,
$s \in \RR$. Then, for any $t_1, t_2 \in \RR$, one has 
$\int_{t_1}^{t_2} \! ds \,  V_{\lfrl}(s) 
\in \cC(\lfrl)$, where the 
integral is defined in the strong operator topology of the chosen 
regular (hence faithful) representation. 
\end{lemma}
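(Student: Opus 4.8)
\medskip
\noindent The plan is to reduce the assertion to a compactness property of a \emph{double} time integral, exploiting the Mehler kernel of the harmonic propagator of the pair $(\biP_{\lfrl},\biQ_{\lfrl})$. First I would rewrite the integrand: since $\alpha^{(0)}_\RR$ acts linearly on the $(\biP_\lambda,\biQ_\lambda)$ and $V_{\lfrl}=V(\biQ_{\lfrl})$, one has $V_{\lfrl}(s)=\alpha^{(0)}_s(V(\biQ_{\lfrl}))=V(\biR_s)=e^{isH}V(\biQ_{\lfrl})e^{-isH}$, where $\biR_s\doteq\cos(\varpi s)\,\biQ_{\lfrl}+\varpi^{-1}\sin(\varpi s)\,\biP_{\lfrl}$ and $H$ is the harmonic generator of the $\lfrl$-mode (a quadratic form in $\biP_{\lfrl},\biQ_{\lfrl}$), so $e^{isH}$ lies in the von Neumann algebra $\cM\doteq\{\biP_{\lfrl},\biQ_{\lfrl}\}''$. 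The map $s\mapsto V(\biR_s)$ is continuous in the strong operator topology and bounded by $\|V\|_\infty$, so $T\doteq\int_{t_1}^{t_2}V(\biR_s)\,ds$ exists in that topology, is self-adjoint (as $V$ is real), and belongs to $\cM$. In the given regular representation $\cM$ is a type~$\mathrm I$ factor, and $\cC(\lfrl)$ is precisely the ideal of those elements of $\cM$ that are compact on its Schr\"odinger factor; it therefore suffices to show $T$ is compact there.

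The crucial step is to pass to $T^{*}T=T^{2}$. Writing it as a strong double integral,
$$
T^{2}=\int_{t_1}^{t_2}\!\!\int_{t_1}^{t_2}V(\biR_{s'})\,V(\biR_s)\;ds\,ds'
     =\int_{t_1}^{t_2}\!\!\int_{t_1}^{t_2}e^{is'H}\Big(V(\biQ_{\lfrl})\,e^{-i(s'-s)H}\,V(\biQ_{\lfrl})\Big)e^{-isH}\;ds\,ds',
$$
I would show that for $s,s'$ with $\varpi(s'-s)\notin\pi\ZZ$ the bracketed operator $V(\biQ_{\lfrl})e^{-i(s'-s)H}V(\biQ_{\lfrl})$ is compact on the Schr\"odinger factor: there the harmonic propagator $e^{-i\tau H}$ ($\tau=s'-s$) is an integral operator with a Gaussian (Mehler) kernel whose modulus is a constant, finite precisely when $\varpi\tau\notin\pi\ZZ$, so $V(\biQ_{\lfrl})e^{-i\tau H}V(\biQ_{\lfrl})$ has a square-integrable kernel when $V\in\Co(\RR^d)\cap L^2(\RR^d)$, and a uniform approximation of a general $V\in\Co(\RR^d)$ by such functions makes it compact, hence an element of $\cC(\lfrl)$; it also depends norm-continuously on $\tau$ away from $\pi\varpi^{-1}\ZZ$ (dominated convergence in Hilbert--Schmidt norm, then uniform approximation). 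Multiplying this norm-continuous family of compact operators on both sides by the strongly continuous unitaries $e^{is'H}$ and $e^{-isH}$ shows that $(s,s')\mapsto V(\biR_{s'})V(\biR_s)$ is norm-continuous, with values in $\cC(\lfrl)$ and bounded by $\|V\|_\infty^{2}$, on the open set $\Omega\doteq\{(s,s')\in[t_1,t_2]^{2}:\varpi(s'-s)\notin\pi\ZZ\}$, whose complement in $[t_1,t_2]^{2}$ is a finite union of line segments, hence a null set. Consequently, for every compact $\Omega_\varepsilon\subset\Omega$ the integral $\int_{\Omega_\varepsilon}V(\biR_{s'})V(\biR_s)\,ds\,ds'$ is a norm-convergent integral of elements of the C*-algebra $\cC(\lfrl)$ and so lies in $\cC(\lfrl)$; letting $\Omega_\varepsilon\nearrow\Omega$ and bounding the remainder by $\|V\|_\infty^{2}\,|[t_1,t_2]^{2}\setminus\Omega_\varepsilon|\to0$ in operator norm, one obtains $T^{2}\in\cC(\lfrl)$.

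To finish, $|T|=(T^{2})^{1/2}\in\cC(\lfrl)$ since $\cC(\lfrl)$ is a C*-algebra, and $T=U|T|$ with $U\in\cM$ the partial isometry of the polar decomposition, whence $T\in\cC(\lfrl)$ because $\cC(\lfrl)$ is an ideal of $\cM$; this is the claim. The step I expect to cause the most trouble is the joint norm-continuity of $(s,s')\mapsto V(\biR_{s'})V(\biR_s)$ on $\Omega$: the individual operators $V(\biR_s)$ depend on $s$ only strongly, not in norm, so the compactness really has to be generated by the condition $s\neq s'$ through the Gaussian Mehler kernel, and the estimates must be organised on the Schr\"odinger tensor factor, where the operators in question are genuinely compact. (For $d=1$ one could instead Fourier-transform $V$ and recognise $T$ as an integrated Weyl operator with an absolutely continuous symbol on $\RR^{2}$; for $d\ge2$ that symbol lives on a lower-dimensional cone, which is why the route via $T^{2}$ seems preferable.)
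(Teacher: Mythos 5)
Your proof is correct, and its engine is the same as the paper's: both arguments reduce everything to the relative mode, implement $\alpha^{(0)}_\RR$ there by unitaries via von Neumann uniqueness, and extract compactness of the two-time products $\alpha^{(0)}_{s'}(V_{\lfrl})\,\alpha^{(0)}_{s}(V_{\lfrl})$ from square-integrability of the kernel $V(\bix)\,\langle \bix | U_0(s'-s) | \biy \rangle\, V(\biy)$, the harmonic Green's function being bounded for $(s'-s)\notin(\pi/\varpi)\ZZ$ (the paper does this for Schwartz $V$ and approximates general $V\in\Co(\RR^d)$ uniformly only at the very end; you do it for $V\in\Co(\RR^d)\cap L^2(\RR^d)$ inside the compactness step — an immaterial difference). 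Where you genuinely diverge is in how compactness of the products is transferred to the single integral $T$. The paper writes $\| T\Phi_n\|^2$ as the double integral of $\langle \Phi_n, \alpha^{(0)}_{s'}(V_{\lfrl})\alpha^{(0)}_{s}(V_{\lfrl})\Phi_n\rangle$, tests it on weakly null sequences, and concludes by dominated convergence that $T$ itself is compact, hence in $\cC(\lfrl)$. You instead exhibit $T^2$ as a norm limit of integrals of a norm-continuous $\cC(\lfrl)$-valued function over compact sets avoiding the resonant lines (the remainder being bounded by $\|V\|^2$ times a small measure), so $T^2\in\cC(\lfrl)$, and then recover $T$ by the square root and the polar decomposition $T=U|T|$ with $U\in\cM$. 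Both routes work; yours costs two extra but standard steps (norm continuity of $\tau\mapsto V\,e^{-i\tau H}V$ off the singular set, and the ideal/polar-decomposition argument), but it buys a cleaner treatment of multiplicity: in a regular representation of the full lattice algebra the relative mode occurs with infinite multiplicity, so the criterion ``weakly null sequences go to norm null'' applied literally on the whole representation space would only detect compactness in the Schr\"odinger factor; your explicit identification of $\cM$ with the bounded operators on that factor (tensored with the identity on a multiplicity space) and of $\cC(\lfrl)$ with the corresponding compacts makes this point precise, whereas the paper tacitly works in the multiplicity-free Schr\"odinger representation of $\cR(\lfrl)$ and transfers the conclusion by quasi-equivalence. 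The one place to keep your guard up is exactly the step you flag: the compactness and norm continuity must be generated by $s\neq s'$ through the Mehler kernel, since the individual factors $\alpha^{(0)}_s(V_{\lfrl})$ are neither compact nor norm continuous in $s$; as written, your handling of this is sound.
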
 
\begin{proof} \ 
In order to simplify the notation, we put 
$\biP \doteq \biP_{\lfrl}$, 
$\biQ \doteq \biQ_{\lfrl}$ 
and note that \
$\alpha_s^{(0)}( V_{\lfrl}(\biQ)) =
V(\cos(\varpi s) \, \biQ + \sin(\varpi s) / \varpi \, \biP)$ 
is contained in $\cR(\lfrl)$, c.f. 
\cite[Prop.\ 5.2]{BuGr2}. By von Neumann's uniqueness theorem, 
the automorphic action of \ $\alpha^{(0)}_s$ on 
$\cR(\lfrl)$ is implemented in any regular
representation of this algebra by unitary operators 
$U_0(s)$ which depend continuously on $s \in \RR$ in the 
strong operator topology. Since the potential~$V$ is bounded,
the above integral is well defined in this topology. 

In order to see that the integral 
is an element of $\cC(\lfrl)$,  
we make use of the fact that the elements of $\Co(\RR^s)$
can be approximated in norm by Schwartz test functions and 
assume temporarily that the potential $V$ belongs 
to this subspace. We then study the (distributional) kernel of 
the operator function \
$s,s^\prime \mapsto  V_{\lfrl} \, U_0(s-s^\prime)
\,  V_{\lfrl} $
in the Schr\"odinger representation. Making use of Dirac's bra-ket 
notation, this kernel is given in position space by 
$$
\langle \bix |  V_{\lfrl} \, 
U_0(s-s^\prime)
\,  V_{\lfrl} | \biy \rangle
=  V(\bix) \, \langle \bix | U_0(s-s^\prime) | \biy \rangle V(\biy) \, ,
\quad \bix, \biy \in \RR^d \, .
$$
The Green's function of the harmonic oscillator, 
$\bix, \biy \mapsto \langle \bix | U_0(s-s^\prime) | \biy \rangle $,
is known to be continuous and bounded for fixed 
$(s-s^\prime) \not\in (\pi / \varpi) \, \ZZ$, \ cf. \cite{BaVeKh}. 
Hence for those
values the above kernel is square integrable 
in $\bix, \biy$ and therefore belongs to a 
Hilbert Schmidt operator. Whence, the operators 
$$ 
\alpha_{s^\prime}^{(0)}( V_{\lfrl}) \, 
\alpha_{s}^{(0)}( V_{\lfrl})
= U_0(s^\prime) \  V_{\lfrl} \, 
U_0(s - s^\prime) \,  V_{\lfrl}  \
U_0(-s) 
$$
are compact and consequently are elements of 
$\cC(\lfrl)$
if $(s^\prime - s) \not\in (\pi / \varpi) \, \ZZ$. 
Moreover, they are uniformly bounded for all $s^\prime,s \in \RR$. 

We proceed by considering  the double integral 
$$
\int_{t_1}^{t_2} \! ds^\prime \int_{t_1}^{t_2} ds \, 
\alpha_{s^\prime}^{(0)}( V_{\lfrl}) \, 
\alpha_{s}^{(0)}( V_{\lfrl}) \, .
$$
Disregarding the singular set $\{ (s^\prime, s) : 
(s^\prime -s) \in (\pi / \varpi) \, \ZZ \}$ which has 
measure zero, the integration extends over a strong-operator 
continuous and bounded function which has values in the 
compact operators. Thus if $\Phi_n$, $n \in \NN$, is any 
sequence of vectors which converges weakly to zero, one 
obtains by the dominated convergence theorem
\begin{align*}
& \lim_n \| \! \int_{t_1}^{t_2} \! ds \, 
\alpha_s^{(0)}( V_{\lfrl}) \, \Phi_n \|^2 \\
& = \lim_n \, 
\int_{t_1}^{t_2} \! ds^\prime \!\int_{t_1}^{t_2} \! ds \, 
\langle \Phi_n, \alpha_{s^\prime}^{(0)}( V_{\lfrl}) \, 
\alpha_{s}^{(0)}( V_{\lfrl}) \, \Phi_n \rangle = 0 \, . 
\end{align*}
Hence  $\int_{t_1}^{t_2} \! ds \, \alpha_s^{(0)}( V_{\lfrl})$
is compact and therefore an element of $\cC(\lfrl)$.
This proves the assertion for potentials $V \in \Co(\RR^d)$
which are test functions;
the statement then follows from the continuity of the integral
with regard to $V$ in the norm topology of $\Co(\RR^d)$. \qed 
\end{proof}

\medskip
While the algebras $\cC(\lfrl)$ are not contained
in the C*-algebra $\cK(\Z)$, their elements induce bounded 
derivations of it; note that 
$\cK(\Z)$ is not a simple algebra. The proof of this fact,  
entering into our construction of dynamics, is based on the 
following lemma. 

\begin{lemma} \label{l2.2}
Let $\Lambda \subset \Z$ and let 
$\, \lsml \in \Z \times \Z \, $
be nearest neighbors. If \ $\lap \in \Lambda$, 
$\lapp \not\in \Lambda$ one has 
$\cC(\lfrl) \, 
\cC(\Lambda), \ \cC(\Lambda) \, \cC(\lfrl) 
\subset \cC(\Lambda \cup \lapp)$.
Similarly, if $\lap, \lapp \in \Lambda$
one has $\cC(\lfrl) \,
\cC(\Lambda), \ \cC(\Lambda) \, \cC(\lfrl) 
\subset \cC(\Lambda)$. 
(Note that 
$\cC(\lfrl) =
\cC(\lapp \frown \lambda^{\prime})$.)
\end{lemma}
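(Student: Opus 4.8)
\emph{Sketch of proof.} I work in the Schr\"odinger representation (which is regular, hence faithful), where the operators $\biQ_\mu,\biP_\mu$ attached to a lattice point $\mu$ act in the standard way on $L^2(\RR^d)$, the algebra $\cC(\Lambda)$ is the algebra of compact operators on $\bigotimes_{\mu\in\Lambda}L^2(\RR^d)$ embedded into $\cR(\Z)$ by tensoring with the identity on the complementary sites, and $\cC(\lfrl)$ is the algebra of compact operators in the ``relative'' variables of the pair $\lsml$, sitting inside the bounded operators on the two--site space $L^2(\RR^d)\otimes L^2(\RR^d)$ carried by $\lap,\lapp$. By norm density it suffices to verify the inclusions on products $a\,b$, where $a=f(\biP_{\lfrl})\,g(\biQ_{\lfrl})$ is a generator of $\cC(\lfrl)$ and $b=\prod_{\mu\in\Lambda}f_\mu(\biP_\mu)\,g_\mu(\biQ_\mu)$ is a generator of $\cC(\Lambda)$, with $f,g,f_\mu,g_\mu\in\Co(\RR^d)$; the reversed inclusions $\cC(\Lambda)\,\cC(\lfrl)\subset\cdots$ then follow by passing to adjoints, all algebras in sight being $*$--closed. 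Since every element of $\cC(\lfrl)$ commutes with all operators attached to $\Lambda_0\doteq\Lambda\setminus\{\lap,\lapp\}$, the tensor factor $\cC(\Lambda_0)$ splits off on both sides, reducing the two assertions to the cases $\Lambda=\{\lap\}$ and $\Lambda=\{\lap,\lapp\}$, respectively.

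The case $\lap,\lapp\in\Lambda$ is immediate: after splitting off $\cC(\Lambda_0)$ one has to see $\cC(\lfrl)\,\cC(\{\lap,\lapp\})\subset\cC(\{\lap,\lapp\})$, which holds because $\cC(\{\lap,\lapp\})$ is the ideal of compact operators on the two--site Hilbert space while $\cC(\lfrl)$ consists of bounded operators on that same space.

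The substance of the lemma is the case $\lap\in\Lambda$, $\lapp\notin\Lambda$, where one must show $\cC(\lfrl)\,\cC(\lap)\subset\cC(\{\lap,\lapp\})$, i.e.\ that $a\,b$ is a compact operator on $L^2(\RR^d)\otimes L^2(\RR^d)$ for $a=f(\biP_{\lfrl})\,g(\biQ_{\lfrl})$ and $b=f_0(\biP_{\lap})\,g_0(\biQ_{\lap})$ --- and here neither factor is by itself compact on the two--site space, $a$ being compact only in the relative variables and $b$ only in the $\lap$--variables. I would first take $f,g,f_0,g_0$ to be Schwartz functions (the general case following from the bounds $\|f(\biP_{\lfrl})\|\le\|f\|_\infty$ and the like together with norm density of the Schwartz functions in $\Co(\RR^d)$), and compute the integral kernel of $a\,b$ on $L^2(\RR^d\times\RR^d)$ with position variables $x$ at $\lap$ and $y$ at $\lapp$. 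Up to normalization constants the kernels of $f(\biP_{\lfrl})$ and $f_0(\biP_{\lap})$ are $\tilde f(x-x')\,\delta\big((x+y)-(x'+y')\big)$ and $\tilde f_0(x-x')\,\delta(y-y')$, with $\tilde f,\tilde f_0$ the inverse Fourier transforms of $f,f_0$; composing these and inserting the multiplication operators $g(\biQ_{\lfrl})$ and $g_0(\biQ_{\lap})$, a short computation gives, up to a constant,
$$
(a\,b)\big((x,y),(x',y')\big)=\tilde f(y'-y)\;g(x+y-2y')\;\tilde f_0(x+y-y'-x')\;g_0(x')\,.
$$
A bijective linear change of the integration variables $x,y,x',y'$, with constant nonzero Jacobian, turns $\int|(a\,b)(\cdot,\cdot)|^2$ into the product $\|\tilde f\|_2^2\,\|g\|_2^2\,\|\tilde f_0\|_2^2\,\|g_0\|_2^2$, which is finite for Schwartz data; hence $a\,b$ is Hilbert--Schmidt, in particular compact. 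Being built only from operators at $\lap$ and $\lapp$, it lies in $\cC(\{\lap,\lapp\})$, and re-attaching $\cC(\Lambda_0)$ gives $\cC(\lfrl)\,\cC(\Lambda)\subset\cC(\Lambda\cup\lapp)$ because $\Lambda\cup\lapp=\{\lap,\lapp\}\cup\Lambda_0$.

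The main obstacle is precisely this last compactness statement: although $a$ and $b$ are each only ``half compact'' on the two--site space, the genuine coupling of the two sites through $\biQ_{\lfrl}=\biQ_{\lap}-\biQ_{\lapp}$ forces their product to be compact, and the explicit kernel is what makes this transparent --- consistently with the fact, already noted above, that $\cC(\lfrl)$ itself is not contained in $\cK(\Z)$.
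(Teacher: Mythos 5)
Your argument is correct, and the kernel computation checks out: the composed kernel $\tilde f(y'-y)\,g(x+y-2y')\,\tilde f_0(x+y-y'-x')\,g_0(x')$ is exactly what one gets from the kernels $\tilde f(x-x')\,\delta\bigl((x+y)-(x'+y')\bigr)$ and $\tilde f_0(x-x')\,\delta(y-y')$, and the linear substitution you invoke has determinant $\pm 1$ in each $\RR^d$ block, so the Hilbert--Schmidt bound $\|\tilde f\|_2\,\|g\|_2\,\|\tilde f_0\|_2\,\|g_0\|_2$ is right; the reduction to generators, the splitting off of $\cC(\Lambda\backslash\{\lap,\lapp\})$, the adjoint trick for the reversed order, and the easy case $\lap,\lapp\in\Lambda$ are all as they should be. However, you take a different route from the paper: the paper mentions the possibility of ``explicit computations in the Schr\"odinger representation'' but does not carry them out, and instead argues abstractly that the generators $f(\biP_\lfrl)g(\biQ_\lfrl)h(\biP_{\lap})k(\biQ_{\lap})$ lie in the principal ideal of $\cR(\lap\cup\lapp)$ generated by the $4d$-fold product of resolvents of the components of $\biP_\lfrl,\biQ_\lfrl,\biP_{\lap},\biQ_{\lap}$; since by \cite[Prop.~3.8]{BuGr2} this ideal is the intersection of the principal ideals of the individual resolvents and hence independent of the ordering, one may regroup the momentum and position resolvents, use \cite[Prop.~5.2]{BuGr2} to rewrite them as $f_2(\biP_{\lap},\biP_{\lapp})g_2(\biQ_{\lap},\biQ_{\lapp})$ with $f_2,g_2\in\Co(\RR^{2d})$, and identify the ideal with $\cC(\lap\cup\lapp)$. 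Your version is more elementary and self-contained, making the compactness mechanism (the genuine coupling of the two sites through $\biQ_\lfrl$) completely explicit at the cost of a Schwartz-function approximation and a representation-dependent kernel estimate; the paper's version buys independence from any kernel computation by leaning on the established ideal structure of the resolvent algebra, which is in the spirit of the rest of the article. Both arguments end by identifying compactness in the two-site Schr\"odinger representation with membership in $\cC(\lap\cup\lapp)$, which is legitimate since that representation is faithful and carries $\cC(\lap\cup\lapp)$ onto all compacts.
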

\begin{proof}
As to the first part of the statement, it suffices because of the  
tensor product structure of $\cC(\Lambda)$ to consider the simplest 
non-trivial case, where $\Lambda = \lap$.
Now the linear space 
$\cC(\lfrl) \, \cC(\lap)$ 
is generated by the operators 
$f(\biP_{\lfrl}) 
g(\biQ_{\lfrl}) \, 
h(\biP_{\lap}) k(\biQ_{\lap}) $,
where $f,g,h,k \in \Co(\RR^d)$. So we must show that these 
operators, lying in the algebra $\cR(\lap \cup \lapp)$, 
actually belong to its compact ideal. 

This task can be accomplished by explicit computations in the 
Schr\"odinger representation of the algebra 
$\cR(\lap \cup \lapp)$, where one has to 
show that the operators are compact. 
Alternatively, one arrives at this conclusion by 
noticing that the operators lie in the principle ideal 
which is generated by the $4d$-fold product of resolvents 
involving all components of the operators 
$\biP_{\lfrl}, 
\biQ_{\lfrl},  
\biP_{\lap}, \biQ_{\lap}$ in 
the given order, cf. \cite[Prop.~3.8]{BuGr2}.
This ideal, being equal to the intersection of 
the principle ideals formed by the individual resolvents 
\cite[Prop.~3.8]{BuGr2}, 
does not depend on this order, so 
it coincides with the principle ideal generated by 
the $4d$-fold product of resolvents involving the 
components of the operators 
$\biP_{\lfrl}, 
\biP_{\lap},
\biQ_{\lfrl},  
\biQ_{\lap}$.  
Since the product of the resolvents of the 
components of $\biP_{\lfrl}, 
\biP_{\lap}$ is of the form $f_2(\biP_{\lap}, \biP_{\lapp})$
with $f_2 \in \Co(\RR^d) \otimes \Co(\RR^d)$, 
and similarly for $\biQ_{\lfrl}, 
\biQ_{\lap}$, cf. \cite[Prop.~5.2]{BuGr2}, 
this ideal is generated by 
the operators $f_2(\biP_{\lap}, \biP_{\lapp})
g_2(\biQ_{\lap}, \biQ_{\lapp})$,
where $f_2, g_2 \in \Co(\RR^d) \otimes \Co(\RR^d) = \Co(\RR^{2d})$.
It therefore coincides with the compact ideal 
$\cC(\lap \cup \lapp)$,  
which consequently contains 
$\cC(\lfrl) \cC(\lap)$.
Similarly, one obtains  
$\cC(\lap) \, \cC(\lfrl)
\subset \cC(\lap \cup \lapp)$, 
completing the proof of the first part of the statement.
The second part immediately follows from the fact that 
$\cC(\lfrl) \subset \cR(\lap \cup \lapp)$
and  that the subalgebra $\cC(\lap \cup \lapp) \subset 
\cR(\lap \cup \lapp)$ is an ideal.  \qed
\end{proof}

\section{Dynamics}
\setcounter{equation}{0}

In this section we establish the existence  
of one-parameter groups of automorphisms of the resolvent algebra 
(dynamics) which describe nearest neighbor interactions. These results extend
those obtained in \cite{BuGr2} for lattice systems in $d = 1$ dimension. 
We could partially rely on results established 
subsequently in \cite{NaRaSchSi,NaSchSiStZa} 
and references quoted there; but we need some more detailed information 
and therefore present here a different, simpler argument. 

As has been explained in the preceding section, we must show that 
for any finite $\Lambda \subset \Z$ there exist cocycles 
$\gamma^\Lambda_\RR$ which act as automorphisms on the resolvent 
algebra $\cR(\Z)$ and satisfy 
\begin{equation} \label{r3.1} 
\alpha_s^{(0)} \gamma^\Lambda_t \alpha_{-s}^{(0)} \, \gamma^\Lambda_s
= \gamma^\Lambda_{s + t} \quad \text{for} \quad s,t \in \RR 
\end{equation}
with initial condition given by the derivative (on a suitable domain)
\begin{equation} \label{r3.2} 
\partial_t \gamma^\Lambda_t(\, \bdot \,) \, |_{t = 0} = 
i \, [ \, \bdot \, , V_\Lambda]
\, , \quad \text{where} \quad V_\Lambda \doteq 
\sum_{\lsml \in \Lambda \times \Lambda} V_\lfrl \, .
\end{equation}
Here $\alpha^{(0)}_\RR$ denotes the harmonic dynamics acting 
at each lattice site and the dot stands for any element 
of the resolvent algebra. We will exhibit solutions to this problem
by proceeding to its integrated version,  
\begin{equation}
\gamma_t^\Lambda(\, \bdot \,) = \iota(\, \bdot \,) \, + \, i \! \int_0^t \! ds \,
[\gamma_s^\Lambda (\, \bdot \,) ,
V_\Lambda(s)] \, , \quad \text{where} \quad V_\Lambda(s) \doteq 
\alpha_s^{(0)}(V_\Lambda) 
\end{equation}
and $\iota$ denotes the identity map. These solutions are 
obtained by iteration, giving the Dyson series 
\begin{equation} \label{r3.4}
\gamma_t^\Lambda(\, \bdot \,) =
\iota(\, \bdot \,) \, + \, \sum_{n = 1}^\infty 
i^n \int_0^t \! ds_n \int_0^{s_{n}} \! ds_{n-1}
\dots \int_0^{s_{2}} \! ds_1 \, 
[ \dots [ \ \bdot \ , V_\Lambda(s_1) \, ],  \dots ,V_\Lambda(s_n) \, ] \, .
\end{equation}
In this form the cocycles $\gamma^\Lambda_\RR$ 
acquire some precise mathematical meaning:
The functions $s \mapsto V_\Lambda(s)$ are continuous in the strong
operator topology of the chosen (regular) representation, so the
integrals are well defined in this topology. Moreover, since 
$\| V_\Lambda(s) \| = \| V_\Lambda \| < \infty$, the series is 
absolutely convergent in norm, so the cocycles $\gamma^\Lambda_t$ map the 
elements of the resolvent
algebra into bounded operators. The more difficult part 
is the demonstration that the images of the resolvent algebra 
under this action are actually contained in the resolvent algebra
itself and depend pointwise norm 
continuously on $t \in \RR$. Since the underlying
representation is faithful, one then has shown that these 
properties are representation independent.

To accomplish this goal we note that it suffices to establish these 
properties for the individual terms appearing in the Dyson series
because of its convergence properties. Moreover, these 
terms can be split further into finite sums of linear maps 
$\Ms_n(t)$, $t \in \RR$, 
mapping the resolvent algebra $\cR(\Z)$ 
into bounded operators. They are given by 
\begin{align} \label{r3.5}
& \Ms_n(t)(\, \bdot \, ) \nonumber \\ 
& \doteq
\int_0^t \! ds_n \int_0^{s_n} \! ds_{n-1}
\dots \int_0^{s_2} \! ds_1 \, 
[ \dots [ \ \bdot \ , V_{\lap_1 \frown \lapp_1}(s_1) \, ],  
\dots ,V_{\lap_n \frown \lapp_n}(s_n) \, ] \, ,
\end{align}
where 
$\mbox{\boldmath $\lap \! \! \smile \! \! \lapp$} 
\doteq \lap_1 \smile \lapp_1 , \dots , \lap_n \smile \lapp_n \in 
\Lambda \times \Lambda$ are $n$ pairs of nearest neighbors. 
In the subsequent lemma we collect some pertinent properties of 
these maps.

\begin{lemma} \label{l3.1}
Let $\Lambda \subset \Z$ and let 
$\mbox{\boldmath $\lap \! \! \smile \! \! \lapp$} \doteq 
\lap_1 \smile \lapp_1 , \dots , \lap_n \smile \lapp_n \in 
\Lambda \times \Lambda$, $n \in \NN$, 
be any collection of nearest neighbors
(which need not be different from each other).  
The maps $\Ms_n(t)$, $t \in \RR$, defined in 
\eqref{r3.5}, have the properties
\begin{itemize}
\item[(i)]  \ $\| \Ms_n(t) - \Ms_n(s) \| \leq  2^n \, \| V \|^n \, 
|t^n - s^n| / n ! \, $ if $s,t \geq 0$ or $s,t \leq 0$, and $\Ms_n(0) = 0$.  
\item[(ii)] \ $\Ms_n(t)$ maps the resolvent algebra $\cR(\Z)$ into itself
\item[(iii)] \ Given $\Lambda_0 \subset \Z$, there is a 
finite $\Lambda_n \supseteq \Lambda_0$, depending on the collection of 
points $\lap_1 \smile \lapp_1 , \dots , \lap_n \smile \lapp_n \in 
\Lambda \times \Lambda$,  such that 
$\Ms_n(t)(\cC(\Lambda_0)) \subseteq \cC(\Lambda_n)$. 
\end{itemize}
\end{lemma}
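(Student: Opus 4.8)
The plan is to prove the three statements in order, since each is essentially independent and relies on the structural facts already established.

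For \textbf{(i)}, the strategy is a direct estimate on the iterated integral in \eqref{r3.5}. Since $\alpha^{(0)}_s$ is an automorphism, $\| V_{\lap_j \frown \lapp_j}(s_j) \| = \| V_{\lap_j \frown \lapp_j} \| = \| V \|$, and each commutator $[\,\bdot\,, V_{\lap_j \frown \lapp_j}(s_j)]$ has operator norm at most $2 \| V \| \, \| \bdot \|$. Iterating $n$ times and integrating over the simplex $0 \le s_1 \le \dots \le s_n \le t$ (of volume $t^n / n!$) gives the bound $2^n \| V \|^n t^n / n!$ on $\| \Ms_n(t) \|$; the difference $\| \Ms_n(t) - \Ms_n(s) \|$ is controlled the same way by the volume of the symmetric difference of the two simplices, which for $s,t$ of the same sign is $|t^n - s^n|/n!$. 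The vanishing $\Ms_n(0) = 0$ is immediate since the domain of integration degenerates. This step is routine and I would present only the key inequalities.

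For \textbf{(ii)} and \textbf{(iii)}, the plan is to induct on $n$ and to reduce everything to the two technical lemmas. For $n = 1$, $\Ms_1(t)(\,\bdot\,) = \int_0^t ds_1\, [\,\bdot\,, V_{\lap_1 \frown \lapp_1}(s_1)]$; by Lemma~\ref{l2.1}, $\int_0^t ds_1\, V_{\lap_1 \frown \lapp_1}(s_1) \in \cC(\lap_1 \frown \lapp_1) \subset \cR(\lap_1 \cup \lapp_1) \subset \cR(\Z)$, so $\Ms_1(t)$ maps $\cR(\Z)$ into $\cR(\Z)$, giving (ii) for $n=1$. For (iii) with $n=1$: if $a \in \cC(\Lambda_0)$, then $[a, \int_0^t ds_1\, V_{\lap_1 \frown \lapp_1}(s_1)]$ is a difference of products of an element of $\cC(\Lambda_0)$ with an element of $\cC(\lap_1 \frown \lapp_1)$, and Lemma~\ref{l2.2} places each such product in $\cC(\Lambda_1)$ where $\Lambda_1 = \Lambda_0 \cup \{\lap_1, \lapp_1\}$ (this covers all cases: $\lap_1, \lapp_1 \in \Lambda_0$, exactly one in $\Lambda_0$, or — after enlarging $\Lambda_0$ harmlessly — neither, since one may first absorb $\lap_1$ by multiplying by a unit, though more directly one just takes $\Lambda_1 \supseteq \Lambda_0 \cup \{\lap_1,\lapp_1\}$ and uses that $\cC(\lap_1 \frown \lapp_1) \subset \cR(\lap_1 \cup \lapp_1)$ together with the ideal property). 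One subtlety to address carefully: the integral $\int_0^t ds_1\, V_{\lap_1 \frown \lapp_1}(s_1)$ lies in $\cC(\lap_1 \frown \lapp_1)$ by Lemma~\ref{l2.1}, and the product of an element of $\cC(\Lambda_0)$ with this integral is again obtained as a strong limit of products with the integrands, so membership in $\cC(\Lambda_1)$ — a norm-closed (indeed compact) algebra — follows once one knows the bilinear map is norm-continuous, which it is since $\cC(\Lambda_0)$ is an ideal and multiplication is norm-continuous; alternatively one first checks the claim for $V$ a test function where the integrand is genuinely compact-operator-valued and continuous, then passes to the norm limit in $V$ as in the proof of Lemma~\ref{l2.1}.

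For the inductive step, write $\Ms_n(t)(\,\bdot\,) = \int_0^t ds_n\, [\Ms_{n-1}(s_n)(\,\bdot\,), V_{\lap_n \frown \lapp_n}(s_n)]$. By the induction hypothesis $\Ms_{n-1}(s_n)$ maps $\cR(\Z)$ into $\cR(\Z)$, and $V_{\lap_n \frown \lapp_n}(s_n) \in \cR(\Z)$, so the commutator lies in $\cR(\Z)$; the $ds_n$-integral stays in $\cR(\Z)$ because it is a norm limit of Riemann sums (the integrand is strong-operator continuous and norm-bounded, and — after reducing to test-function $V$ and using Lemma~\ref{l2.1}-type compactness, then the norm limit in $V$ — the integral can be taken in norm on the relevant compact subalgebra), proving (ii). For (iii): given $\Lambda_0$, the induction hypothesis supplies a finite $\Lambda_{n-1} \supseteq \Lambda_0$ with $\Ms_{n-1}(s_n)(\cC(\Lambda_0)) \subseteq \cC(\Lambda_{n-1})$; then $[\cC(\Lambda_{n-1}), V_{\lap_n \frown \lapp_n}(s_n)]$ decomposes into products $\cC(\Lambda_{n-1})\, \cC(\lap_n \frown \lapp_n)$ and $\cC(\lap_n \frown \lapp_n)\, \cC(\Lambda_{n-1})$, each contained in $\cC(\Lambda_n)$ with $\Lambda_n \doteq \Lambda_{n-1} \cup \{\lap_n, \lapp_n\}$ by Lemma~\ref{l2.2}, and the $ds_n$-integral keeps us inside the norm-closed algebra $\cC(\Lambda_n)$. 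Hence $\Ms_n(t)(\cC(\Lambda_0)) \subseteq \cC(\Lambda_n)$, and $\Lambda_n$ is finite and depends only on $\Lambda_0$ and the $2n$ points $\lap_1, \lapp_1, \dots, \lap_n, \lapp_n$.

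The main obstacle I anticipate is the careful justification that the strong-operator integrals land in the relevant \emph{norm-closed} subalgebras rather than merely in the ambient algebra of bounded operators: the integrand is continuous only in the strong topology, so one cannot naively approximate by norm-Riemann sums. The clean way around this, exactly as in the proof of Lemma~\ref{l2.1}, is to first treat $V$ a Schwartz test function, where the relevant operator products become genuinely Hilbert--Schmidt (hence compact) and one can exploit weak-to-strong continuity on bounded sets of compacts, and then to pass to the general $V \in \Co(\RR^d)$ by norm-approximation, using that $\Ms_n(t)$ depends norm-continuously on $V$ by the bound in (i). Everything else — the commutator-norm estimates, the simplex volume computation, the bookkeeping of which lattice points enter $\Lambda_n$ — is mechanical.
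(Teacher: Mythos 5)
Your part (i) and the base case $n=1$ follow the paper's argument, but the inductive step for (ii) and (iii) has a genuine gap, in two places. First, in (iii) you claim that, pointwise in $s_n$, the commutator $[\cC(\Lambda_{n-1}), V_{\lap_n\frown\lapp_n}(s_n)]$ decomposes into products $\cC(\Lambda_{n-1})\,\cC(\lap_n\frown\lapp_n)$ to which Lemma~\ref{l2.2} applies. But $V_{\lap_n\frown\lapp_n}(s_n)=\alpha^{(0)}_{s_n}(V_{\lap_n\frown\lapp_n})$ is only an element of $\cR(\lap_n\frown\lapp_n)$; it does not lie in the compact ideal $\cC(\lap_n\frown\lapp_n)$ --- only its integrals over time intervals do, which is precisely the content of Lemma~\ref{l2.1}. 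Consequently, if $\lapp_n\notin\Lambda_{n-1}$, the pointwise product $\Ms_{n-1}(s_n)(C)\,V_{\lap_n\frown\lapp_n}(s_n)$ carries no compactness at the new site $\lapp_n$ and need not lie in $\cC(\Lambda_n)$; Lemma~\ref{l2.2} cannot be invoked before the time integration has been performed. Second --- and you notice this yourself --- the integrand is only strong-operator continuous, so even where the pointwise statement is correct (as in (ii), where the commutator does lie pointwise in $\cR(\Z)$) you cannot conclude that the strong-operator integral lies in the norm-closed subalgebra; and your proposed repair (reduce to test-function $V$ and argue ``as in Lemma~\ref{l2.1}'' via Hilbert--Schmidt kernels) does not carry over: for (ii) the mechanism of Lemma~\ref{l2.1} establishes compactness, but $\Ms_n(t)(R)$ for a general $R\in\cR(\Z)$ is not compact, so that route cannot show membership in $\cR(\Z)$ at all; and for (iii) the relevant products $[\Ms_{n-1}(s')(C),V(s')]^*\,[\Ms_{n-1}(s)(C),V(s)]$ contain $\Ms_{n-1}$-factors sandwiched between the potentials, so the kernel computation of Lemma~\ref{l2.1} does not apply as it stands.

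The missing idea --- and the paper's actual argument --- is to exploit the norm continuity in $s$ of $s\mapsto \Ms_n(s)(R)$, which you already have from part (i), to freeze the outer factor: approximate $\Ms_{n+1}(t)(R)=\int_0^t ds\,[\Ms_n(s)(R),V_{\lap_{n+1}\frown\lapp_{n+1}}(s)]$ by the sums $S_N(R)=\sum_k \big[\Ms_n(t_k)(R),\int_{t_k}^{t_{k+1}}ds\,V_{\lap_{n+1}\frown\lapp_{n+1}}(s)\big]$ over a partition $\{t_k\}$ of $[0,t]$. Each inner integral lies in $\cC(\lap_{n+1}\frown\lapp_{n+1})$ by Lemma~\ref{l2.1}, so each term of $S_N(R)$ lies in $\cR(\Z)$ (giving (ii)) and, via Lemma~\ref{l2.2}, in $\cC(\Lambda_{n+1})$ for a suitable finite $\Lambda_{n+1}$ when $R\in\cC(\Lambda_0)$ (giving (iii)); the estimate from part (i) shows $\|\Ms_{n+1}(t)(R)-S_N(R)\|\to 0$ as the mesh shrinks, and norm-closedness of $\cR(\Z)$ and of $\cC(\Lambda_{n+1})$ then closes the induction. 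Without this freezing/partition step your induction does not go through.
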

\begin{proof}
(i) Let $R \in \cR(\Z)$ and let $\Phi$ be any vector 
in the underlying representation space of the
resolvent algebra. Then one obtains for $t  \geq s \geq 0$ 
\begin{align*}
& \|( \Ms_n(t)(R) -  \Ms_n(s)(R) ) \, \Phi \|  \\
& = \| \int_s^t \! ds_n \int_0^{s_n} \! ds_{n-1}
\dots \int_0^{s_2} \! ds_1 \, 
[ \dots [R , V_{\lap_1 \frown \lapp_1}(s_1) \, ],  
\dots ,V_{\lap_n \frown \lapp_n}(s_n) \, ] \, \Phi \, \| \\
& \leq \int_s^{t} \! ds_n \int_0^{s_n} \! ds_{n-1} 
\dots \int_0^{s_2} \! ds_1 \, \|
[ \dots [R , V_{\lap_1 \frown \lapp_1}(s_1) \, ],  
\dots ,V_{\lap_n \frown \lapp_n}(s_n) \, ] \, \Phi \, \| \\[1mm]
& \leq 2^n \, \| V \|^n \| R \| \, \| \Phi \| \,
(t^n - s^n)/ n! \, ,
\end{align*}
where, in the last step, 
we estimated the norm of the $n$-fold commutator, making 
use of the fact that $\| V_{\lfrl}(s) \| = \| V \|$
does neither depend on time nor on the chosen 
family of neighboring points. 
The first part of the statement then follows for the particular choice of $t,s$
and the other cases are treated in a similar manner. The
equality $\Ms_n(0) = 0$ is immediate from relation \eqref{r3.5}.

(ii) The second statement is proved by induction in $n \in \NN$,
so let $R \in \cR(\Z)$. For $n = 1$ we have 
$\Ms_1(t)(R) = \int_0^t \! ds \, [R, V_{\lap_1 \frown \lapp_1}(s)]$;
this is an element of $\cR(\Z)$ since 
$ \int_0^t \! ds \, V_{\lap_1 \frown \lapp_1}(s) \in \cC(\lap_1 \frown \lapp_1)
\subset \cR(\Z)$ according to Lemma~\ref{l2.1}.
For the induction step from $n$ to $n + 1$ we notice that 
$$
\Ms_{n+1}(t)(R) =  
\int_0^t \! ds \, [M_{n}^\Lambda(s)(R), V_{\lap_{n+1} \frown \lapp_{n+1}}(s)]
$$ 
and the function $s \mapsto M_{n}^\Lambda(s)(R) \in \cR(\Z)$ is continuous in 
norm, as has been shown in part (i). We approximate the 
integral by sums of the form
$$
S_N(R) \doteq \sum_{k=0}^N \int_{t_k}^{t_{k + 1}} \! ds \, [\Ms_n(t_k)(R),
V_{\lap_{n+1} \frown \lapp_{n+1}}(s)] \, ,
$$
where all distances $| t_k - t_{k+1} |$, $k = 0, \dots, N$, tend to zero
in the limit of large $N$. Applying Lemma \ref{l2.1} again, one sees
that each term in these sums is an element of the resolvent algebra, hence 
$S_N(R) \in \cR(\Z)$. 
An estimate as in part (i) yields
$$
\| \Ms_{n+1}(t)(R) - S_N(R) \|
\leq 2^{n+1} \| V \|^{n+1} \, 
\sum_{k=0}^N \int_{t_k}^{t_{k + 1}} \! ds \, 
| t_k^n - s^n|/ n! \ ,
$$
showing that $ \Ms_{n+1}(t)(R)$ can be approximated in norm by the 
sums $S_N(R)$. Since $\cR(\Z)$ is closed in this topology, 
the second statement follows.

(iii) For the third statement we proceed as in part (ii)
but need to have a closer look at the localization properties  of
the operators. We make use again of an induction argument, so 
let $\Co \in \cC(\Lambda_0)$. For $n = 1$ we get 
$$
\Ms_1(t)(\Co) = \int_0^t \! ds \, [\Co, V_{\lap_1 \frown \lapp_1}(s)]
=  [\Co, \int_0^t \! ds \, V_{\lap_1 \frown \lapp_1}(s)  ] \, ,
$$
where $\int_0^t \! ds \, V_{\lap_1 \frown \lapp_1}(s) \in 
\cC(\lap_1 \frown \lapp_1)$ according to Lemma \ref{l2.1}. 
If both points $\lap_1, \lapp_1 \not\in \Lambda_0$, then the 
commutator vanishes because of the commutativity of operators at
different lattice points. If $\lap_1 \in \Lambda_0$, 
$\lapp_1 \not\in \Lambda_0$, then we have 
$$
[\Co, \int_0^t \! ds \, V_{\lap_1 \frown \lapp_1}(s)  ] \in  
\big( \cC(\Lambda_0) \cC(\lap_1 \frown \lapp_1) - \cC(\lap_1 \frown \lapp_1) 
\cC(\Lambda_0) \big) \subset \cC(\Lambda_0 \cup \lapp_1) \, ,
$$ 
where the second inclusion follows from Lemma \ref{l2.2}; in a 
similar manner, one treats the case  $\lapp_1 \in \Lambda_0$, 
$\lap_1 \not\in \Lambda_0$. Finally, if $\lap_1, \lapp_1 \in \Lambda_0$,
then the commutator is an element of $\cC(\Lambda_0)$ according to 
Lemma \ref{l2.2}, completing the proof of the initial step of the induction. 
Next, let $\Ms_n(t)(\Co) \in \cC(\Lambda_n)$, where 
$\Lambda_n \supseteq \Lambda_0$ is some finite set. 
For the induction step from $n$ to $n +1$ we approximate 
the integral $\Ms_{n+1}(t)(\Co) = \int_0^t \! ds \, 
[\Ms_n(s)(\Co), V_{\lap_{n+1} \frown \lapp_{n+1}}(s) \, ]$
by sums which converge in the norm topology, cf.\ the preceding part (ii).  
For the terms in these sums we have the inclusion
\begin{align*}
& [\Ms_n(t_k)(\Co),
\int_{t_k}^{t_{k + 1}} \! \! ds \, V_{\lap_{n+1} \frown \lapp_{n+1}}(s)] \\
& \in \ \big(\cC(\Lambda_n) \, \cC(\lap_{n+1} \frown \lapp_{n+1})   
-   \cC(\lap_{n+1} \frown \lapp_{n+1}) \, \cC(\Lambda_n) \big)
\end{align*}
if $\lap_{n+1} \in \Lambda_n$ or $\lapp_{n+1} \in \Lambda_n$; in all other
cases the commutator vanishes. By another application of Lemma \ref{l2.2}
we conclude that these operators are elements of $\cC(\Lambda_{n+1})$,
where $\Lambda_{n+1} \supseteq \Lambda_n$. Since the latter algebra is 
closed in the norm topology, this completes the proof of the lemma. \qed
\end{proof}

Having control on the individual terms appearing in the Dyson series, 
we now need to study their behavior in the thermodynamic
limit $\Lambda \nearrow \Z$. To this end we combine them into 
$n$-th order contributions to the Dyson series, $n \in \NN$, 
\begin{equation} \label{r3.6}
D_n^\Lambda(t) 
\doteq 
\sum_{\mbox{\tiny \boldmath $\lap \! \! \smile \! \! \lapp$} =
\lap_1 \smile \lapp_1, \dots , \lap_n \smile \lapp_n 
\in \Lambda \times \Lambda }  \, 
\Ms_n(t)  \, .
\end{equation}
We then have the following result for these maps.

\begin{lemma} \label{l3.2}
Let $\Lambda_0 \subset \Z$ be a finite set and let $n \in \NN$. 
There exists a finite set $\Lambda_n \supset \Lambda_0$ such that
the maps $D_n^\Lambda(t)$, $t \in \RR$, defined above, satisfy 
\begin{itemize}
\item[(i)] the restricted maps $D_n^\Lambda(t) \upharpoonright \cR(\Lambda_0)$
do not depend on $\Lambda$ if   $\Lambda \supseteq \Lambda_n$
\item[(ii)] $\| (D_n^\Lambda(t) - D_n^\Lambda(s)) 
\upharpoonright \cR(\Lambda_0) \| 
\leq 2^{(d + 2) n} \, L_0 \cdots (L_0 + n -1) \,  
\| V \|^n \, |t^n - s^n|  / n! \, $ for $s,t \geq 0$ or 
$s,t \leq 0$, where 
$L_0$ is the number of points in $\Lambda_0$.  
This bound is independent of $\Lambda$.
\end{itemize}
\end{lemma}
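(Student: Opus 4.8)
The plan is to single out, among the $n$-tuples of nearest-neighbor pairs $\mbox{\boldmath $\lap \! \! \smile \! \! \lapp$} = \lap_1 \smile \lapp_1, \dots, \lap_n \smile \lapp_n$ in $\Z \times \Z$, exactly those whose contribution $\Ms_n(t)$ acts non-trivially on $\cR(\Lambda_0)$, and then to control both their location and their number. I would fix once and for all the finite set $\Lambda_n$ consisting of all lattice points that can be joined to $\Lambda_0$ by at most $n$ nearest-neighbor steps; it is finite and contains $\Lambda_0$ strictly. The key input is a locality observation: if $R \in \cR(\Lambda_0)$ and $S \subset \Z$ is disjoint from $\Lambda_0$, then $R$ commutes with every element of $\cR(S)$, since $\cR(\Lambda_0)$ is generated by resolvents of linear combinations of $\biP_\lambda, \biQ_\lambda$ with $\lambda \in \Lambda_0$ only, and operators at different lattice points commute. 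Because $V_{\lfrl}(s) = \alpha_s^{(0)}(V_{\lfrl})$ lies in $\cR(\{\lap, \lapp\})$, as noted in the proof of Lemma~\ref{l2.1}, one may work through the iterated commutator in \eqref{r3.5} from the innermost entry outwards: set $R_0 \doteq \Lambda_0$ and, as long as the pair $\lap_k \smile \lapp_k$ meets the finite set $R_{k-1}$, put $R_k \doteq R_{k-1} \cup \{\lap_k, \lapp_k\}$; then the $k$-fold iterated commutator $[\dots[R, V_{\lap_1 \frown \lapp_1}(s_1)], \dots, V_{\lap_k \frown \lapp_k}(s_k)]$ lies in $\cR(R_k)$ for all $s_1, \dots, s_k$, whereas if some pair $\lap_k \smile \lapp_k$ is disjoint from $R_{k-1}$ this commutator vanishes identically, so that $\Ms_n(t) \upharpoonright \cR(\Lambda_0)$ is the zero map. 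Call a tuple \emph{relevant} when every pair $\lap_k \smile \lapp_k$ meets $R_{k-1}$. Then $|R_k| \le |R_{k-1}| + 1$ and each point of $R_k$ lies within $k$ nearest-neighbor steps of $\Lambda_0$; in particular, for a relevant tuple all $n$ pairs are contained in $\Lambda_n$, while every other tuple contributes the zero map on $\cR(\Lambda_0)$.

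Part (i) then follows at once: for any finite $\Lambda$ with $\Lambda \supseteq \Lambda_n$, a relevant tuple has all its pairs in $\Lambda_n \subseteq \Lambda$, so the relevant tuples occurring in $D_n^\Lambda(t)$ coincide with those built inside $\Lambda_n$, a family that does not depend on $\Lambda$, and the remaining summands vanish when restricted to $\cR(\Lambda_0)$. Hence $D_n^\Lambda(t) \upharpoonright \cR(\Lambda_0) = \sum \Ms_n(t) \upharpoonright \cR(\Lambda_0)$, the sum extending over this fixed set of relevant tuples, which is exactly the assertion for the chosen $\Lambda_n$.

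For part (ii) I would combine the per-tuple bound of Lemma~\ref{l3.1}(i) with a count of the relevant tuples. Using $\| N \upharpoonright \cR(\Lambda_0) \| \le \| N \|$ and the triangle inequality, one gets, for $s, t \ge 0$ or $s, t \le 0$, $\| (D_n^\Lambda(t) - D_n^\Lambda(s)) \upharpoonright \cR(\Lambda_0) \| \le \sum \| \Ms_n(t) - \Ms_n(s) \| \le N_n \cdot 2^n \, \|V\|^n \, |t^n - s^n| / n!$, where the sum runs over the relevant tuples lying in $\Lambda$ and $N_n$ denotes their number, which is at most the number of all relevant tuples emanating from $\Lambda_0$ in $\Z$ — a quantity independent of $\Lambda$. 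To estimate $N_n$, note that in a $d$-dimensional cubic lattice the number of ordered nearest-neighbor pairs meeting a given set of $m$ points is at most $4dm$ (either of the two ordered slots can be one of the $m$ points, each of which has $2d$ neighbors). For a relevant tuple $|R_{k-1}| \le L_0 + k - 1$, so the $k$-th pair can be chosen in at most $4d(L_0 + k - 1)$ ways; hence $N_n \le \prod_{k=1}^{n} 4d(L_0 + k - 1) = (4d)^n \, L_0 (L_0 + 1) \cdots (L_0 + n - 1)$, and the bound becomes $(8d)^n \, L_0 \cdots (L_0 + n - 1) \, \|V\|^n \, |t^n - s^n| / n!$. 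Since $8d = 2^3 d \le 2^{d+2}$ for every $d \ge 1$ (equivalently $2d \le 2^d$), this is dominated by $2^{(d+2)n} \, L_0 \cdots (L_0 + n - 1) \, \|V\|^n \, |t^n - s^n| / n!$, and the bound is manifestly independent of $\Lambda$; the same $\Lambda_n$ as in part (i) therefore serves for both statements.

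I do not anticipate a serious obstacle: the argument is in essence the locality/connectivity bookkeeping that confines the non-vanishing tuples to $\Lambda_n$, together with the elementary combinatorial count, the factor $2^{(d+2)n}$ being merely a convenient overestimate of $(8d)^n$. The step requiring the most care is to justify, through the inside-out induction on the iterated commutator in \eqref{r3.5}, that an irrelevant tuple genuinely yields the zero map on $\cR(\Lambda_0)$, and that the region $R_k$ never leaves the $n$-step neighborhood of $\Lambda_0$, so that a single finite $\Lambda_n$ works uniformly for all relevant tuples and all $\Lambda \supseteq \Lambda_n$.
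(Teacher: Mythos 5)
Your argument is correct and follows essentially the same route as the paper: locality (operators at distinct lattice sites commute and $V_{\lfrl}(s)\in\cR(\lfrl)$) forces the non-vanishing tuples to be ``connected'' to $\Lambda_0$, hence contained in a fixed $n$-step neighborhood $\Lambda_n$, which gives (i); and (ii) follows by combining Lemma~\ref{l3.1}(i) with the same recursive count of contributing configurations (your $4d(L_0+k-1)$ extensions per step versus the paper's $2^{\,d+1}(L_0+k-1)$, both absorbed into $2^{(d+2)n}$ via $2d\le 2^{\,d}$).
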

\begin{proof}
(i) Let $R_0 \in \cR(\Lambda_0)$. 
Because of the commutativity of operators localized at different 
points of the lattice and the stability of the corresponding algebras
under the action of~$\alpha^{(0)}_\RR$, there contribute to
$\Ms_n(t)(R_0)$ only potentials $V_\lfrl$ which 
are attributed to points $\lap, \lapp$ having a lattice distance
of at most $n$ from some point in $\Lambda_0$. Phrased differently,
$\Ms_n(t)(R_0)$ is different from zero only if the
corresponding points {\boldmath $\lap \! \! \smile \! \! \lapp$}
\textit{all} lie in the region $\Lambda_n \times \Lambda_n$, 
where $\Lambda_n$ is obtained from $\Lambda_0$
by surrounding every point in $\Lambda_0$ with a cube of side length $2n$. 
Since for points 
$\mbox{\boldmath $\lap \! \! \smile \! \! \lapp$} 
\not\in \Lambda_n \times \Lambda_n$ the corresponding
terms in relation~\eqref{r3.6} vanish, this proves the first part of
the statement. 

(ii) For the proof of the second part we estimate the 
number $N_n(R_0)$ of terms which contribute to 
$D_n^\Lambda(t)(R_0)$, $t \in \RR$, in 
the sum \eqref{r3.6} and the number of points 
$L_n(R_0)$ in their respective localization regions; 
the norm of the individual terms has 
already been estimated in the preceding lemma. 
Once again,
we rely on induction in $n \in \NN$. For the initial case
$n=1$ we note that, either, one must have $\lap \in \Lambda_0$, 
and there are then $2^{\, d}$ different possibilities for its neighboring 
point $\lapp$, so there appear at most $2^{\, d} L_0$ terms of this 
type in the sum. Interchanging the role of $\lap$, $\lapp$ one gets 
the same number, giving the estimate $N_1(R_0) \leq 2^{\, d + 1} L_0$;
moreover, the number of points $L_1$ in the 
localization region of each term increases at most by 
one,  $L_1 \leq L_0 + 1$. 
As to the induction hypothesis,  we have 
$N_n(R_0) \leq 2^{(d+1)n} \, L_0 \cdots (L_0 + n - 1) $
and $L_n \leq (L_0 + n)$. For the step from $n$
to $n + 1$ we note that we must add to each of the $N_n(R_0)$
contributing configurations 
$\mbox{\boldmath $\lap \! \! \smile \! \! \lapp$}$
of size at most $L_n(R_0)$ another pair $\lap_{n+1} \smile \lapp_{n+1}$
such that the resulting configurations still contribute to the 
corresponding sum \eqref{r3.6}. 
For a given configuration this can be done in at most 
$2^{\, d + 1} L_n(R_0)$ different ways and since there are 
$N_n$ such configurations, we get 
$N_{n+1}(R_0) \leq 2^{\, d + 1} L_n(R_0) N_n(R_0)$ and
$L_{n+1}(R_0) \leq (L_{n}(R_0) +1)$. 
Plugging into these bounds the induction hypothesis, the given estimate
arises. Finally, applying Lemma \ref{l3.1}(i), one gets
\begin{align*}
\| (D_n^\Lambda(t) - D_n^\Lambda(s))(R_0) \| & \leq N_n(R_0)  \sup_{\tlsml}  
\| (\Ms_n(t) - \Ms_n(s))(R_0) \| \\
& \leq 2^{(d + 2) n} L_0 \cdots (L_0 + n -1) \, \| V \|^n \, 
|t^n - s^n| \| R_0 \| / n  \, ,
\end{align*}
completing the proof of the second statement. \qed
\end{proof}

Equipped with this information we can establish the following 
basic result concerning the cocycles.  

\begin{lemma} \label{l3.3}
Let $\Lambda \subset \Z$ and let 
$\gamma^\Lambda_t$, $t \in \RR$, be the map defined by the Dyson 
series~\eqref{r3.4}.
\begin{itemize}
\item[(i)]  The Dyson series for $\gamma^\Lambda_t$ converges absolutely
in norm, uniformly on compact subsets of \ $t\in\RR$.  
It defines automorphisms of the resolvent algebra $\cR(\Z)$ 
and satisfies the cocycle relation \eqref{r3.1}.
\item[(ii)] The function $t \mapsto \gamma^\Lambda_t$ is absolutely 
continuous, uniformly on compact sets of $t \in \RR$.
\item[(iii)] The restrictions 
$\gamma^\Lambda_t \upharpoonright \cK(\Z)$ are 
automorphisms  of $ \cK(\Z)$, $t \in \RR$.
\item[(iv)] The thermodynamic limit 
$\gamma_t \doteq \lim_{\Lambda \nearrow \Z} \gamma^\Lambda_t$  exists 
pointwise on $\cR(\Z)$ in the norm topology, uniformly on 
compact subsets of $t \in \RR$. The limits $\gamma_t$ 
are automorphisms of $\cR(\Z)$ which satisfy 
the cocycle relation \eqref{r3.1},  and $t \mapsto \gamma_t$
is pointwise norm continuous, $t \in \RR$. 
\end{itemize}
\end{lemma}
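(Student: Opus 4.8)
The plan is to obtain (i)--(iii) rather directly from Lemmas~\ref{l2.1}--\ref{l3.1}, reserving the real effort for the thermodynamic limit in (iv), where Lemma~\ref{l3.2} enters. For (i) I would first note that for finite $\Lambda$ the interaction $V_\Lambda = \sum_{\lsml \in \Lambda \times \Lambda} V_\lfrl$ is a bounded self-adjoint operator with $\|V_\Lambda(s)\| = \|V_\Lambda\|$ for all $s$, so the $n$-th term of the Dyson series \eqref{r3.4} is bounded in norm by $(2\|V_\Lambda\|)^n|t|^n/n!$; this gives absolute norm convergence, uniformly on compact $t$-sets, and by Lemma~\ref{l3.1}(ii) together with the norm closedness of $\cR(\Z)$ the map $\gamma^\Lambda_t$ carries $\cR(\Z)$ into itself. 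For the algebraic properties I would implement $\alpha^{(0)}_\RR$ by a strongly continuous unitary group (von~Neumann uniqueness, exactly as in the proof of Lemma~\ref{l2.1}) and solve $\partial_t \Gamma^\Lambda_t = i\,\Gamma^\Lambda_t V_\Lambda(t)$, $\Gamma^\Lambda_0 = 1$, for a unitary cocycle $\Gamma^\Lambda_\RR$; identifying $\gamma^\Lambda_t$ with $\mathrm{Ad}(\Gamma^\Lambda_t)$ shows it is a unital $*$-homomorphism satisfying \eqref{r3.1}, and putting $t=-s$ there yields $(\gamma^\Lambda_s)^{-1} = \alpha^{(0)}_s \gamma^\Lambda_{-s} \alpha^{(0)}_{-s}$, which maps $\cR(\Z)$ into itself (as $\alpha^{(0)}_{\pm s}$ and $\gamma^\Lambda_{-s}$ do), so $\gamma^\Lambda_t$ is an automorphism of $\cR(\Z)$.

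For (ii) I would sum the bound of Lemma~\ref{l3.1}(i) over the finitely many terms of each $D^\Lambda_n(t)$ and over $n$, using $|t^n - s^n| \le n\,T^{n-1}|t-s|$ on $[-T,T]$, to get a Lipschitz estimate $\|\gamma^\Lambda_t - \gamma^\Lambda_s\| \le C_{\Lambda,T}\,|t-s|$, hence absolute continuity uniform on compacts. For (iii) I would use $\gamma^\Lambda_t(1)=1$ and Lemma~\ref{l3.1}(iii) (plus norm closedness) to get $\gamma^\Lambda_t(\cC(\Lambda_0)) \subseteq \cC(\Lambda_n) \subseteq \cK(\Z)$ for every finite $\Lambda_0 \subset \Z$; since the unit and these $\cC(\Lambda_0)$ generate $\cK(\Z)$, the map $\gamma^\Lambda_t$ leaves $\cK(\Z)$ invariant, and likewise $\gamma^\Lambda_{-t}$. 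As the $\alpha^{(0)}_{\pm t}$ are automorphisms of $\cK(\Z)$ (each $\cC(\Lambda)$ being $\alpha^{(0)}$-stable), the identity $(\gamma^\Lambda_t)^{-1} = \alpha^{(0)}_t \gamma^\Lambda_{-t} \alpha^{(0)}_{-t}$ shows $\gamma^\Lambda_t \upharpoonright \cK(\Z)$ is onto, hence an automorphism of $\cK(\Z)$.

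For (iv) I would fix a finite $\Lambda_0$ with $L_0$ points and $R_0 \in \cR(\Lambda_0)$. Lemma~\ref{l3.2}(i) provides finite $\Lambda_n \supseteq \Lambda_0$ on which $D^\Lambda_n(t)\upharpoonright\cR(\Lambda_0)$ stabilizes to a $\Lambda$-independent map once $\Lambda \supseteq \Lambda_n$, while Lemma~\ref{l3.2}(ii) (with $s=0$, recalling $D^\Lambda_n(0)=0$) gives, uniformly in $\Lambda$,
\[
\| D^\Lambda_n(t)(R_0) \| \;\le\; \binom{L_0+n-1}{n}\big(2^{d+2}\|V\|\,|t|\big)^n\|R_0\| .
\]
Since $\sum_n \binom{L_0+n-1}{n}x^n = (1-x)^{-L_0}$ for $|x|<1$, these two facts combine -- finitely many low-order terms eventually constant, plus a tail that is uniformly summable in $\Lambda$ and in $|t|\le T$ whenever $T < t_0 \doteq 2^{-(d+2)}/\|V\|$ -- to show that $(\gamma^\Lambda_t(R_0))_\Lambda$ is norm-Cauchy, so $\gamma_t(R_0) = \lim_{\Lambda\nearrow\Z}\gamma^\Lambda_t(R_0)$ exists for $|t|<t_0$, uniformly on compact $t$-sets. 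Since the $\gamma^\Lambda_t$ are isometric, $\gamma_t$ is isometric on the dense subalgebra $\bigcup_{\Lambda_0}\cR(\Lambda_0)$ and extends to an isometric unital $*$-endomorphism of $\cR(\Z)$; an $\varepsilon/3$ argument then gives $\gamma^\Lambda_t(R)\to\gamma_t(R)$ in norm for every $R\in\cR(\Z)$ and $|t|<t_0$, and passing to the limit in $(\gamma^\Lambda_t)^{-1} = \alpha^{(0)}_t\gamma^\Lambda_{-t}\alpha^{(0)}_{-t}$ and in \eqref{r3.1} shows that, on this range, $\gamma_t$ is an automorphism of $\cR(\Z)$ obeying \eqref{r3.1}.

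I expect the main obstacle to be precisely that the bound above makes the limiting Dyson series summable only for $|t|<t_0$: for finite $\Lambda$ it converges for all $t$ (as $\|V_\Lambda\|<\infty$), but at large times the tail is not uniformly small in $\Lambda$, so the termwise passage to the limit breaks down. I would remove this restriction by a bootstrap on the cocycle structure: the set $I\subseteq\RR$ of those $t$ for which $\gamma^\Lambda_t$ converges pointwise in norm on $\cR(\Z)$, uniformly on compacts, contains $(-t_0,t_0)$, and it is closed under addition since for $s,t\in I$ and $R\in\cR(\Z)$
\[
\gamma^\Lambda_{s+t}(R) \;=\; \alpha^{(0)}_s\,\gamma^\Lambda_t\big(\alpha^{(0)}_{-s}\gamma^\Lambda_s(R)\big) \;\longrightarrow\; \alpha^{(0)}_s\,\gamma_t\big(\alpha^{(0)}_{-s}\gamma_s(R)\big),
\]
the convergence following from $\gamma^\Lambda_s(R)\to\gamma_s(R)$, the isometry (hence equicontinuity) of the $\gamma^\Lambda_t$, and the continuity of $\alpha^{(0)}_{\pm s}$; finitely many iterations give $I=\RR$, uniformity on compact $t$-intervals being preserved at each step. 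Hence $\gamma_t = \lim_{\Lambda\nearrow\Z}\gamma^\Lambda_t$ exists for all $t\in\RR$, is an automorphism of $\cR(\Z)$ satisfying \eqref{r3.1}, and $t\mapsto\gamma_t$ is pointwise norm continuous, being a uniform-on-compacts limit of the norm-continuous functions $t\mapsto\gamma^\Lambda_t(R)$ of part~(ii). The only point needing some care is the bookkeeping of uniform convergence through this iteration, which should be routine given the isometry of the $\gamma^\Lambda_t$ and the strong continuity of $\alpha^{(0)}$.
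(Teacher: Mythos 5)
Your proposal is correct and follows essentially the same route as the paper: the norm bounds from Lemma~\ref{l3.1} for (i)--(iii), unitary implementation of the cocycle to get the automorphism and cocycle properties, and for (iv) the combination of Lemma~\ref{l3.2}(i)/(ii) giving convergence for $|t|\,2^{d+2}\|V\|<1$ followed by iteration of the cocycle relation \eqref{r3.1} to reach all $t\in\RR$, exactly as in the paper. The only slip is the orientation of your auxiliary ODE: the unitaries implementing $\gamma^\Lambda_t$ (with initial derivative $i[\,\bdot\,,V_\Lambda]$) satisfy $\partial_t\Gamma^\Lambda_t=-i\,V_\Lambda(t)\,\Gamma^\Lambda_t$, i.e.\ $\Gamma^\Lambda_t=e^{itH^{(0)}_\Lambda}e^{-itH_\Lambda}$ as in the paper, whereas your equation $\partial_t\Gamma^\Lambda_t=i\,\Gamma^\Lambda_t V_\Lambda(t)$ is solved by $e^{itH_\Lambda}e^{-itH^{(0)}_\Lambda}$ and implements $(\gamma^\Lambda_t)^{-1}$; this is a harmless sign/ordering correction that does not affect the argument.
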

\begin{proof}
(i) According to relation \eqref{r3.4} we have 
$\gamma^\Lambda_t = \iota + \sum_{n=1}^\infty i^n D^\Lambda_n(t)$, $t \in \RR$,
where the terms $D^\Lambda_n(t)$ were defined in \eqref{r3.6}. 
In the sum in \eqref{r3.6} there appear at most 
$2^{n(d+1)}  L^n$ terms, where $L$ is the number of points in $\Lambda$.
Hence, putting $s=0$ in  Lemma \ref{l3.1}(i), we get
$$
\|   D^\Lambda_n(t) \| \leq 2^{n(d+1)}  L^n \, \sup_{\tlsml}
\| \Ms_n(t) \| \leq  2^{n(d+2)} \, L^n \, \| V \|^n \, |t|^n  / n! \, .
$$
Thus the series of these norms converges uniformly for compact subsets 
of $t \in \RR$. Since the individual terms in the Dyson series map 
the (norm-closed) 
resolvent algebra into itself, cf.\ Lemma \ref{l3.1}(ii),
it follows that $\gamma^\Lambda_t(\cR(\Z)) \subseteq \cR(\Z)$,
$t \in \RR$. That these maps actually are automorphisms of the
resolvent algebra which satisfy the cocycle relation \eqref{r3.1}
can be established by direct computation, based on 
relation \eqref{r3.4}. More conveniently, notice 
that in any regular (hence faithful) representation of $\cR(\Z)$ the 
automorphisms $\gamma^\Lambda_t$ are implemented by the 
adjoint action of unitaries 
$\Gamma^\Lambda_t = e^{it H^{(0)}_\Lambda} e^{-it H_\Lambda}$, $t \in \RR$,
where 
$$
H^{(0)}_\Lambda = 
(1/2) \sum_{\lambda \in \Lambda} (P_\lambda^2 + \varpi^2 Q_\Lambda^2) 
\quad \text{and} \quad
H_\Lambda = H^{(0)}_\Lambda + \sum_{\lsml \in \Lambda \times \Lambda}
V(Q_{\lap} - Q_{\lapp})
$$ 
are selfadjoint operators. Since 
$\gamma^\Lambda_t \doteq \text{Ad} \, \Gamma^\Lambda_t$, $t \in \RR$, 
satisfies the equations
\eqref{r3.1} and \eqref{r3.2} and can be expanded into the
Dyson series \eqref{r3.4}, this completes the proof of 
the first statement.

(ii) As in the preceding step one obtains with the help of Lemma 
\ref{l3.1}(i) the estimate for $s,t \geq 0$ or $s,t \leq 0$
\begin{align*}
\|   D^\Lambda_n(t) -  D^\Lambda_n(s) \| & \leq 2^{n(d+1)}  L^n 
\sup_{\tlsml} \| \Ms_n(t) - \Ms_n(s) \|  \\ 
& \leq  2^{n(d+2)}  L^n \, \| V \|^n \, |t^n - s^n| / n! \, .
\end{align*}
Since the first terms $\iota$ in the two Dyson series cancel, this  
yields 
$$
\| \gamma^\Lambda_t - \gamma^\Lambda_s \| \leq
2^{\, d+2} L \, \| V \| \, |t - s| \, e^{\, 2^{\, d + 2} L \| V \| \, |t|} \, 
\quad \text{for} \quad t \geq s \geq 0 \quad \text{or} \quad 
t \leq s \leq 0 \, ,
$$ 
proving absolute continuity, uniformly on compact sets of $\RR$. 

(iii) Given $\Lambda_0 \subset \Z$ and $n \in \NN$, 
Lemma \ref{l3.1}(iii) implies that there is some 
$\Lambda_n \supseteq \Lambda_0$ such that 
$\Ms_n(t)(\cC(\Lambda_0)) \subset \cC(\Lambda_n)$, $t \in \RR$. 
Hence, because of their continuity 
properties, the terms in the Dyson series map the 
norm closed subalgebra $\cK(\Z) \subset \cR(\Z)$
into itself. Since the Dyson series converges absolutely, the
assertion follows. 

(iv) Let $\Lambda_0 \subset \Z$ and let $R_0 \in \cR(\Lambda_0)$. 
In order to see that 
$\| \gamma^\Lambda_t(R_0) - \gamma^{\Lap}_t(R_0) \|$ tends to
zero for $\Lambda, \Lap \nearrow \Z$, we expand both 
cocycles into a Dyson series. According to Lemma~\ref{l3.2}(i) 
there exists for each $n \in \NN$
some region $\Lambda_n \supset \Lambda_0$ such that for 
$\Lambda, \Lap \supset \Lambda_n$ the first $n$
terms in the Dyson series for $\gamma^\Lambda_t(R_0)$ and
$\gamma^{\Lap}_t(R_0)$ coincide. Hence for such 
$\Lambda, \Lap$ one gets for sufficiently  small $|t|$
\begin{align*}
\| \gamma^\Lambda_t(R_0) - \gamma^{\Lap}_t(R_0) \|
& \leq \sum_{k = n + 1}^\infty \| D^\Lambda_k(t)(R_0) 
-  D^{\Lap}_k(t)(R_0) \| \\
& \leq \| R_0 \| \sum_{k = n + 1}^\infty (\| D^\Lambda_k (t)\|
+ \| D^{\Lap}_k(t) \| )  \\
& \leq 2 \, \| R_0 \| \sum_{k = n + 1}^\infty  
2^{(d + 2) k} \, L_0 \cdots (L_0 + k -1) \,  
\| V \|^k \, |t|^k  / k! \, ,
\end{align*}
where  $L_0$ is the number of points in $\Lambda_0$
and, in the last inequality, we made use of Lemma \ref{l3.2}(ii). 
For times $t$ satisfying $|t| \, 2^{d + 2} \| V \| < 1$ 
this upper bound tends to zero in the limit of large $n$, proving the
norm convergence of $\lim_{\Lambda \nearrow \Z} \gamma^\Lambda_t(R_0)$, uniformly for 
small times. Moreover, since the $\gamma^\Lambda_t$ are automorphisms
of $\cR(\Z)$, which is the C*-inductive limit of the algebras  
$\cR(\Lambda_0)$, $\Lambda_0 \subset \Z$,
this result extends to all elements of $\cR(\Z)$. Thus
the limits $\gamma_t \doteq \lim_{\Lambda \nearrow \Z} \gamma^\Lambda_t$
define automorphisms of $\cR(\Z)$ for small $t \in \RR$.

In order to show that the limit exists for arbitrary  $t \in \RR$, 
we make use of the cocycle relation \eqref{r3.1} for the 
approximating maps, giving for any $R \in \cR(\Z)$ and small
$s,t$ as above
\begin{align*}
\| (\alpha^{(0)}_s \gamma_t \, \alpha^{(0)}_{-s} \gamma_s
- \gamma^\Lambda_{s + t})(R) \|
& = \| (\alpha^{(0)}_s \gamma_t \, \alpha^{(0)}_{-s} \gamma_s -
\alpha^{(0)}_s \gamma^\Lambda_t \alpha^{(0)}_{-s} \gamma^\Lambda_s) (R) \| \\
& \leq \| (\gamma_t - \gamma^\Lambda_t)(\alpha^{(0)}_{-s} \gamma_s(R)) \|
+   \| (\gamma_s - \gamma^\Lambda_s)(R) \| \, .
\end{align*}
Hence $\gamma_{s + t} \doteq \lim_{\Lambda \nearrow \Z} \gamma^\Lambda_{s + t}$ 
exists pointwise on $\cR(\Z)$ in 
norm and coincides with $\alpha^{(0)}_s \gamma_t \, \alpha^{(0)}_{-s} \gamma_s$
for the restricted set of $s,t$. Repeating this procedure, one 
obtains convergence and the cocycle property 
\eqref{r3.1} on all of $\RR$. 

It remains to prove the pointwise norm continuity of $t \mapsto \gamma_t$.
Because of the cocycle property of $\gamma_t$, $t \in \RR$, it suffices to
do this for small $|t|$. Given $\Lambda_0 \subset \Z $
and any $R_0 \in \cR(\Lambda_0)$ we get for small
$s,t \geq 0$ or $s,t \leq 0$ by the same reasoning as above 
the estimate
\begin{align*}
 \| \gamma_t(R_0) - \gamma_s(R_0)  \| 
& \leq \lim_{\Lambda \nearrow \Z} 
\| \gamma^\Lambda_t(R_0) - \gamma^{\Lambda}_s(R_0) \| \\
& \leq  \limsup_{\Lambda \nearrow \Z}  
\sum_{k = 1}^\infty \| D^\Lambda_k(t)(R_0) 
-  D^{\Lambda}_k(s)(R_0) \| \\
& \leq 2 \, \| R_0 \| \sum_{k = 1}^\infty  
2^{(d + 2) k} \, L_0 \cdots (L_0 + k -1) \,  
\| V \|^k \, |t^k - s^k|  / k! \, .
\end{align*}
Since $\Lambda_0$ was arbitrary, the claimed continuity 
obtains on a norm dense subalgebra of $\cR(\Z)$. But 
$\gamma_t$, $t \in \RR$, are automorphisms, so the pointwise norm 
continuity is obtained on all of  $\cR(\Z)$ by standard arguments, 
completing the proof of the lemma. \qed
\end{proof}

Recalling that the dynamics for finite $\Lambda \subset \Z$
is given by $\alpha^\Lambda_t \doteq \alpha^{(0)}_t \gamma^\Lambda_{-t}$,
where $\gamma^\Lambda_t$ has been defined in \eqref{r3.4}, 
$t \in \RR$, we can now state the main result of this section.

\begin{proposition} \label{p3.4}
Let $V \in \Co(\RR^d)$ be any potential and let $\Lambda \subset \Z$.
There exists a corresponding group of automorphisms
$\alpha^\Lambda_\RR$ of $\cR(\Z)$, describing harmonic oscillations 
of the lattice 
system as well as nearest neighbor interactions within the region 
$\Lambda$ for the given potential. 
\begin{itemize}
\item[(i)] The restriction of $\alpha^\Lambda_\RR$ 
to $\cK(\Z) \subset \cR(\Z)$ defines automorphisms
of this subalgebra, and $t \mapsto \alpha^\Lambda_t$, $t \in \RR$, 
acts pointwise norm continuously on it, \ie
$(\cK(\Z),\alpha^\Lambda_\RR)$ is a C*-dynamical system.  
\item[(ii)] The thermodynamic limit $\alpha_t \doteq \lim_{\Lambda \nearrow \Z} 
 \alpha^\Lambda_t$ exists pointwise on $\cR(\Z)$ in the norm topology,
$t \in \RR$, and defines a group of automorphisms 
$\alpha_\RR$ of $\cR(\Z)$ with nearest neighbor interactions  
all over $\Z$. Its restriction $\alpha_\RR \upharpoonright \cC(\Z)$
leaves $\cC(\Z)$ invariant and acts pointwise norm continuously
on this algebra, \ie $(\cK(\Z), \alpha_\RR)$ is a C*-dynamical 
system, as well. 
\end{itemize} 
(Note that the algebra $\cK(\Z)$ is not norm-dense in $\cR(\Z)$; but 
it is dense in 
the strong operator topology in all regular representations, cf. 
\cite[Thm.~5.4]{BuGr2}.)
\end{proposition}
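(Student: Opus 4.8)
The plan is to assemble Proposition~\ref{p3.4} almost entirely from Lemma~\ref{l3.3}, since the hard analytic work has already been done there. Recall $\alpha^\Lambda_t \doteq \alpha^{(0)}_t \gamma^\Lambda_{-t}$. The harmonic dynamics $\alpha^{(0)}_\RR$ is implemented in any regular representation by the unitaries $e^{itH^{(0)}_\Lambda}$ and hence acts as automorphisms of $\cR(\Z)$; moreover each compact algebra $\cC(\Lambda)$ is stable under $\alpha^{(0)}_\RR$, as noted after the definition of the harmonic dynamics, so $\cK(\Z)$ is $\alpha^{(0)}_\RR$-stable as well. Combining this with Lemma~\ref{l3.3}(i),(iii), the composite $\alpha^\Lambda_t$ is an automorphism of $\cR(\Z)$ which restricts to an automorphism of $\cK(\Z)$. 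That $\alpha^\Lambda_\RR$ is a \emph{group} follows from the cocycle relation~\eqref{r3.1}: writing $\alpha^\Lambda_s \alpha^\Lambda_t = \alpha^{(0)}_s \gamma^\Lambda_{-s} \alpha^{(0)}_t \gamma^\Lambda_{-t} = \alpha^{(0)}_{s+t} (\alpha^{(0)}_{-t} \gamma^\Lambda_{-s} \alpha^{(0)}_t) \gamma^\Lambda_{-t}$, and~\eqref{r3.1} with $s \rightsquigarrow t$, $t \rightsquigarrow -s$ gives $\alpha^{(0)}_{-t}\gamma^\Lambda_{-s}\alpha^{(0)}_t \gamma^\Lambda_{-t} = \gamma^\Lambda_{-s-t}$, so $\alpha^\Lambda_s\alpha^\Lambda_t = \alpha^\Lambda_{s+t}$.

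For part~(i), I would argue pointwise norm continuity on $\cK(\Z)$ by a triangle-inequality split. Fix $C \in \cK(\Z)$; then
$\|\alpha^\Lambda_t(C) - \alpha^\Lambda_s(C)\| \le \|\alpha^{(0)}_t(\gamma^\Lambda_{-t}(C)) - \alpha^{(0)}_t(\gamma^\Lambda_{-s}(C))\| + \|\alpha^{(0)}_t(\gamma^\Lambda_{-s}(C)) - \alpha^{(0)}_s(\gamma^\Lambda_{-s}(C))\|$.
The first term equals $\|\gamma^\Lambda_{-t}(C) - \gamma^\Lambda_{-s}(C)\|$ since $\alpha^{(0)}_t$ is isometric, and this vanishes as $s\to t$ by the absolute continuity in Lemma~\ref{l3.3}(ii). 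For the second term one needs that $t \mapsto \alpha^{(0)}_t(D)$ is norm continuous for $D \in \cK(\Z)$: this holds because $(\cC(\Lambda),\alpha^{(0)}_\RR)$ is a C*-dynamical system — $\alpha^{(0)}_\RR$ restricts to a pointwise norm continuous action on each $\cC(\Lambda)$, being unitarily implemented on a Hilbert space on which the relevant spectral projections are compact — and $\cK(\Z)$ is the inductive limit of the $\cK(\Lambda)=\cC(\Lambda)+\CC 1$. (One should give a brief word on the norm continuity of $\alpha^{(0)}_\RR$ on $\cC(\lambda)$; this is standard, e.g. follows because $\alpha^{(0)}_\RR$ acts by a unitary one-parameter group whose restriction to the separable Hilbert space carrying $\cC(\lambda)$ has continuous spectral resolution, hence is inner-continuous on the compacts.) Then use $\gamma^\Lambda_{-s}(C) \to C$ as $s\to 0$ and uniform boundedness to pass the continuity through.

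Part~(ii) is then a matter of transferring the limit from $\gamma^\Lambda$ to $\alpha^\Lambda$. Since $\alpha^{(0)}_t$ is a fixed isometric automorphism of $\cR(\Z)$ independent of $\Lambda$, and $\gamma_{-t} = \lim_{\Lambda\nearrow\Z}\gamma^\Lambda_{-t}$ exists pointwise in norm by Lemma~\ref{l3.3}(iv), the limit $\alpha_t \doteq \lim_{\Lambda\nearrow\Z}\alpha^\Lambda_t = \alpha^{(0)}_t\gamma_{-t}$ exists pointwise in norm on $\cR(\Z)$, uniformly on compacts in $t$; it is an automorphism as a composite of automorphisms, and the group property follows from the cocycle relation for $\gamma_\RR$ exactly as above. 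For the invariance of $\cC(\Z)$ under $\alpha_\RR$ I would note that $\cC(\Z)$ is $\alpha^{(0)}_\RR$-invariant (each $\cC(\Lambda)$ is, and $\cC(\Z)$ is their inductive limit) and is $\gamma_\RR$-invariant: the terms of the Dyson series for $\gamma^\Lambda_t$ map $\cC(\Lambda_0)$ into some $\cC(\Lambda_n)\subset\cC(\Z)$ by Lemma~\ref{l3.1}(iii), and this property survives the norm-convergent sum and the thermodynamic limit, so $\gamma_t(\cC(\Z))\subseteq\cC(\Z)$; applying the same to $\gamma_{-t}$ gives equality. Pointwise norm continuity of $\alpha_\RR$ on $\cK(\Z)$ (equivalently on $\cC(\Z)$, up to the unit) follows by the same two-term split as in part~(i), now using the continuity of $t\mapsto\gamma_t$ from Lemma~\ref{l3.3}(iv) in place of Lemma~\ref{l3.3}(ii).

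The only genuine subtlety — the ``main obstacle'' — is the norm continuity of the \emph{harmonic} flow $\alpha^{(0)}_\RR$ on $\cC(\lambda)$ and hence on $\cK(\Z)$; automorphic actions on C*-algebras are continuous in the strong-operator but not generally the norm topology, so one must genuinely exploit the compactness of the generators $f(\biP_\lambda)g(\biQ_\lambda)$. In the Schrödinger representation $\alpha^{(0)}_\RR$ is implemented by the (metaplectic) harmonic-oscillator unitaries $U_0(t)$, which are strongly continuous, and conjugation $\mathrm{Ad}\,U_0(t)$ is norm continuous precisely on the ideal of compact operators; since $\cC(\lambda)$ is exactly (isomorphic to) that ideal, $t\mapsto\alpha^{(0)}_t\upharpoonright\cC(\lambda)$ is norm continuous. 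Everything else is a routine assembly of the earlier lemmas, so I would keep the write-up short and lean heavily on Lemma~\ref{l3.3}.
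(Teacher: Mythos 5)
Your proposal is correct and takes essentially the same route as the paper: the same decomposition $\alpha^\Lambda_t=\alpha^{(0)}_t\gamma^\Lambda_{-t}$ with the group law from the cocycle relation, invariance and continuity on $\cK(\Z)$ from Lemma~\ref{l3.3}(ii),(iii) combined with norm continuity of the harmonic flow on the compact algebras (your $\mathrm{Ad}\,U_0(t)$-on-compacts argument is exactly the paper's one-line justification, spelled out), and part~(ii) by transferring Lemma~\ref{l3.3}(iv) through the fixed isometry $\alpha^{(0)}_t$. The only blemish is the parenthetical $\cK(\Lambda)=\cC(\Lambda)+\CC 1$, which holds only for single points $\Lambda=\{\lambda\}$; since your continuity argument only needs the generating algebras $\cC(\Lambda_0)$ together with the unit, this does not affect the proof.
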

\begin{proof}
It has been shown in Lemma \ref{l3.3}(i) that 
the maps $\gamma^\Lambda_t$ are automorphisms of $\cR(\Z)$ which satisfy 
the cocycle relation \eqref{r3.1}, $t \in \RR$. Since 
$\alpha^{(0)}_\RR$ is
a group of automorphisms of $\cR(\Z)$, this is true
for $\alpha^\Lambda_t \doteq \alpha^{(0)}_t \, \gamma^\Lambda_{-t}$, $t \in \RR$,
as well. In fact, making use of the cocycle relation
for $\gamma^\Lambda_\RR$, one obtains
\begin{align*}
& \alpha^\Lambda_s \, \alpha^\Lambda_t = 
\alpha^{(0)}_s  \gamma^\Lambda_{-s} \ \, \alpha^{(0)}_t  \gamma^\Lambda_{-t}
= \alpha^{(0)}_{s + t} \  \, \alpha^{(0)}_{-t} 
\gamma^\Lambda_{-s} \alpha^{(0)}_t \gamma^\Lambda_{-t} \\
& = \alpha^{(0)}_{s + t} \ \gamma^\Lambda_{-s-t} = 
\alpha^\Lambda_{s + t} \quad \text{for} \quad s,t \in \RR \, ,
\end{align*}
proving that $\alpha^\Lambda_\RR$ satisfies the group law.

(i) According to Lemma \ref{l3.3}(iii), the 
restrictions of the cocycles $\gamma^\Lambda_\RR$ to
$\cK(\Z)$ act as automorphisms on this subalgebra; moreover 
the function $t \mapsto \gamma^\Lambda_t$ is absolutely
continuous according to the second part of this lemma. 
The restriction of $\alpha^{(0)}_\RR$ to 
$\cK(\Z)$ induces likewise an automorphic action 
on this subalgebra which is pointwise norm continuous.
This follows from the fact that each algebra $\cC(\lambda)$, 
$\lambda \in \Z$, is left invariant by the action 
of $\alpha^{(0)}_\RR$ and the
elements of these algebras are represented by 
compact operators in the (faithful) 
Schr\"odinger representation. 
Combining the two actions, the statement follows.
(Note that the action of $\alpha^\Lambda_\RR$ 
on the full resolvent algebra is discontinuous, however.)

(ii) It was shown in Lemma \ref{l3.3}(iv) that 
$\gamma_t \doteq \lim_{\Lambda \nearrow \Z} \gamma^\Lambda_t$  exists 
pointwise on $\cR(\Z)$ in the norm topology, 
satisfies the cocycle relation \eqref{r3.1}, 
and $t \mapsto \gamma_t$ is pointwise norm continuous, 
$t \in \RR$. Hence the limits  
$\alpha_t \doteq  \lim_{\Lambda \nearrow \Z} \, \alpha^{(0)}_t \, \gamma^\Lambda_t $,
$t \in \RR$, 
exist as well in this topology and thereby define a group of automorphisms 
$\alpha_\RR$ of the C*-algebra $\cR(\Z)$. Since 
$\cK(\Z)$ is a closed subalgebra of $\cR(\Z)$, this strong form of  
convergence also implies that the restriction 
of $\alpha_\RR$ to $\cK(\Z)$ 
defines automorphisms of it. Moreover, the continuity properties 
of $t \mapsto \gamma_t$ 
and $t \mapsto \alpha^{(0)}_t$ imply that their combined action 
$t \mapsto \alpha_t = \alpha^{(0)}_t \gamma_t$ is pointwise norm 
continuous on $\cK(\Z)$ as well,
$t \in \RR$. 
This completes the proof of the proposition. \qed
\end{proof}

\section{Equilibrium states}
\setcounter{equation}{0}

In this section we turn to the analysis of states on the resolvent algebra
with the aim to exhibit equilibrium (KMS) states for the dynamics 
constructed in the preceding section; cf.\ also \cite{KaMa} for a recent 
investigation of KMS states 
on the resolvent algebra in the one-dimensional case.   
In a large part of our study we will restrict attention to states on the 
subalgebra $\cK(\Z)$. Only at the very end of our analysis we will
extend the locally normal states to the full 
algebra. The reason for this approach is twofold: first, we can work in the
setting of C*-dynamical systems, where we can profit from many known 
results; second, we can illustrate the advantages of the present
framework, where we deal with an algebra which has a rich ideal structure,
allowing to describe finite and infinite systems at the same 
time. Thus, instead of changing the algebra when proceeding to 
theories in ``larger boxes'', we change states on our fixed algebra. This 
will lead to considerable simplifications when proceeding to the 
thermodynamic limit.

We begin with a brief account of notation and terminology: States 
$\omega$ are positive, 
linear and normalized functionals on a  
unital C*-algebra. By the GNS construction,
any such state yields a concrete representation (homomorphism) 
$\pi$ of the algebra 
into the bounded operators on some Hilbert space $\cH$, and a unit vector 
$\Omega \in \cH$ such that \ 
$\omega( \, \bdot \,) = \langle \Omega, \pi( \, \bdot \, ) \Omega \, \rangle$,
where the dot stands for any given member of the algebra. 
Restricting attention to states on the algebra $\cK(\Z)$, we 
adopt the following terminology.  

\vspace*{2mm}
\noindent 
\textbf{Definition:} Let $\omega$ be a state on $\cK(\Z)$ and
let $\Lambda \subset \Z$ be any finite subset. 
\begin{itemize}
\item[(i)] 
$\omega$ is \textit{normal} at $\Lambda$ 
if for every approximate identity formed by increasing projections 
$\{ E_\iota \in \cC(\Lambda) \}_{\iota \in \II} $ one has
$\lim_\iota \omega(E_\iota) = 1$. The state is said to be 
\textit{locally normal} at some infinite subset $\bLambda \subseteq \Z$
if it is normal at all finite subsets $\Lambda \subset \bLambda$.
(Bold face symbols~$\bLambda$ will be used for infinite 
subsets of $\Z$.) 
\item[(ii)] 
$\omega$ is said to be singular at $\Lambda$ if
$\omega \upharpoonright \cC(\Lambda) = 0$. (Such
states exist since $\cC(\Lambda)$ forms an ideal 
of compact operators in $\cK(\Lambda)$ and
$\cK(\Z) = \cK(\Lambda) \otimes \cK(\Z \backslash \Lambda)$.) 
\end{itemize}

\medskip
The following technical lemma concerning such states 
will be used at various points in our analysis. 
\begin{lemma} \label{l4.1}
Let $\omega$ be a state on $\cK(\Z)$ with 
GNS-representation $(\pi, \cH, \omega)$. 
\begin{itemize}
\item[(i)] If, for some $\Lambda \subset \Z$, the state $\omega$
is normal at all points $\lambda \in \Lambda$, it is normal at~$\Lambda$.
The restriction of the 
representation $\pi \upharpoonright \cC(\Lambda)$ can then be extended
in the strong operator topology 
to a regular representation of the resolvent algebra $\cR(\Lambda)$ which
is quasi-equivalent to the unique Schr\"odinger representation of this
algebra.
\item[(ii)] If the state $\omega$ is singular at some point 
$\lambda \in \Z$, it is singular at any $\Lambda \ni \lambda$ 
and $\pi \upharpoonright \cC(\Lambda) = 0$ for such $\Lambda$.
\end{itemize}
\end{lemma}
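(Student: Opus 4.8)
The statement has two parts; I would treat (ii) first since it is the easier of the two and sets up the decomposition used in (i).

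\medskip\noindent\textbf{Part (ii).} Suppose $\omega$ is singular at $\lambda$, i.e.\ $\omega\upharpoonright\cC(\lambda)=0$. Since $\cC(\lambda)$ is the algebra of compact operators on a separable Hilbert space, it has an approximate identity $\{E_\iota\}$ of increasing finite-rank projections, and singularity means $\omega(E_\iota)=0$ for all $\iota$, equivalently $\pi(E_\iota)\Omega=0$ for all $\iota$ (using positivity: $\|\pi(E_\iota)\Omega\|^2=\omega(E_\iota)=0$). For any finite $\Lambda\ni\lambda$, write $\cK(\Z)=\cK(\lambda)\otimes\cK(\Z\backslash\lambda)$ and note $\cC(\Lambda)=\cC(\lambda)\otimes\cC(\Lambda\backslash\lambda)$ inside this tensor product (using the tensor-product structure of the compact algebras recorded in the Preliminaries). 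Hence any $C\in\cC(\Lambda)$ can be norm-approximated by finite sums $\sum_k a_k\otimes b_k$ with $a_k\in\cC(\lambda)$, $b_k\in\cC(\Lambda\backslash\lambda)$, and since $a_k=E_\iota a_k E_\iota$ in the limit, $C=\lim_\iota (E_\iota\otimes 1)\,C\,(E_\iota\otimes 1)$ in norm. For the GNS vector this gives $\|\pi(C)^{1/2}\Omega\|$-type estimates; more directly, for a positive $C\in\cC(\Lambda)$ one has $0\le\omega(C)=\lim_\iota\omega((E_\iota\otimes1)C(E_\iota\otimes1))\le\|C\|\lim_\iota\omega(E_\iota\otimes1)=0$, since $E_\iota\otimes1\le$ (an element of $\cC(\lambda)$ dominating it is not available, but) $E_\iota\otimes1$ is dominated in the order of $\cK(\Z)$ by $1$ and $\omega(E_\iota\otimes1)\le\omega(E_\iota\cdot\text{(unit of the other factor)})$ — here one uses that $\omega$ restricted to $\cC(\lambda)\otimes 1\cong\cC(\lambda)$ still vanishes, because $\cC(\lambda)\otimes 1$ sits inside the C*-algebra generated by $\cC(\lambda)$ and $1$, and a state vanishing on an ideal of compacts vanishes on the ideal regardless of the ambient unital algebra. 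By linearity $\omega\upharpoonright\cC(\Lambda)=0$, i.e.\ $\omega$ is singular at $\Lambda$, and the same vanishing of $\|\pi(C)\Omega\|$ together with the fact that $\pi(\cC(\Lambda))\Omega$ is dense in $\overline{\pi(\cC(\Lambda))\cH}$ (again $\cC(\Lambda)$ is a compact-operator algebra, hence acts nondegenerately or trivially) forces $\pi\upharpoonright\cC(\Lambda)=0$.

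\medskip\noindent\textbf{Part (i).} Assume $\omega$ is normal at each $\lambda\in\Lambda$, $\Lambda$ finite. Normality at $\lambda$ means $\lim_\iota\omega(E_\iota^{(\lambda)})=1$ for an approximate identity of $\cC(\lambda)$, equivalently the central projection $p_\lambda\in\pi(\cK(\lambda))''$ onto $\overline{\pi(\cC(\lambda))\cH}$ satisfies $\langle\Omega,p_\lambda\Omega\rangle=1$, i.e.\ $p_\lambda\Omega=\Omega$. Using $\cC(\Lambda)=\otimes_{\lambda\in\Lambda}\cC(\lambda)$ and commutativity of the factors, the corresponding central projection for $\cC(\Lambda)$ is $p_\Lambda=\prod_{\lambda\in\Lambda}p_\lambda$ (a finite product of commuting projections), and $p_\Lambda\Omega=\Omega$; this is exactly normality of $\omega$ at $\Lambda$, i.e.\ part (i) of Lemma~\ref{l4.1}(i). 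Now the subrepresentation $\pi_\Lambda\doteq\pi\upharpoonright\cC(\Lambda)$ on the subspace $p_\Lambda\cH$ is a nondegenerate representation of the algebra of compact operators $\cC(\Lambda)$; by the classical structure theory of representations of the compacts it is a (possibly infinite) multiple of the unique irreducible representation, i.e.\ of the defining representation of $\cC(\Lambda)$ on its underlying separable Hilbert space. But $\cC(\Lambda)$ is precisely the compact ideal of $\cR(\Lambda)$, and its unique irreducible representation extends — in the strong operator topology, by von Neumann's uniqueness theorem for the CCR in resolvent form (cf.\ \cite[Thm.~4.10]{BuGr2} and \cite[Prop.~5.2]{BuGr2}) — to the Schr\"odinger representation of $\cR(\Lambda)$. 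Taking the same multiple, $\pi_\Lambda$ extends to a regular representation of $\cR(\Lambda)$ which is quasi-equivalent to the Schr\"odinger representation, as claimed.

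\medskip\noindent\textbf{Main obstacle.} The routine parts are the approximate-identity manipulations; the point that needs care is the passage from ``$\pi\upharpoonright\cC(\Lambda)$ is a nondegenerate rep of the compacts'' to ``it extends strongly to a regular rep of $\cR(\Lambda)$ quasi-equivalent to Schr\"odinger.'' Here one must observe that $\cC(\Lambda)$, being the unique (up to iso) minimal nonzero ideal, determines the representation of $\cR(\Lambda)$ on the subspace it generates: every representation of $\cR(\Lambda)$ restricting nondegenerately to $\cC(\Lambda)$ is quasi-equivalent to the Schr\"odinger one (since $\cC(\Lambda)$ is an essential ideal in $\cR(\Lambda)$ in that representation, a representation of the ideal extends uniquely to the algebra, and the resolvents act by the strong-operator closure). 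That extension is automatically regular because in the Schr\"odinger representation the one-parameter groups generated by $\biP_\lambda,\biQ_\lambda$ are strongly continuous, and quasi-equivalence preserves this. This is the only step where one genuinely invokes the special structure of the resolvent algebra rather than general C*-theory, and it is where I would be most careful to cite \cite{BuGr2} precisely.
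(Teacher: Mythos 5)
Your part (i) up to ``$\omega$ is normal at $\Lambda$'' and your part (ii) up to ``$\omega$ is singular at $\Lambda$'' are fine, but there is a genuine gap in the passage from statements about the \emph{state} (i.e.\ about the vector $\Omega$) to statements about the \emph{representation} $\pi$ on all of $\cH$, and this passage is exactly what the lemma asserts. In (ii), your justification that $\pi\upharpoonright\cC(\Lambda)=0$ rests on two unsupported claims: that a representation of an algebra of compacts ``acts nondegenerately or trivially'' (false -- it can split into a nondegenerate part plus a zero part), and that $\pi(\cC(\Lambda))\Omega$ is dense in $\overline{\pi(\cC(\Lambda))\cH}$, which is not automatic because $\Omega$ is cyclic only for $\pi(\cK(\Z))$ and $\cC(\Lambda)$ is \emph{not} an ideal in $\cK(\Z)$ (the algebras $\cC(\Lambda)$ are not even nested). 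To get $\pi(C)=0$ you must control $\|\pi(C)\pi(A)\Omega\|^2=\omega(A^*C^*CA)$ for arbitrary $A\in\cK(\Z)$; this is why the paper runs the approximate-identity computation with flanking elements of the full algebra, showing $\pi(E_\iota(\lambda))\to 0$ weakly on the dense set $\pi(\cK(\Z))\Omega$, and only then concludes $\pi(C)=\lim_\iota\pi(CE_\iota(\Lambda))=0$. Your argument can be repaired along your own lines by noting that the support projection $q$ of $\pi(\cC(\lambda))$ commutes with all $\pi(\cK(\mu))$ (because $\cC(\lambda)$ is an ideal in $\cK(\lambda)$ and commutes with the other sites), hence is central, so $q\Omega=0$ plus cyclicity forces $q=0$ -- but this step, or the paper's matrix-element computation, has to be made explicitly; it is the heart of the lemma.

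The same omission affects the second half of (i). You construct the extension only on the essential subspace $p_\Lambda\cH$, having shown merely $p_\Lambda\Omega=\Omega$. The lemma, and its later use in Proposition \ref{p4.7}, requires an extension of $\pi\upharpoonright\cC(\Lambda)$ on the whole GNS space which is \emph{regular} there; if $p_\Lambda\neq 1$ no such extension exists (on $(1-p_\Lambda)\cH$ the ideal $\cC(\Lambda)$ is killed, which is incompatible with regularity). The paper proves the stronger fact $\lim_\iota\pi(E_\iota(\lambda))=1$ on all of $\cH$, again by the flanking-operator computation plus cyclicity, and then defines $\pi(R)\doteq\lim_\iota\pi(RE_\iota(\Lambda))$ strongly and verifies regularity via $\lim_{c\to\infty}ic\,\pi(R(c))=1$. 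Your route (structure theory of representations of the compacts, then extending a multiple of the irreducible representation to a multiple of the Schr\"odinger representation of $\cR(\Lambda)$) is a legitimate alternative to that last step, but it only becomes a proof of the lemma once you have shown $p_\lambda=1$ for each $\lambda\in\Lambda$ -- which again follows from centrality of $p_\lambda$ together with $p_\lambda\Omega=\Omega$ and cyclicity, a short argument you did not give.
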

\begin{proof}
(i) Let $\Lap,  \Lapp \subset \Z$, 
let 
$A(\lap) \in \cK(\lap)$, $\lap \in \Lap$,
and let 
$B(\lapp) \in \cK(\lapp)$,
$\lapp \in \Lapp$. 
The tensor product structure of $\cK(\Z)$   
and the normality of $\omega$ at $\lambda \in \Lambda$ 
imply that for any  
approximate identity formed by increasing projections 
{$\{ E_\iota(\lambda) \in \cC(\lambda) \}_{\iota \in \II} $},  one has 
\begin{align*}
& \lim_\iota \, \omega(\Pi_{\lap} A(\lap) \, E_\iota(\lambda) 
\, \Pi_{\lapp} B_\kappa(\lapp)) \\
& = \lim_\iota \, \omega(\Pi_{{\lap \neq \lambda}} 
A(\lap) 
\Pi_{\lapp \neq \lambda} B_\kappa(\lapp)
\, A(\lambda)  E_\iota(\lambda)  B(\lambda) ) \\
& = \omega(\Pi_{{\lap \neq \lambda}} 
A(\lap) 
\Pi_{\lapp \neq \lambda} B_\kappa(\lapp)
\, A(\lambda) B(\lambda) ) 
 = \omega(\Pi_{\lap} A(\lap) \ 1 \ 
\Pi_{\lapp}  B_\kappa(\lapp)) \, .
\end{align*}
The second equality obtains since 
$E_\iota(\lambda) \in \cC(\lambda)$,
$\iota \in \II$, is an approximate identity of increasing projections
and $\lim_\iota \omega(E_\iota(\lambda)) = 1$. It follows from this 
relation that \ $\lim_\iota \pi(E_\iota(\lambda)) = 1$ 
in the weak, hence also strong operator topology. 
 
Picking approximate identities from each 
$\cC(\lambda)$, $\lambda \in \Lambda$,
their product defines an approximate identity of increasing projections  
$\{ E_\iota(\Lambda) \in \cC(\Lambda) \}_{\iota \in \II} $ 
which consequently converges to $1$ in the strong operator topology,
as well. Given any other approximate identity 
$\{ F_\kappa(\Lambda) \in \cC(\Lambda) \}_{\kappa \in \KK} $
one has 
$$
\pi(E_\iota(\Lambda)) = \lim_\kappa 
\pi(F_\kappa(\Lambda) E_\iota(\Lambda) F_\kappa(\Lambda))
\leq \lim_\kappa \pi(F_\kappa(\Lambda)) \, ,
$$ 
hence the latter approximate identity
also converges to $1$ in the strong operator topology.
Thus $\omega$ and $\pi \upharpoonright \cC(\Lambda)$ are normal
at $\Lambda$. 

Now let 
$\{ E_\iota(\Lambda) \in \cC(\Lambda) \}_{\iota \in \II} $ 
be an approximate identity, \ie 
$\lim_\iota \pi(E_\iota(\Lambda)) = 1$ in the strong
operator topology. Clearly, 
$R E_\iota(\Lambda),   E_\iota(\Lambda) R \in  \cC(\Lambda)$, 
$\iota \in \II$, 
for any $R \in \cR(\Lambda)$ since $\cC(\Lambda) \subset  \cR(\Lambda)$
is an ideal. Moreover, the corresponing limits 
$\pi(R) \doteq \lim_\iota \pi(R E_\iota(\Lambda)) = 
\lim_\iota \pi(E_\iota(\Lambda) R)$ exist in the strong operator
topology, defining an extension of $\pi \upharpoonright \cC(\Lambda)$
to $\cR(\Lambda)$, denoted by the same symbol. For the proof
of regularity, let $R(c) \in \cR(\Lambda)$, 
$c \in \RR \backslash \{0\}$, be any of the basic resolvents  
defined in relation \eqref{r2.1}. Then 
$\lim_{c \rightarrow \infty} \, ic R(c) E_\iota(\Lambda) = 
E_\iota(\Lambda)$, $\iota \in \II$, in the norm topology.  
Since the function $c \mapsto \pi(i c R(c))$ is uniformly
bounded, it follows by a three epsilon argument that 
\begin{align*}
\lim_{c \rightarrow \infty} \, \pi(i c R(c)) & =
\lim_{c \rightarrow \infty} \lim_\iota 
\, \pi(i c R(c) \, E_\iota(\Lambda))  \\
& =
\lim_\iota \lim_{c \rightarrow \infty} \, \pi(i c R(c,f) E_\iota(\Lambda))
= 1
\end{align*}
in the strong operator topology. The 
regularity of the representation then
follows, cf.\ \cite[Prop.~4.5]{BuGr2}. The statement 
about quasi-equivalence is a consequence of von Neumann's 
uniqueness theorem for finite regular quantum systems. 

(ii) Let $\omega$ be singular at some given $\lambda \in \Z$ and
let $\Lambda \ni \lambda$. Similarly as in the preceding 
step, one shows that for any approximate identity 
$\{ E_\iota(\lambda) \in \cC(\lambda) \}_{\iota \in \II} $ 
one has $\pi(E_\iota(\lambda)) = 0$, $\iota \in \II$.
Hence if $\Lambda \ni \lambda$, the product 
$\{ E_\iota(\Lambda) \in \cC(\Lambda) \}_{\iota \in \II} $ of
any family of approximate identities 
$\{ E_\iota(\lap) \in \cC(\lap) \}_{\iota \in \II} $, 
$\lap \in \Lambda$, satisfies 
$\pi(E_\iota(\Lambda)) = 0$, $\iota \in \II$. Since 
$\pi(C) = \lim_\iota\pi(C E_\iota(\Lambda)) = 0$ for $C \in \cC(\Lambda)$,
this completes the proof of the lemma. \qed
\end{proof}

We turn next to the study of the dynamics in given states
and their GNS-representations.
In addition to the free dynamics $\alpha^{(0)}_\RR$ and the interacting
dynamics $\alpha_\RR$ we will also consider 
the approximating dynamics $\alpha^\Lambda_\RR$
for $\Lambda \subset \Z$. The following lemma is of some interest
in its own right since it displays the role of singularities 
of states on $\cK(\Z)$ as ``impenetrable boundaries'': 
\ie if a state is singular in some region, any form of
interaction is turned off there in its GNS representation.

\begin{lemma} \label{l4.2}
Let $\Lambda \subset \Z$, let $\omega$ be a state on 
$\cK(\Z)$ which is normal at $\Lambda$ and 
singular at all 
points $\lap \in \Z \backslash \Lambda$,  and
let $(\pi, \cH, \Omega)$ be its GNS-representation. Then
\begin{itemize}
\item[(i)] \ $\pi(\cK(\Z)) = \pi(\cK(\Lambda))$ \ and \ 
$\pi \upharpoonright \cC(\Lap) = 0$ \ if \
$\Lap \cap (\Z \backslash \Lambda) \neq \emptyset $ 
\item[(ii)] \ $\pi(\alpha_t(C^\prime)) = 0$, $t \in \RR$, for any 
$C^\prime \in \cC(\Lap)$ \ if \   
$\Lap \cap (\Z \backslash \Lambda) \neq \emptyset $ 
\item[(iii)] \ $\pi(\alpha_t(C)) = \pi(\alpha_t^\Lambda(C))$, $t \in \RR$, for
any $C \in \cK(\Z)$.  
\end{itemize}
\end{lemma}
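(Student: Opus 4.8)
The three assertions all have the same source: a state that is singular at every point of $\Z \backslash \Lambda$ kills the compact algebras attached to any region meeting the complement, and this forces the interaction terms linking $\Lambda$ to its complement to act trivially in the GNS representation. The plan is to establish (i) first as a structural statement about $\pi$, then deduce (ii) by expanding $\alpha_t$ into its Dyson series and observing that the surviving terms all land in compact algebras annihilated by $\pi$, and finally obtain (iii) by comparing the Dyson series of $\alpha_t$ and $\alpha_t^\Lambda$ term by term inside $\pi$.

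\medskip
\textbf{Step 1 (proof of (i)).} By hypothesis $\omega$ is singular at every $\lap \in \Z \backslash \Lambda$, so Lemma~\ref{l4.1}(ii) gives $\pi \upharpoonright \cC(\Lap') = 0$ for every finite $\Lap'$ containing such a point; by the tensor product structure of $\cK(\Z)$ this is exactly $\pi \upharpoonright \cC(\Lap) = 0$ whenever $\Lap \cap (\Z \backslash \Lambda) \neq \emptyset$. Consequently, in the decomposition $\cK(\Z) = \cK(\Lambda) \otimes \cK(\Z \backslash \Lambda)$, the representation $\pi$ annihilates every $\cC(\lap)$ with $\lap \notin \Lambda$ but, by normality at $\Lambda$, sends the approximate identity of each $\cC(\lap)$, $\lap \in \Lambda$, to $1$ in the strong operator topology (Lemma~\ref{l4.1}(i)). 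Since $\cK(\Z)$ is generated by the $\cC(\Lambda_1)$ and the unit, and any generator involving a point outside $\Lambda$ is killed while the unit and the generators inside $\Lambda$ are reproduced, we get $\pi(\cK(\Z)) = \pi(\cK(\Lambda))$.

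\medskip
\textbf{Step 2 (proof of (ii) and (iii)).} Write $\alpha_t = \alpha^{(0)}_t \gamma_{-t}$ and expand $\gamma_{-t}$ in its norm-convergent Dyson series \eqref{r3.4}, which by Lemma~\ref{l3.3}(iv) converges pointwise in norm; since $\pi$ is norm continuous it suffices to treat each term $\Ms_n(t)$. For (iii), fix $C \in \cC(\Lambda_0)$ with $\Lambda_0$ finite and compare the $n$-th Dyson terms of $\gamma_{-t}$ and $\gamma^\Lambda_{-t}$. The difference $D_n^{\Z}(t) - D_n^{\Lambda}(t)$, applied to $C$, is a sum of maps $\Ms_n(t)$ over collections of nearest-neighbor pairs in which at least one pair $\lap_j \smile \lapp_j$ has a point outside $\Lambda$. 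By Lemma~\ref{l3.1}(iii) each such $\Ms_n(t)(C)$ lies in some $\cC(\Lambda_n)$ with $\Lambda_n \supseteq \Lambda_0$; I must check that $\Lambda_n$ necessarily meets $\Z \backslash \Lambda$, which follows by tracking the induction in the proof of Lemma~\ref{l3.1}(iii): the localization region only grows by absorbing the complementary point $\lapp_j$ of a cross-boundary pair, so once such a pair contributes non-trivially, $\Lambda_n \cap (\Z \backslash \Lambda) \neq \emptyset$. Hence $\pi(\Ms_n(t)(C)) = 0$ for every such term by part (i), so $\pi(D_n^{\Z}(t)(C)) = \pi(D_n^{\Lambda}(t)(C))$ for all $n$; summing and applying $\alpha^{(0)}_t$ (which $\pi$ intertwines with a suitable automorphism of $\pi(\cK(\Z))$, or simply noting $\alpha^{(0)}_t$ preserves each $\cC(\lambda)$ and hence commutes with the vanishing) gives $\pi(\alpha_t(C)) = \pi(\alpha^\Lambda_t(C))$; density and norm continuity extend this to all of $\cK(\Z)$. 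For (ii), if $C' \in \cC(\Lap)$ with $\Lap \cap (\Z \backslash \Lambda) \neq \emptyset$, then already $\pi(C') = 0$ by (i); moreover each Dyson term $\Ms_n(t)(C')$ lies in $\cC(\Lambda_n)$ with $\Lambda_n \supseteq \Lap$, so $\Lambda_n$ still meets $\Z \backslash \Lambda$ and $\pi(\Ms_n(t)(C')) = 0$; summing gives $\pi(\gamma_{-t}(C')) = 0$ and hence $\pi(\alpha_t(C')) = 0$ for all $t \in \RR$.

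\medskip
\textbf{Main obstacle.} The one point needing genuine care is the localization bookkeeping in Step 2: I must be sure that whenever a cross-boundary interaction term $V_{\lap_j \frown \lapp_j}$ with $\lapp_j \notin \Lambda$ appears with non-zero effect, the resulting compact algebra genuinely contains that complementary point — i.e.\ that the region $\Lambda_n$ produced by Lemma~\ref{l3.1}(iii) is not accidentally trimmed back inside $\Lambda$. This is really a consequence of Lemma~\ref{l2.2}, which shows that commuting $\cC(\lfrl)$ past $\cC(\Lambda')$ produces elements of $\cC(\Lambda' \cup \lapp)$ and never of a smaller region; so the complementary point, once introduced, stays, and the annihilation by $\pi$ propagates through all higher Dyson terms. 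The remaining verifications (norm convergence, exchanging $\pi$ with the sum, passing from $\cC(\Lambda_0)$ to all of $\cK(\Z)$) are routine given Lemma~\ref{l3.3} and part (i).
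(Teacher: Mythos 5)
Your proposal follows essentially the same route as the paper's proof: part (i) from Lemma \ref{l4.1} together with the fact that $\cK(\Z)$ is generated by the $\cC(\Lambda_1)$ and the unit; parts (ii) and (iii) by expanding the cocycle into its Dyson series and using the localization bookkeeping of Lemma \ref{l3.1}(iii) (which rests on Lemma \ref{l2.2}) to conclude that every term involving a pair with a point outside $\Lambda$ lies in some $\cC(\Lambda_n)$ whose region meets $\Z \backslash \Lambda$, hence is annihilated by $\pi$, so that the surviving series coincides with that of $\gamma^\Lambda$. Your resolution of the ``main obstacle'' (the localization region never shrinks and, once a complementary point enters non-trivially, it stays or was already there) is exactly the observation the paper uses implicitly.

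One step is justified too quickly. You expand the infinite-volume cocycle $\gamma_{-t}$ ``in its norm-convergent Dyson series \eqref{r3.4}, which by Lemma \ref{l3.3}(iv) converges pointwise in norm''. Lemma \ref{l3.3}(iv) only asserts $\gamma_t = \lim_{\Lambda' \nearrow \Z}\gamma^{\Lambda'}_t$ pointwise in norm; the representation of $\gamma_t(C)$, $C \in \cC(\Lambda_0)$, as the stabilized series $\sum_n i^n D_n^{\Lambda_n}(t)(C)$ is controlled by the uniform bound of Lemma \ref{l3.2}(ii) only for small $|t|$, which is why the paper first argues for small $|t|$ and then extends to all times via the group law for $\alpha_\RR$ and $\alpha^\Lambda_\RR$; your ``for all $t \in \RR$'' needs one of these extra steps. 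The repair is easy within your own scheme: either add the small-$|t|$ restriction plus group-law extension, or run the term-by-term comparison for the finite-volume cocycles $\gamma^{\Lambda'}_t$ (whose Dyson series converge in norm for all $t$), obtaining $\pi\circ\gamma^{\Lambda'}_t = \pi\circ\gamma^{\Lambda}_t$ on $\cC(\Lambda_0)$ for every finite $\Lambda' \supseteq \Lambda$, and then let $\Lambda' \nearrow \Z$ using Lemma \ref{l3.3}(iv). A second, smaller point: in (ii) the passage from $\pi(\gamma_{-t}(C'))=0$ to $\pi(\alpha_t(C'))=0$ is not a consequence of $\pi(X)=0$ alone; it uses that $\alpha^{(0)}_t$ leaves every $\cC(\Lapp)$ invariant, so the Dyson terms stay in compact algebras of regions meeting the complement after applying $\alpha^{(0)}_t$ --- you note this fact only parenthetically under (iii), and it should be invoked explicitly in (ii) as well.
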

\begin{proof}
(i) The first statement is an immediate consequence of the preceding lemma
and the definition of the algebra $\cK(\Z)$, which is generated by 
the algebras $\cC(\Lambda)$, $\Lambda \subset \Z$, and the unit 
operator.  

(ii) The crucial point is the observation that the time evolution 
$\alpha_\RR$ maps the algebra~$\cC(\Lap)$ into a norm closed subalgebra 
$\cS(\Lap) \subset \cK(\Z)$, the surrounding of 
$\cC(\Lap)$, which is generated by all algebras 
$\cC(\Lapp)$ with $\Lapp \supseteq \Lap$. According to  
Lemma \ref{l4.1}(ii) we have $\pi \upharpoonright \cC(\Lapp) = 0$
if $\Lapp \supset \Lap$, whence by continuity also 
$\pi  \upharpoonright \cS(\Lap) = 0$. The statement then follows. 

Now the harmonic dynamics $\alpha^{(0)}_\RR$
leaves all algebras $\cC(\Lapp)$, $\Lapp \subset \Z$,
invariant, so it suffices to determine the action of the 
cocycle $\gamma_\RR$. Given 
$C^\prime \in \cC(\Lap)$, we expand $\gamma_t(C^\prime)$ for small
$|t|$ into a (norm-convergent) Dyson series; it is given by  
$ \gamma_t(C^\prime) = C^\prime + \sum_{n=1}^\infty i^n D_n^{\Lap_n}(t)(C^\prime) $, 
where we used Lemma~\ref{l3.2}(i) by choosing for 
the $n$-th order contribution $ D_n^{\Lap_n}(t)(C^\prime)$ 
some sufficiently big $\Lap_n \supset \Lap$. 
Decomposing $D_n^{\Lap_n}(t)(C^\prime)$  
into the maps $\Ms_n(t)(C^\prime)$, it follows from 
Lemma~\ref{l3.1}(iii) that 
$D_n^{\Lap_n}(t)(C^\prime) \in \bigcup_{\Lambda^\prime \subseteq 
\Lapp \subseteq \Lap_n} \cC(\Lapp)$. Hence, 
one has $\alpha_t(C^\prime) = \alpha^{(0)}_t \gamma_t(C^\prime) 
\in \cS(\Lambda^\prime)$ for small $|t|$ and, making use of the group law
for $\alpha_\RR$, this result extends to all~$t \in \RR$. 

(iii) Let $\Lambda_0 \subset \Z$ and 
let $C \in \cC(\Lambda_0)$. If 
$\Lambda_0 \cap  (\Z \backslash \Lambda) \neq \emptyset$, the
statement follows from the preceding step, so we may assume that 
$\Lambda_0 \subseteq \Lambda$. 
Similarly to the preceding step, we expand
$\gamma_t(C)$ for small $|t|$ into a Dyson series, 
$ \gamma_t(C) = C + \sum_{n=1}^\infty i^n D_n^{\Lambda_n}(t)(C)$, 
where each $\Lambda_n \supset \Lambda_0$ is sufficiently big.
Decomposing $D_n^{\Lambda_n}(t)(C)$ into the terms  
$\Ms_n(t)(C)$, it follows again from Lemma \ref{l3.1}(iii)
that for each term one has $\Ms_n(t)(C) \in \cC(\Lambda_{\, n}^{\tlsml})$,
where $\Lambda_{\, n}^{\tlsml} \supseteq \Lambda_0$. 
If $\Lambda_{\, n}^{\tlsml} \cap  (\Z \backslash \Lambda) \neq \emptyset$,
the corresponding terms $\Ms_n(t)(C)$ 
are annihilated in the representation~$\pi$ according
to Lemma \ref{l4.1}(ii). In other words, in the representation $\pi$ 
there contribute only terms $\Ms_n(t)(C)$ to the Dyson series with  
$\Lambda_{\, n}^{\tlsml} \subseteq \Lambda$, $n \in \NN$. 
The resulting series thus coincides exactly with the Dyson series 
for $\gamma^\Lambda_t$. 

So we conclude that 
$\pi(\gamma_t(C)) = \pi(\gamma^\Lambda_t(C))$ for 
$C \in \cC(\Lambda_0)$, $\Lambda_0 \subset \Z$, and 
small~$|t|$.
Since the algebra $\cK(\Z)$ is generated by 
the algebras $\cC(\Lambda_0)$, $\Lambda_0 \subset \Z$, 
and the unit operator $1$ and since $\pi$ is a homomorphism,  
this relation holds therefore for all $C \in \cK(\Z)$. Using
the group law for the dynamics $\alpha_\RR$ and $\alpha^\Lambda_\RR$,
and the invariance of $\cK(\Z)$ under both dynamics, the
statement follows. \qed
\end{proof}

This lemma shows how one can describe finite systems on the 
resolvent algebra, using partially singular states. 
We will make use of this fact in our construction of KMS states;
cf.\ \cite{BrRo} for basic definitions in this context. Given any 
$\Lambda \subset \Z$ and $\beta > 0$, there exists a (unique)
normal KMS state $\rho_\beta^\Lambda$ on the
algebra $\cK(\Lambda)$ for the dynamics 
$\alpha^\Lambda_\RR \upharpoonright \cK(\Lambda)$; note that
this subalgebra is stable under the action of $\alpha^\Lambda_\RR$. 
The state $\rho_\beta^\Lambda$ is  
described in the Schr\"odinger representation of this finite
system by the density matrix 
$\text{\footnotesize $Z_\Lambda^{-1}$} \, e^{- \beta H_\Lambda}$,
where 
\begin{equation}
H_\Lambda = 
(1/2) \sum_{\lambda \in \Lambda} \, (P_\lambda^2 + \varpi^2 \, Q_\lambda^2) \ + 
\! \sum_{\lsml \in \Lambda \times 
\Lambda} \, 
 V_{\lfrl}(Q_{\lap}
- Q_{\lapp}) \, .
\end{equation}

Exploiting the tensor product structure of $\cK(\Z)$,
we extend $\rho_\beta^\Lambda$ to the full algebra. This extension is 
given by the product state 
$\omega_\beta^\Lambda \doteq \rho_\beta^\Lambda \otimes \sigma^{\Z \backslash \Lambda}$
on $\cK(\Lambda) \otimes \cK(\Z \backslash \Lambda)$,
where $\sigma^{\Z \backslash \Lambda}$ denotes the singular state 
which annihilates all operators in $ \cK(\Z \backslash \Lambda)$, apart from
multiples of~$1$. The latter state exists 
since $\cK(\Z \backslash \Lambda) = \otimes_{\lap \in \Z \backslash \Lambda} \,
\cK(\lap)$ and since each algebra $\cC(\lap)$
is an ideal in $\cK(\lap) = \CC 1 + \cC(\lap)$, 
$\lap \in \Z \backslash \Lambda$.
The properties of the states $\omega_\beta^\Lambda$, given in the 
subsequent lemma, are a simple consequence of the preceding results.  

\begin{lemma}  \label{l4.3} 
Let $\Lambda \subset \Z$, $\beta > 0$, and let 
$\omega_\beta^\Lambda$ be the state on $\cK(\Z)$, defined above, 
with GNS representation 
$(\pi_\beta^\Lambda, \cH_\beta^\Lambda, \Omega_\beta^\Lambda)$.  
\begin{itemize}
\item[(i)] \ 
$\omega_\beta^\Lambda$ is a KMS state for the full dynamics $\alpha_\RR$
and the given $\beta$
\item[(ii)] \ 
$\omega_\beta^\Lambda$ is normal at $\Lambda$ and singular at any 
point $\lap \in \Z \backslash \Lambda$
\item[(iii)] \ $\omega_\beta^\Lambda$ is a primary state, \ie 
$\pi_\beta^\Lambda(\cC(\Z))^{\prime \prime}$ is a factor. 
\item[(iv)] For given $\Lambda \subset \Z$ and $\beta > 0$, the state 
is uniquely fixed by these three properties.
\end{itemize}
\end{lemma}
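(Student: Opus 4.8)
The plan is to verify the four properties in turn, with (i)--(iii) being quick consequences of the preceding lemmas and (iv) the substantive point. For (i), I would note that $\rho_\beta^\Lambda$ is by construction the Gibbs state for $H_\Lambda$, hence a KMS state for $\alpha^\Lambda_\RR$ restricted to $\cK(\Lambda)$; since $\alpha^\Lambda_\RR$ acts trivially outside $\Lambda$ up to the harmonic part, and $\sigma^{\Z\backslash\Lambda}$ is invariant under the harmonic dynamics and annihilates everything nontrivial anyway, the product state $\omega_\beta^\Lambda$ is KMS for $\alpha^\Lambda_\RR$ on all of $\cK(\Z)$. To upgrade this to a KMS state for the full dynamics $\alpha_\RR$, I would invoke Lemma~\ref{l4.2}(iii): since $\omega_\beta^\Lambda$ is normal at $\Lambda$ and singular off $\Lambda$ by construction (which also settles (ii)), one has $\pi_\beta^\Lambda(\alpha_t(C)) = \pi_\beta^\Lambda(\alpha^\Lambda_t(C))$ for all $C \in \cK(\Z)$, $t \in \RR$. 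Hence the function $t \mapsto \omega_\beta^\Lambda(A\,\alpha_t(B))$ agrees with $t \mapsto \omega_\beta^\Lambda(A\,\alpha^\Lambda_t(B))$, so the KMS condition transfers verbatim from $\alpha^\Lambda_\RR$ to $\alpha_\RR$.

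For (ii), normality at $\Lambda$ and singularity at each $\lap \in \Z\backslash\Lambda$ are immediate from the definition of the product state $\omega_\beta^\Lambda = \rho_\beta^\Lambda \otimes \sigma^{\Z\backslash\Lambda}$, using that $\rho_\beta^\Lambda$ is a normal (Gibbs) state on the type I factor generated by $\cC(\Lambda)$. For (iii), I would argue that by Lemma~\ref{l4.2}(i), $\pi_\beta^\Lambda(\cK(\Z)) = \pi_\beta^\Lambda(\cK(\Lambda))$, and since $\omega_\beta^\Lambda$ restricted to $\cK(\Lambda)$ is normal it extends (Lemma~\ref{l4.1}(i)) to a representation quasi-equivalent to the Schr\"odinger representation of $\cR(\Lambda)$, which is irreducible; a Gibbs density matrix on a finite system yields a primary (indeed type I) representation, so $\pi_\beta^\Lambda(\cC(\Z))^{\prime\prime}$ is a factor.

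The main work, and the expected obstacle, is uniqueness (iv). Suppose $\omega$ is any state on $\cK(\Z)$ that is (a) KMS for $\alpha_\RR$ at $\beta$, (b) normal at $\Lambda$, and (c) singular at every point of $\Z\backslash\Lambda$. By Lemma~\ref{l4.2}(i), in the GNS representation of $\omega$ one has $\pi(\cK(\Z)) = \pi(\cK(\Lambda))$, so $\omega$ is effectively a state on $\cK(\Lambda)$, normal there. By Lemma~\ref{l4.2}(iii), $\pi(\alpha_t(C)) = \pi(\alpha^\Lambda_t(C))$ for $C \in \cK(\Z)$, so the KMS condition for $\alpha_\RR$ forces $\omega \upharpoonright \cK(\Lambda)$ to be a KMS state for $\alpha^\Lambda_\RR \upharpoonright \cK(\Lambda)$ at $\beta$. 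Here is the crux: one must check that the $\beta$-KMS state for $(\cK(\Lambda), \alpha^\Lambda_\RR)$ is \emph{unique among normal states}. This follows because $\alpha^\Lambda_\RR$ on $\cK(\Lambda)$ is (after passing to the Schr\"odinger representation via Lemma~\ref{l4.1}(i)) implemented by $e^{itH_\Lambda}$ with $H_\Lambda$ having purely discrete spectrum and $e^{-\beta H_\Lambda}$ trace-class --- the harmonic confinement plus bounded $V$ guarantees compact resolvent --- so the standard finite-system argument (e.g.\ Bratteli--Robinson) gives that the Gibbs state $Z_\Lambda^{-1}e^{-\beta H_\Lambda}$ is the unique normal KMS state. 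I would need to argue carefully that the a priori just-normal state $\omega \upharpoonright \cK(\Lambda)$ does land inside this normal folium and that the trace-class property of $e^{-\beta H_\Lambda}$ indeed holds for all $\beta > 0$ (this uses $\varpi > 0$ and boundedness of $V$); granting that, $\omega \upharpoonright \cK(\Lambda) = \rho_\beta^\Lambda$, and combined with $\pi(\cK(\Z)) = \pi(\cK(\Lambda))$ and the prescribed singular behavior off $\Lambda$, one recovers $\omega = \omega_\beta^\Lambda$ on all of $\cK(\Z)$.
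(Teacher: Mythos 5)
Your proposal is correct and follows essentially the same route as the paper: (i) by killing contributions localized partly outside $\Lambda$ and transferring the KMS condition from $\alpha^\Lambda_\RR$ to $\alpha_\RR$ via Lemma~\ref{l4.2}(iii), (ii) by construction, (iii) via Lemma~\ref{l4.2}(i) and von Neumann uniqueness, and (iv) by reducing to the uniqueness of the normal KMS (Gibbs) state of the finite system $(\cK(\Lambda),\alpha^\Lambda_\RR)$. The points you flag as needing care in (iv) — that a state normal at $\Lambda$ lies in the Schr\"odinger folium, and that $e^{-\beta H_\Lambda}$ is trace class — are exactly what Lemma~\ref{l4.1}(i) and the standard finite-system facts the paper takes as given supply, so no gap remains.
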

\begin{proof}
(i) Let $C_1, C_2 \in \cK(\Z)$ and consider the function
{$t \mapsto \omega_\beta^\Lambda(C_1 \alpha_t(C_2))$, $t \in \RR$}.
According to Lemma \ref{l4.1}(ii) this function vanishes 
whenever one of the operators contains a factor
$C \in \cC(\Lambda_0)$ with 
$\Lambda_0 \, \bigcap \, (\Z \backslash \Lambda) \neq \emptyset$; 
thus it satisfies the KMS condition. Because of the
linearity of $\omega_\beta^\Lambda$  we may therefore restrict attention
to the case, where $C_1, C_2 \in \cK(\Lambda)$. 
It then follows from 
Lemma \ref{l4.2}(iii) and the fact that
$\cK(\Lambda)$ is stable under the action of $\alpha^\Lambda_\RR$
that the functions 
$t \mapsto \omega_\beta^\Lambda(C_1 \alpha_t(C_2)) = 
\omega_\beta^\Lambda(C_1 \alpha^\Lambda_t(C_2))$
satisfy the KMS condition, bearing in mind that 
$\omega_\beta^\Lambda \upharpoonright \cK(\Lambda) =  \rho_\beta^\Lambda$
is by definition a KMS state for the dynamics 
$\alpha^\Lambda_\RR \upharpoonright \cK(\Lambda)$.
 
(ii) $\omega_\beta^\Lambda \upharpoonright \cK(\Lambda) = \rho_\beta^\Lambda$ is 
represented by a density matrix in the Schr\"odinger representation,
hence normal, and its extension to $\cK(\Z)$ is  singular at all
points of $\Z \backslash \Lambda$ by its very construction.

(iii) According to Lemma \ref{l4.2}(i) one has
$\pi_\beta^\Lambda(\cK(\Z))^{\prime \prime} =
\pi_\beta^\Lambda(\cK(\Lambda))^{\prime \prime}$, and the latter 
algebra is a factor according to von Neumann's uniqueness
theorem.

(iv) The last statement is an immediate consequence of the uniqueness of
the KMS state 
$\omega_\beta^\Lambda \upharpoonright \cK(\Lambda)$ 
for the given dynamics $\alpha^\Lambda_\RR \upharpoonright \cK(\Lambda)$. 
\qed \end{proof}

According to this lemma, there exists for every finite subset 
$\Lambda \subset \Z$ and inverse temperature $\beta > 0$ 
some primary KMS state $\omega_\beta^\Lambda$ 
for the fixed C*-dynamical system 
$(\cK(\Z), \alpha_\RR)$ which is normal at
$\Lambda$ and singular at all points of $\Z \backslash \Lambda$.
In order to gain control on the normality properties of 
these states in the thermodynamic limit, 
we consider KMS states for the perturbed 
dynamics, where the interaction between
any given point $\lambda \in \Lambda$ and its complement 
$\Lambda \backslash \lambda$ is turned off. 
To this end we proceed from the KMS-state
$\rho_\beta^\Lml$ on the algebra $\cK(\Lml)$
and extend it to a state on 
$\cK(\Lambda) = \cK(\Lml) \otimes \cK(\lambda)$,
putting $\rho_\beta^{\Lml \, , \, \lambda} \doteq 
\rho_\beta^\Lml \otimes \rho_\beta^\lambda$.
In a regular representation of $\cK(\Lambda)$, 
where $\rho_\beta^\Lambda$ is described by the 
density matrix $\text{\footnotesize $Z_\Lambda^{-1}$} \, e^{- \beta H_\Lambda}$,
the state $\rho_\beta^{\Lml \, , \, \lambda}$
is described  by the density matrix 
$\text{\footnotesize $Z_{\Lml \, , \, \lambda}^{-1}$} \, 
e^{- \beta H_{\Lml \, , \, \lambda}}$, where
$$
H_{\Lml \, , \, \lambda} = H_\Lambda - \bVl
 \quad \text{with} \quad 
\bVl \doteq 
\sum_{\lap \in \Lambda} (V_{\lambda \frown \lap} + V_{\lap \frown \lambda})
\, .
$$ 
Since 
both states are represented by density matrices in the same normal 
representation of $\cK(\Lambda)$, we can proceed to the
weak closure of this algebra. 
Thus, we can make use of the perturbation theory for KMS states,  
developed by Araki \cite{Ar}, where we rely on the 
presentation given in \cite{DeJaPi}. Applying these results, we obtain
the following estimate for the norm difference between
the two states.

\begin{lemma} \label{l4.4} 
Let $\Lambda \subset \Z$, $\lambda \in \Lambda$, and $\beta > 0$.
Moreover, let  $\rho^\Lambda_\beta$, $\rho^{\Lml \, , \, \lambda} _\beta$
be the two states on~$\cK(\Lambda)$, defined above. Then
\begin{itemize}
\item[(i)] \ $\| \rho^\Lambda_\beta - \rho^{\Lml \, , \, \lambda} _\beta
\| \leq 2 \, e^{ \beta \, \| \sbVl \|/2 } \, 
(e^{ \beta \, \| \sbVl  \| /2} - 1) $. \
(Note that this bound does not depend on the size of $\Lambda$.)
\item[(ii)] \ The norm difference of the states
does not change if one extends them 
to the algebra $\, \cK(\Z) = \cK(\Lambda) \otimes 
\cK(\Z \backslash \Lambda)$, putting \ 
$\omega^\Lambda_\beta \doteq \rho^\Lambda_\beta \otimes \sigma^{\Z \backslash \Lambda}$
and $\omega^{\Lml \, , \, \lambda}_\beta 
\doteq \rho^{\Lml \, , \, \lambda}_\beta \otimes \sigma^{\Z \backslash \Lambda}$,
respectively. Here 
$\sigma^{\Z \backslash \Lambda}$ denotes the singular state which 
satisfies $\sigma^{\Z \backslash \Lambda} \upharpoonright \cC(\lap) = 0$
for $\lap \in \Z \backslash \Lambda$. 
\end{itemize}
\end{lemma}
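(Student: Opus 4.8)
The plan is to apply Araki's perturbation theory for KMS states in the von Neumann algebraic setting, exactly as set up in the paragraph preceding the lemma. Since $\rho^\Lambda_\beta$ is given by the density matrix $Z_\Lambda^{-1}\,e^{-\beta H_\Lambda}$ and $\rho^{\Lml\,,\,\lambda}_\beta$ by $Z_{\Lml\,,\,\lambda}^{-1}\,e^{-\beta H_{\Lml\,,\,\lambda}}$ with $H_{\Lml\,,\,\lambda}=H_\Lambda - \bVl$ in one and the same normal (Schr\"odinger) representation of $\cK(\Lambda)$, we pass to the weak closure $\cM \doteq \pi(\cK(\Lambda))''$, a type I factor. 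Both states are normal on $\cM$, and $\rho^{\Lml\,,\,\lambda}_\beta$ is the $(\alpha^{\Lml\,,\,\lambda}_\RR,\beta)$-KMS state obtained from the $\alpha^\Lambda_\RR$-KMS state $\rho^\Lambda_\beta$ by the \emph{bounded} perturbation $+\bVl$ of the generator (note $\bVl = \sum_{\lap\in\Lambda}(V_{\lambda\frown\lap}+V_{\lap\frown\lambda}) \in \cK(\Lambda)$ is a bounded self-adjoint operator, being a finite sum of bounded potentials). Thus the two states are precisely a KMS state and its Araki-perturbed partner for a bounded perturbation $P = -\bVl$ (or $+\bVl$, depending on sign conventions), and the quantitative stability estimate from \cite{Ar}, in the form presented in \cite{DeJaPi}, applies verbatim.

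For part (i), first I would recall the explicit Araki bound: if $\omega$ is a $(\tau,\beta)$-KMS state on a von Neumann algebra and $\omega^P$ is the perturbed KMS state for the bounded self-adjoint perturbation $P$, then $\|\omega - \omega^P\| \le 2\,e^{\beta\|P\|/2}\,(e^{\beta\|P\|/2}-1)$; this is the content of the relevant proposition in \cite{DeJaPi}. Applying this with $P = \bVl$ and $\|P\| = \|\bVl\|$ yields exactly the asserted inequality $\| \rho^\Lambda_\beta - \rho^{\Lml\,,\,\lambda}_\beta \| \le 2\,e^{\beta\|\sbVl\|/2}\,(e^{\beta\|\sbVl\|/2}-1)$. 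The key point worth spelling out is the $\Lambda$-independence of the bound: although $H_\Lambda$ and $H_{\Lml\,,\,\lambda}$ both grow with $\Lambda$, their \emph{difference} $\bVl$ involves only the potentials attached to the bonds at the single site $\lambda$ — there are at most $2d$ nearest neighbors, so $\|\bVl\| \le 2\cdot 2d\,\|V\|$ uniformly in $\Lambda$ — and Araki's estimate depends only on $\beta\|\bVl\|$, not on the unperturbed dynamics or the size of the system. This is the feature that makes the lemma useful for the thermodynamic limit later.

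For part (ii), I would argue that tensoring both states with one and the same singular state $\sigma^{\Z\backslash\Lambda}$ on $\cK(\Z\backslash\Lambda)$ cannot increase the norm distance, and in fact preserves it. The inequality $\|\omega^\Lambda_\beta - \omega^{\Lml\,,\,\lambda}_\beta\| \le \|\rho^\Lambda_\beta - \rho^{\Lml\,,\,\lambda}_\beta\|$ is immediate from the tensor-product structure $\cK(\Z) = \cK(\Lambda)\otimes\cK(\Z\backslash\Lambda)$: any element of norm $\le 1$ in the bigger algebra, evaluated in the difference $(\rho^\Lambda_\beta - \rho^{\Lml\,,\,\lambda}_\beta)\otimes\sigma^{\Z\backslash\Lambda}$, is controlled by the difference of the restrictions to $\cK(\Lambda)$. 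For the reverse inequality one simply restricts: given $A\in\cK(\Lambda)$ with $\|A\|\le 1$ nearly attaining $\|\rho^\Lambda_\beta-\rho^{\Lml\,,\,\lambda}_\beta\|$, the element $A\otimes 1$ has the same norm in $\cK(\Z)$ and $(\omega^\Lambda_\beta - \omega^{\Lml\,,\,\lambda}_\beta)(A\otimes 1) = (\rho^\Lambda_\beta - \rho^{\Lml\,,\,\lambda}_\beta)(A)$ since $\sigma^{\Z\backslash\Lambda}(1)=1$. Hence the two norm differences coincide.

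The main obstacle — really the only nontrivial point — is making sure that Araki's perturbation formalism genuinely applies here, i.e.\ that both $\rho^\Lambda_\beta$ and $\rho^{\Lml\,,\,\lambda}_\beta$ are normal states on the \emph{same} von Neumann algebra $\cM$ with the perturbation $\bVl$ lying in $\cM$ and bounded. This was already arranged in the setup paragraph: both density matrices live in the Schr\"odinger representation of the finite-dimensional-in-the-relevant-sense system $\cK(\Lambda)$, whose weak closure is a type I factor; $\bVl$ is a finite sum of the bounded operators $V_{\cdot\frown\cdot}\in\Co(\RR^d)$-functional calculus of position differences, hence a bounded self-adjoint element of $\cM$. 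Once that is noted, parts (i) and (ii) are essentially a citation of \cite{Ar,DeJaPi} plus the elementary tensor-product bookkeeping, and no further estimates are needed.
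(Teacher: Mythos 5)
Your route is the same as the paper's: realize both states as normal states in one and the same normal representation of $\cK(\Lambda)$, observe that the perturbation $\bVl$ is a bounded self-adjoint element whose norm does not grow with $\Lambda$, and invoke Araki's perturbation theory of KMS states in the form presented in \cite{DeJaPi}; part (ii) is tensor-product bookkeeping. The one genuine shortfall in (i) is that you quote the inequality $\|\omega-\omega^{P}\|\le 2\,e^{\beta\|P\|/2}(e^{\beta\|P\|/2}-1)$ as ``the content of the relevant proposition in \cite{DeJaPi}''. A state-norm estimate in exactly this form is not what the paper cites, and it is not simply quotable; the paper assembles it from vector-level statements: by \cite[Thm.~5.1]{DeJaPi} the perturbed state is the normalized vector state of a (non-normalized) vector $\Omega_{\sbVl}$ in the GNS space of $\rho^\Lambda_\beta$, the Golden--Thompson inequality gives $\|\Omega_{\sbVl}\|\le e^{\beta\|\sbVl\|/2}$, and \cite[Thm.~5.3]{DeJaPi} gives $\|\Omega_{\sbVl}-\Omega\|\le e^{\beta\|\sbVl\|/2}-1$; one then needs the short but nontrivial computation bounding $\|\rho^\Lambda_\beta-\rho^{\Lml\,,\,\lambda}_\beta\|$ by $\|\Omega-\Omega_{\sbVl}\|\,(1+\|\Omega_{\sbVl}\|)+\big|1-\|\Omega_{\sbVl}\|^2\big|$ and hence by the stated expression. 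Either supply this derivation or an exact reference for the norm inequality you use; without it the quantitative heart of (i) is unsupported. The rest of what you say about (i) --- boundedness of $\bVl$, membership in the algebra, $\Lambda$-independence, irrelevance of the sign of the perturbation --- is correct and matches the paper's setup.

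For (ii) your argument is essentially sound, but the inequality $\|(\rho^\Lambda_\beta-\rho^{\Lml\,,\,\lambda}_\beta)\otimes\sigma^{\Z\backslash\Lambda}\|\le\|\rho^\Lambda_\beta-\rho^{\Lml\,,\,\lambda}_\beta\|$ is not literally ``immediate from the tensor-product structure'': for a general bounded functional tensored with a state this requires an argument. Here one should note that the difference of the two states is hermitian, take its Jordan decomposition, and use that each positive part tensored with the state $\sigma^{\Z\backslash\Lambda}$ is again positive with unchanged norm (the tensor product is unique since the factors are type I), which yields the claim; your evaluation on elements $A\otimes 1$ then gives the reverse inequality. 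The paper instead obtains (ii) in one stroke from Lemma~\ref{l4.2}(i), since both extended states annihilate every $\cC(\Lap)$ meeting $\Z\backslash\Lambda$, so their difference is computed on $\cK(\Lambda)$ alone.
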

\begin{proof}
(i) Let $(\pi, \cH, \Omega)$
be the GNS representation of $\cK(\Lambda)$, induced by 
$\rho^\Lambda_\beta$. 
According to \cite[Thm.~5.1]{DeJaPi}, there exists then a non-zero vector 
$\Omega_{\sbVl} \in \cH$ such that 
$$
\rho^{\Lml \, , \, \lambda} _\beta (\, \bdot \,) =
\langle \Omega_{\sbVl} , \, \bdot \
\Omega_{\sbVl} \rangle / \| \Omega_{\sbVl} \|^2 \, .
$$
For its norm one has 
$\| \Omega_{\sbVl}  \| \leq \| e^{- \beta  \sbVl/ 2} \, \Omega \| 
\leq e^{\beta \|  \sbVl \| / 2}$ (Golden-Thomson inequality). 
Moreover, one obtains the bound 
$ \| \Omega_{\sbVl} - \Omega \| \leq (e^{ \beta \|  \sbVl \| / 2} - 1)$,
cf. \ \cite[Thm.~5.3]{DeJaPi}. These estimates are not optimal, but
they suffice for our purposes. For they imply
\begin{align*}
\| \, \rho^\Lambda_\beta - \rho^{\Lml \, , \, \lambda} _\beta \, \| 
& = \| \langle \Omega, \pi(\, \bdot \ )  \, 
\Omega \rangle -  \langle \Omega_{\sbVl}, \pi (\, \bdot \ ) \Omega_{\sbVl} 
\rangle / \| \Omega_{\sbVl} \|^{-2} \, \| \\
& \leq  \| \Omega -  \Omega_{\sbVl} \| \, \big(1 + \| \Omega_{\sbVl} \| \big)
+ \big| 1 - \|\Omega_{\sbVl} \|^2 \big| \\
& \leq 2 \,  \| \Omega -  \Omega_{\sbVl} \| 
\, \big(1 + \| \Omega_{\sbVl} \| \big) \, ,
\end{align*}
from which the first statement follows. 

(ii) The second statement is an immediate consequence of Lemma \ref{l4.2}(i),
completing the proof of this lemma.
\qed \end{proof}

Let $\beta > 0$ be fixed and let 
$K_\beta$ be the weak-*-closure of the convex hull of the corresponding 
KMS states on $\cK(\Z)$ for the dynamics $\alpha_\RR$. 
Recalling that $(\cK(\Z), \alpha_\RR)$  
is a C*-dynamical system, it follows from 
\cite[Prop.\ 5.3.30]{BrRo} that all extremal points of $K_\beta$ are 
primary (factorial) KMS states, which are
disjoint. In our proof that KMS states exist in $K_\beta$ which are 
normal at all points of $\Z$ we will rely on the following basic fact. 

\begin{lemma} \label{l4.5}
Let $\omega$ be a primary state on $\cK(\Z)$ with factorial 
GNS representation $(\pi, \cH, \Omega)$. There
is a (empty, finite, or infinite) set 
$\bLambda \subseteq \Z$  such that 
$\omega$ is normal at all points $\lambda \in \bLambda$ and
singular at all points $\lap \in \Z \backslash \bLambda$.
\end{lemma}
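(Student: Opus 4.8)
The plan is to show that normality at a point is a "0-1" property for a primary state, and then to take $\bLambda$ to be the set of points where $\omega$ is normal. First I would observe that for any single point $\lambda \in \Z$, the restriction $\omega \upharpoonright \cC(\lambda)$ is either faithful-on-approximate-identities (normal) or annihilates $\cC(\lambda)$ entirely (singular) — there is no intermediate behaviour. This follows because $\cC(\lambda)$ is isomorphic to the compacts on a separable Hilbert space, so in the GNS representation $\pi(\cC(\lambda))''$ is a type~I factor tensored against the commutant; the central projection $E(\lambda) \doteq \text{s-}\lim_\iota \pi(E_\iota(\lambda))$ onto the essential subspace of $\pi(\cC(\lambda))$ is central in $\pi(\cK(\Z))''$, because $\cC(\lambda)$ is an ideal in $\cK(\lambda)$ and $\cK(\lambda)$ commutes with $\cC(\mu)$ for $\mu \neq \lambda$. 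Since $\pi$ is factorial, $E(\lambda)$ is either $0$ or $1$; in the first case $\pi(\cC(\lambda)) = \pi(E(\lambda)\cC(\lambda)) = 0$, so $\omega$ is singular at $\lambda$; in the second case $\lim_\iota \pi(E_\iota(\lambda)) = 1$, and by the argument already given in the proof of Lemma~\ref{l4.1}(i) any other approximate identity also converges strongly to~$1$, hence $\omega(E_\iota) \to 1$ and $\omega$ is normal at~$\lambda$.

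Next I would simply define
$$
\bLambda \doteq \{ \lambda \in \Z : \omega \text{ is normal at } \lambda \} \, .
$$
By the dichotomy just established, for every $\lambda \in \Z \backslash \bLambda$ the state $\omega$ is singular at $\lambda$. It remains only to promote "normal at each point of $\bLambda$" to "normal at every finite subset of $\bLambda$" — but this is exactly the content of Lemma~\ref{l4.1}(i): if $\omega$ is normal at all points of a finite set $\Lambda$, then it is normal at $\Lambda$. Applying this to every finite $\Lambda \subset \bLambda$ yields that $\omega$ is locally normal at~$\bLambda$ in the sense of the definition. Thus $\bLambda$ has all the required properties, and it may of course be empty (the totally singular state), finite, or infinite.

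The only genuinely substantive point is the factoriality argument showing that the projection $E(\lambda)$ is central in $\pi(\cK(\Z))''$, so that the $0$-$1$ law applies; everything else is bookkeeping or a direct citation of Lemma~\ref{l4.1}. The subtlety there is that $\cC(\lambda)$ is a (two-sided) ideal only in $\cK(\lambda)$, not in all of $\cK(\Z)$, so one must invoke separately that operators localized at $\mu \neq \lambda$ commute with $\cC(\lambda)$ — hence their images commute with $E(\lambda)$ — while $\pi(\cK(\lambda))$ commutes with $E(\lambda)$ because $E(\lambda)$ is the support projection of an ideal in $\cK(\lambda)$. Together these give $E(\lambda) \in \pi(\cK(\Z))'$, and since $E(\lambda)$ is evidently also in $\pi(\cC(\lambda))'' \subseteq \pi(\cK(\Z))''$, centrality follows; factoriality then forces $E(\lambda) \in \{0,1\}$. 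I expect this to be the main obstacle only in the sense of requiring care with the ideal/commutation structure — there is no hard estimate or limiting argument involved.
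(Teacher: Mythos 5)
Your proposal is correct and follows essentially the same route as the paper: for each lattice point the strong limit of the projections of an approximate identity of $\cC(\lambda)$ is shown to be central in $\pi(\cK(\Z))''$ (via the ideal structure in $\cK(\lambda)$ and commutativity at distinct points), factoriality forces it to be $0$ or $1$, and $\bLambda$ is taken to be the set of points where the limit is $1$. Your write-up merely spells out the centrality argument and the passage to arbitrary approximate identities, which the paper delegates to the proof of Lemma~\ref{l4.1}(i).
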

\begin{proof}
Let $\lambda \in \Z$, and let the projections 
$\{ E_\iota(\lambda) \in \cC(\lambda) \}_{\iota \in \II}$ form an 
approximate identity of $\cC(\lambda)$. 
Making use of the arguments in the proof of 
Lemma \ref{l4.1}(i),  one shows
that $\lim_\iota  \pi(E_\iota(\lambda))$ lies in the
center of $\pi(\cK(\Z))^{\prime \prime}$. Since $\pi$ is factorial
it follows that $\lim_\iota  \pi(E_\iota(\lambda)) \in \{0,1 \}$.
Let $\bLambda = \{ \lambda \in \Z : 
\lim_\iota  \pi(E_\iota(\lambda)) = 1 \}$. Then 
$\lim_\iota  \pi(E_\iota(\lap)) = 0$ for $\lap \in 
\Z \backslash \bLambda$, completing the proof. 
\qed \end{proof}

We are equipped now with the tools for the  proof that there exist 
primary states in $K_\beta$ which are locally normal at all points of $\Z$.
At this point we have to assume that $\beta > 0$ is sufficiently small,
\ie the temperature is sufficiently high. For we will need that the upper 
bound on the norm difference of states, given in Lemma \ref{l4.4},
is less than $1$. This is accomplished if \ 
$\beta \, \| \bVl \| < 2 \, | \ln(\sqrt{3} - 1) |$ for $\lambda \in \Z$. 
Note that for given potential $V$
and dimension $d$ of the lattice one has~$\| \bVl \| \leq 2^d \, \| V \|$.

\begin{lemma} \label{l4.6} 
Let $V$, $\beta$ be given, where  
$0 < \beta \, \| V \| < 2^{1-d} \,  | \ln(\sqrt{3} - 1) | $, 
and let $\Lambda_m$, $m \in \NN$, be 
an increasing family of subsets, covering $\Z$.
Moreover, let $K_{\beta, n} \subset K_\beta$ be the weak-$^*$-closures of
the convex hull of states on  $\cK(\Z)$ given by 
$
\{ \sum_{m \geq n} p_m \, \omega_\beta^{\Lambda_m} :
p_m \geq 0, \ \sum_{m = n}^\infty p_m = 1 \} \, , \ 
n \in \NN \, 
$.  
Their intersection $\bigcap_{\, n \in \NN} K_{\beta, n}$ is not empty and its
extremal points are primary KMS states which are locally normal on~$\Z$.
\end{lemma}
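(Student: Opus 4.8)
The plan is to show two things: that the nested intersection $\bigcap_n K_{\beta,n}$ is nonempty, and that every extremal point of it is a primary KMS state which is locally normal on all of $\Z$. The first part is a routine compactness argument: each $K_{\beta,n}$ is weak-$^*$-closed by definition, they are nested ($K_{\beta,n+1} \subseteq K_{\beta,n}$ since the defining set of states for $n+1$ is a subset of that for $n$), they are nonempty, and they all sit inside the weak-$^*$-compact state space of the unital C*-algebra $\cK(\Z)$. Hence the finite intersection property gives $\bigcap_n K_{\beta,n} \neq \emptyset$. Moreover each $K_{\beta,n}$, being a weak-$^*$-closed convex subset of $K_\beta$, still consists of KMS states for $(\cK(\Z),\alpha_\RR)$ at inverse temperature $\beta$ (the KMS property is preserved under convex combinations and weak-$^*$-limits), so the intersection consists of KMS states; by \cite[Prop.~5.3.30]{BrRo} its extremal points are then primary.

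The heart of the matter is local normality, and this is where Lemma~\ref{l4.4} and the high-temperature hypothesis enter. Let $\omega \in \bigcap_n K_{\beta,n}$ be extremal, hence primary, and apply Lemma~\ref{l4.5}: there is a set $\bLambda \subseteq \Z$ such that $\omega$ is normal at every $\lambda \in \bLambda$ and singular at every $\lap \in \Z \backslash \bLambda$. The goal is to rule out $\bLambda \neq \Z$, i.e.\ to show no point can be a singular point. Suppose, for contradiction, that some $\mu \in \Z \backslash \bLambda$. The strategy is to estimate $|\omega(E_\iota(\mu))|$ for an approximate identity $\{E_\iota(\mu)\}$ of $\cC(\mu)$ and derive that this must be close to $1$, contradicting singularity (which forces it to be $0$). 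Since $\omega \in K_{\beta,n}$ for every $n$, one can write $\omega$ as a weak-$^*$-limit of finite convex combinations $\sum_{m \geq n} p_m \, \omega_\beta^{\Lambda_m}$ with the boxes $\Lambda_m$ eventually all containing $\mu$. For each such box, Lemma~\ref{l4.4}(i) compares $\omega_\beta^{\Lambda_m}$ with the state $\omega_\beta^{\Lambda_m \backslash \mu \, , \, \mu}$ in which the bond between $\mu$ and the rest is severed: the norm difference is at most $2\, e^{\beta\|\bVl\|/2}(e^{\beta\|\bVl\|/2}-1)$ evaluated at the point $\mu$ — and here one uses $\|\bVl\| \leq 2^d\|V\|$ together with the hypothesis $\beta\|V\| < 2^{1-d}|\ln(\sqrt 3 - 1)|$ to make this bound strictly less than $1$; indeed the condition $\beta\|\bVl\| < 2|\ln(\sqrt 3 - 1)|$ is exactly what forces $2 e^{x/2}(e^{x/2}-1) < 1$ with $x = \beta\|\bVl\|$, since then $e^{x/2} < \sqrt{3}$. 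In the severed state $\mu$ factorizes off and carries the one-site Gibbs state $\rho_\beta^\mu$, which \emph{is} normal at $\mu$, so $\omega_\beta^{\Lambda_m\backslash\mu\,,\,\mu}(E_\iota(\mu)) \to 1$.

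Putting this together: for the severed states one has $\liminf_\iota \omega_\beta^{\Lambda_m\backslash\mu,\mu}(E_\iota(\mu)) = 1$ uniformly in $m$ (the one-site state does not depend on $m$ at all, once $\mu$ is excised), hence $\liminf_\iota \omega_\beta^{\Lambda_m}(E_\iota(\mu)) \geq 1 - \eta$ for all sufficiently large $m$, where $\eta \doteq 2 e^{\beta\|\sbVl\|/2}(e^{\beta\|\sbVl\|/2}-1) < 1$. Taking convex combinations preserves this, and passing to the weak-$^*$-limit preserves it as well (for each fixed $\iota$, $E_\iota(\mu)$ is a fixed element of $\cK(\Z)$, so $\omega(E_\iota(\mu)) = \lim \sum p_m \omega_\beta^{\Lambda_m}(E_\iota(\mu))$, and then one takes $\liminf_\iota$). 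Hence $\liminf_\iota \omega(E_\iota(\mu)) \geq 1 - \eta > 0$, which contradicts the singularity of $\omega$ at $\mu$, namely $\omega\upharpoonright\cC(\mu) = 0$. Therefore $\Z\backslash\bLambda = \emptyset$, i.e.\ $\bLambda = \Z$, and $\omega$ is normal at every point of $\Z$, hence locally normal on $\Z$ by Lemma~\ref{l4.1}(i).

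The main obstacle I anticipate is the bookkeeping in the limiting argument: one must be careful about the order of the various limits ($\iota$ over the approximate identity, $m$ over boxes, and the weak-$^*$-limit defining $\omega$ as a point of $K_{\beta,n}$), and about the fact that elements of $K_{\beta,n}$ are weak-$^*$-limits of \emph{finite} convex combinations rather than the combinations themselves. The clean way around this is to observe that the inequality $\liminf_\iota \psi(E_\iota(\mu)) \geq 1-\eta$ defines a weak-$^*$-closed, convex set of states (for each fixed $\iota$ the condition $\psi(E_\iota(\mu)) \geq \text{something}$ is weak-$^*$-closed and convex, and $\liminf$ of such conditions inherits both properties), so it suffices to verify it on the generating states $\omega_\beta^{\Lambda_m}$ with $m \geq n$ and $\mu \in \Lambda_m$ — which is precisely the content of Lemma~\ref{l4.4} combined with the normality of the single-site Gibbs state. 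One also needs $\eta < 1$ strictly, which is exactly the role of the high-temperature restriction on $\beta$; without it the argument gives nothing.
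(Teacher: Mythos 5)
Your proof is correct and follows essentially the same route as the paper's: nonemptiness by weak-$^*$ compactness of the nested closed sets, primarity of extremal points via \cite[Prop.~5.3.30]{BrRo}, and local normality by playing the norm bound of Lemma~\ref{l4.4} (comparison with the severed state, whose restriction to the site is the normal one-site Gibbs state) off the normal/singular dichotomy of Lemma~\ref{l4.5}; your approximate-identity contradiction is just the paper's lower bound $\|\omega_\beta\upharpoonright\cC(\lambda)\|\geq 1-c>0$ in equivalent form, since for positive functionals the norm of the restriction to $\cC(\lambda)$ is the limit along an approximate identity. Two small slips, neither affecting validity: the hypothesis yields $e^{\beta\,2^{d}\|V\|/2}<(\sqrt{3}+1)/2$ (not merely $<\sqrt{3}$), which is exactly the threshold making $\eta<1$; and weak-$^*$-closedness should be claimed for the fixed-$\iota$ conditions $\psi(E_\iota(\mu))\geq\omega_\beta^{\mu}(E_\iota(\mu))-\eta$ (as you in effect do), not for the $\liminf$ condition itself, since a superlevel set of a supremum of continuous functions need not be weak-$^*$ closed.
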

\begin{proof}
We begin by noting that $\bigcap_{n \in \NN} K_{\beta, n}$ is not empty
since it is closed in the weak-*-topology and $1 \in \cK(\Z)$.
Moreover, by central decomposition \cite[Prop.~5.3.30]{BrRo} one can 
decompose the states in this set into primary, hence extremal 
KMS states for the given
$\beta$. Next, let $\lambda \in \Z$ and let $n \in \NN$ be sufficiently large 
such that $\lambda \in \Lambda_m$, $m \geq n$. Then, for such~$m$ 
and corresponding primary states $\omega_\beta^{\Lambda_m}$,  one has 
according to Lemma \ref{l4.4}(ii) and the choice of~$\beta \, \| V \|$,
bearing also in mind that by construction 
$\omega_\beta^{\Lambda_m \backslash \lambda \, , \, \lambda} \upharpoonright 
\cK(\lambda) = \omega_\beta^\lambda  \upharpoonright 
\cK(\lambda)$,  
$$
\| (\omega_\beta^{\Lambda_m} - \omega_\beta^\lambda) \upharpoonright 
\cK(\lambda) \| 
\leq \| \omega_\beta^{\Lambda_m} - 
\omega_\beta^{\Lambda_m \backslash \lambda \, , \, \lambda}  \| \leq c < 1 \, . 
$$ 
It follows that for any convex combination of the states 
$\omega_\beta^{\Lambda_m}$, $m \geq n$, one has 
$$
\| \, 
(\sum_{m \geq n} p_m \, \omega_\beta^{\Lambda_m} - \omega_\beta^\lambda) \upharpoonright
\cC(\lambda) \| 
\leq \sum_{m \geq n} p_m \, \|(\omega_\beta^{\Lambda_m} - \omega_\beta^\lambda) 
\upharpoonright \cC(\lambda) \|  \leq c < 1 \, ,
$$
and this estimate holds for their weak-*-limit points in $K_{\beta,n}$,
as well. 

Now let $\omega_\beta \in \bigcap_{n \in \NN} K_{\beta, n}$ be any primary KMS
state. It follows from this estimate that 
$\| \omega_\beta \upharpoonright \cC(\lambda) \| 
\geq \|\omega_\beta^\lambda  \upharpoonright \cC(\lambda) \| 
- \| (\omega_\beta - \omega_\beta^\lambda)  \upharpoonright \cC(\lambda) \| 
> 0$, 
$\lambda \in \Z$. Hence $\omega_\beta \upharpoonright \cC(\lambda)$ does 
not vanish at any point $\lambda \in \Z$ and therefore is locally normal
according to Lemma \ref{l4.5}.
\qed \end{proof}

At this point we can finally 
proceed from the algebra $\cK(\Z)$ back to the full
resolvent algebra~$\cR(\Z)$, making use of Lemma \ref{l4.1}(i). We
thereby arrive at the main result of this section.

\begin{proposition} \label{p4.7}
Let $V \in \Co(\RR^d)$ be any interaction potential between 
nearest neighbors on $\Z$ and 
let $0 < \beta \, \| V \| < 2^{1 - d} \, |\ln \, (\sqrt{3} - 1 ) \, |$. 
There exist regular and primary KMS states 
$\omega_\beta$ on the resolvent algebra $\cR(\Z)$ for the 
corresponding dynamics $\alpha_\RR$. 
\end{proposition}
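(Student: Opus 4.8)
The plan is to obtain the asserted states on $\cR(\Z)$ by promoting the KMS states on $\cK(\Z)$ supplied by Lemma~\ref{l4.6}. So fix, for the given $V$ and $\beta$, a primary and locally normal KMS state $\omega_\beta$ on $\cK(\Z)$ for $\alpha_\RR$, with GNS data $(\pi,\cH,\Omega)$. Since $\omega_\beta$ is locally normal it is normal at every finite $\Lambda\subset\Z$, so Lemma~\ref{l4.1}(i) provides, for each such $\Lambda$, a regular representation $\tilde\pi_\Lambda$ of $\cR(\Lambda)$ on $\cH$ extending $\pi\upharpoonright\cC(\Lambda)$ in the strong operator topology. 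Because $\cC(\Lambda_1)$ is an ideal in $\cR(\Lambda_1)$ and $\omega_\beta$ is normal at $\Lambda_2\backslash\Lambda_1$ whenever $\Lambda_1\subseteq\Lambda_2$, one checks $\tilde\pi_{\Lambda_2}\upharpoonright\cR(\Lambda_1)=\tilde\pi_{\Lambda_1}$; these extensions therefore glue, over the C*-inductive limit, to a single representation $\tilde\pi$ of $\cR(\Z)$ extending $\pi$, and $\tilde\pi$ is regular because each of the basic resolvents in \eqref{r2.1} lies in some $\cR(\Lambda)$. Put $\tilde\omega_\beta\doteq\langle\Omega,\tilde\pi(\,\bdot\,)\,\Omega\rangle$; since $\Omega$ is cyclic for $\tilde\pi(\cK(\Z))\subseteq\tilde\pi(\cR(\Z))$, the triple $(\tilde\pi,\cH,\Omega)$ is the GNS representation of $\tilde\omega_\beta$, so this state is regular, and it is primary because $\cK(\Z)$ is strong-operator dense in $\cR(\Z)$ (cf.\ \cite[Thm.~5.4]{BuGr2}), whence $\tilde\pi(\cR(\Z))^{\prime\prime}=\tilde\pi(\cK(\Z))^{\prime\prime}=\pi(\cK(\Z))^{\prime\prime}$ is a factor.

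The substantive point is that $\tilde\omega_\beta$ is an $(\alpha_\RR,\beta)$-KMS state on $\cR(\Z)$. Being a KMS state, $\omega_\beta$ is $\alpha_\RR$-invariant, and its GNS vector is cyclic and separating for $\pi(\cK(\Z))^{\prime\prime}=\tilde\pi(\cR(\Z))^{\prime\prime}$; since $(\cK(\Z),\alpha_\RR)$ is a C*-dynamical system by Proposition~\ref{p3.4}, $\alpha_\RR\upharpoonright\cK(\Z)$ is implemented in $(\pi,\cH)$ by a strongly continuous unitary group $V(\cdot)$ fixing $\Omega$, and uniqueness of the implementing unitaries together with the KMS condition identifies the group $s\mapsto V(-\beta s)$ with the modular group $s\mapsto\Delta^{is}$ of $(\tilde\pi(\cR(\Z))^{\prime\prime},\Omega)$. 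I would then show that the same $V(\cdot)$ implements $\alpha_\RR$ on the larger algebra $\tilde\pi(\cR(\Z))$: for $R\in\cR(\Lambda)$ and an approximate identity $\{E_\iota(\Lambda)\}\subset\cC(\Lambda)$ of increasing projections one has $\tilde\pi(R)=\text{s-}\lim_\iota\tilde\pi(R\,E_\iota(\Lambda))$ with $R\,E_\iota(\Lambda)\in\cC(\Lambda)\subset\cK(\Z)$, hence
\[
V(t)\,\tilde\pi(R)\,V(t)^{*}=\text{s-}\lim_\iota\tilde\pi\big(\alpha_t(R\,E_\iota(\Lambda))\big)=\tilde\pi(\alpha_t(R))\cdot\text{s-}\lim_\iota\tilde\pi\big(\alpha_t(E_\iota(\Lambda))\big),
\]
and the final strong limit is $1$ since the $\alpha_t(E_\iota(\Lambda))$ form an increasing net of projections with $\tilde\omega_\beta(\alpha_t(E_\iota(\Lambda)))=\omega_\beta(E_\iota(\Lambda))\to1$ (using $\alpha_\RR$-invariance and normality of $\omega_\beta$ at $\Lambda$), so their supremum $P$ obeys $(1-P)\Omega=0$ with $1-P\in\tilde\pi(\cR(\Z))^{\prime\prime}$, forcing $P=1$ because $\Omega$ is separating. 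Consequently $\tilde\pi\circ\alpha_{-\beta s}\circ\tilde\pi^{-1}=\mathrm{Ad}\,V(-\beta s)=\mathrm{Ad}\,\Delta^{is}$ on $\tilde\pi(\cR(\Z))$, i.e.\ the modular group of $(\tilde\pi(\cR(\Z))^{\prime\prime},\Omega)$ restricts on $\tilde\pi(\cR(\Z))$ precisely to the rescaled dynamics; by the converse of the modular characterisation of KMS states (cf.\ \cite{BrRo})---the functions $t\mapsto\tilde\omega_\beta(A\,\alpha_t(B))$ being continuous because $\alpha_\RR$ is implemented by the strongly continuous group $V(\cdot)$, even though $\alpha_\RR$ need not act norm-continuously on $\cR(\Z)$---this says exactly that $\tilde\omega_\beta$ is an $(\alpha_\RR,\beta)$-KMS state on $\cR(\Z)$.

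The main obstacle is precisely this transport of the KMS property from $\cK(\Z)$ to $\cR(\Z)$: since $\cK(\Z)$ is only weakly, not norm, dense in $\cR(\Z)$, one cannot simply invoke the usual density arguments for KMS states and must work at the level of von Neumann algebras, exploiting that $\cK(\Z)$ and $\cR(\Z)$ have the same weak closure in $\tilde\pi$, that $\omega_\beta$ is $\alpha_\RR$-invariant, and that its GNS vector is cyclic and separating. By contrast, the gluing of the representations $\tilde\pi_\Lambda$ and the inheritance of regularity and primarity are comparatively routine, using only Lemma~\ref{l4.1}(i), the ideal structure of the algebras $\cR(\Lambda)$, and the strong-operator density of $\cK(\Z)$ in $\cR(\Z)$.
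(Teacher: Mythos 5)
Your proposal is correct and follows essentially the same route as the paper: take the locally normal primary KMS state from Lemma~\ref{l4.6}, use Lemma~\ref{l4.1}(i) to extend its GNS representation to a regular representation of the inductive limit $\cR(\Z)$, and exploit the unitary implementation of $\alpha_\RR$ in that representation to carry the KMS property over to the extension. Your explicit verification of the KMS condition on $\cR(\Z)$ via the approximate-identity computation and the identification of the implementing group with the modular group of $(\tilde\pi(\cR(\Z))^{\prime\prime},\Omega)$ fills in details which the paper's proof leaves implicit.
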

\begin{proof}
According to Lemma \ref{l4.6} there exists some locally normal 
and primary KMS state $\omega_\beta$ on $\cK(\Z)$ with 
GNS representation $(\pi_\beta, \cH_\beta, \Omega_\beta)$. Since the 
state satisfies the KMS condition, there exists on $\cH_\beta$ 
a continuous unitary representation $U_\beta(t)$ of the time translations
such that $U_\beta(t) \, \pi_\beta(\, \bdot \,) =
\pi_\beta \circ \alpha_t ( \, \bdot \, ) \, U_\beta(t)$, \ie 
the representations $\pi_\beta$ and $\pi_\beta \circ \alpha_t$
are unitarily equivalent,  $t \in \RR$. Moreover, the state 
$\omega_\beta$ and hence $\pi_\beta$ is locally
normal, so the same is true for the representations 
$\pi_\beta \circ \alpha_t$, $t \in \RR$. We can therefore 
uniquely extend these representations to regular representations of 
the  C*-inductive limit $\cR(\Z)$, 
formed by the algebras $\cR(\Lambda)$, $\Lambda \subset \Z$, 
cf. Lemma \ref{l4.1}(i). 
Since $\cR(\Z)$ is stable under the action of $\alpha_\RR$,
these extensions are unitarily equivalent with the same 
unitary intertwiners $U_\beta(t)$, $t \in \RR$. 
The statement then follows. 
\qed \end{proof}

\section{Ground states}
\setcounter{equation}{0}

Having established the existence of regular KMS states for large 
temperatures on the resolvent algebra, let us turn now to the opposite 
end of the temperature scale, \ie the ground states. The construction of
such states proceeds in exactly the same manner as that of KMS states 
with one major difference: The perturbation theory 
underlying Lemma \ref{l4.4} is not applicable in this 
case, so one needs a different strategy in order to establish regularity. 
We will rely here on arguments, given in \cite[Lem.~7.3]{BuGr2} for 
one-dimensional lattice systems, which work in the 
present case, as well.  

We proceed again from the C*-dynamical system $(\cK(\Z), \alpha_\RR)$
and consider regular ground 
states $\rho_\infty^\Lambda$ on $\cK(\Lambda)$
for the approximating dynamics 
$\alpha_\RR^\Lambda \upharpoonright \cK(\Lambda)$, $\Lambda \subset \Z$,
corresponding to 
zero temperature $\beta = \infty$. These states can be represented
in the Schr\"odinger representation by normalized eigenvectors $\Omega_\Lambda$
for the generators of the dynamics, 
\begin{equation}
H_\Lambda = (1/2) \sum_{\lambda \in \Lambda} \, (P_\lambda^2 + \varpi^2 \, 
Q_\lambda^2) + \! 
\sum_{\lsml \in \Lambda \times \Lambda} \! V(Q_\lap - Q_\lapp) \, - \, E_\Lambda \, 1 \, ,
\end{equation} 
where the constants $E_\Lambda$ are chosen in such a way that 
$H_\Lambda \Omega_\Lambda = 0$, $\Lambda \in \Z$; note 
that $H_\Lambda$ is positive and has discrete spectrum. 
The states $\rho_\infty^\Lambda$ are extended to the full algebra
$\cK(\Z) = \cK(\Lambda) \otimes \cK(\Z \backslash \Lambda)$, 
putting $\omega_\infty^\Lambda \doteq \rho_\infty^\Lambda
\otimes \sigma^{\Z \backslash \Lambda}$, where 
$\sigma^{\Z \backslash \Lambda}$ is the singular state, invented in the 
preceding section. 

Now Lemmas \ref{l4.1}, \ref{l4.2} and \ref{l4.5} are applicable without 
changes in the present context; in Lemma~\ref{l4.3} one has to 
put $\beta = \infty$ and, in its part (iii), to replace the term 
``primary state'' by the more specific term ``pure state''. 
Thus the states $\omega_\infty^\Lambda$, $\Lambda \subset \Z$,
are all pure ground states for the C*-dynamical system 
$(\cK(\Z), \alpha_\RR)$. In analogy to the preceding discussion,
we consider the weak-*-closed convex (hence compact) set $K_\infty$
formed by these ground states and note that all of its extremal points
are pure ground states \mbox{\cite[Prop.~5.3.37]{BrRo}}. 

For the proof that $K_\infty$ contains locally normal states, 
we make use of the fact that the operators 
$(\mu 1 + H_\lambda)^{-1}$, $\mu > 0$, are compact in the 
Schr\"odinger representation of $\cC(\lambda)$ 
and hence are elements of these algebras, $\lambda \in \Z$.
The following statement replaces Lemma \ref{l4.4} in case 
of ground states.

\begin{lemma}
Let $\Lambda \subset \Z$ and $\lambda \in \Lambda$. Then
$$
\omega_\infty^\Lambda((\mu 1 + H_\lambda)^{-1}) \geq   
(\mu + 2^{d+2} \|V \|)^{-1} \, , \quad \text{where} \quad
\mu > 0 \, .
$$
(Note that the lower bound does not depend on the size of $\Lambda$.)
\end{lemma}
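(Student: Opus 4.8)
The plan is to reduce the asserted inequality to an upper bound on the single–site mean energy $\langle \Omega_\Lambda, H_\lambda \, \Omega_\Lambda \rangle$ in the finite–volume ground state, and to obtain that bound by a one–sided Rayleigh--Ritz comparison of the ground state energies of $H_\Lambda$ and $H_{\Lml}$.

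First I would observe that, since $\omega_\infty^\Lambda = \rho_\infty^\Lambda \otimes \sigma^{\Z \backslash \Lambda}$ and $(\mu 1 + H_\lambda)^{-1} \in \cC(\lambda) \subset \cK(\Lambda)$ for $\mu > 0$, one has $\omega_\infty^\Lambda((\mu 1 + H_\lambda)^{-1}) = \langle \Omega_\Lambda, (\mu 1 + H_\lambda)^{-1} \Omega_\Lambda \rangle$ in the Schr\"odinger representation, where $\Omega_\Lambda$ is the ground state vector of $H_\Lambda$. As $H_\lambda \geq 0$ and $\tau \mapsto (\mu + \tau)^{-1}$ is convex on $[0,\infty)$, Jensen's inequality for the probability measure $\langle \Omega_\Lambda, dE_{H_\lambda}(\tau)\, \Omega_\Lambda \rangle$ gives
\[
\omega_\infty^\Lambda\big((\mu 1 + H_\lambda)^{-1}\big) \;\geq\; \big( \mu + \langle \Omega_\Lambda, H_\lambda \, \Omega_\Lambda \rangle \big)^{-1} \, ,
\]
the mean value on the right being finite because $\Omega_\Lambda$ lies in the domain of $H_\Lambda$, hence in the form domain of $H_\lambda$ (the potential $V$ being bounded). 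It therefore suffices to prove $\langle \Omega_\Lambda, H_\lambda \, \Omega_\Lambda \rangle \leq 2^{\, d + 2}\, \| V \|$.

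To this end I would pass to the unshifted Hamiltonian $\widehat H_\Lambda \doteq H_\Lambda + E_\Lambda\, 1$ and decouple the site $\lambda$, writing $\widehat H_\Lambda = H_\lambda^0 + \bVl + G_{\Lml}$, where $H_\lambda^0 \doteq (1/2)(\biP_\lambda^2 + \varpi^2 \biQ_\lambda^2) = H_\lambda + (d\varpi/2)\, 1$ is the free single–site Hamiltonian, $\bVl$ is the sum of the nearest neighbor interaction terms attached to $\lambda$ as in Lemma \ref{l4.4}, and $G_{\Lml} \doteq \widehat H_\Lambda - H_\lambda^0 - \bVl = H_{\Lml} + E_{\Lml}\, 1$ acts only on the sites of $\Lml$, so that $G_{\Lml} \geq E_{\Lml}\, 1$. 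Testing $\widehat H_\Lambda$ with the product vector $\psi_\lambda \otimes \Omega_{\Lml}$, where $\psi_\lambda$ is the oscillator ground state with $H_\lambda^0 \psi_\lambda = (d\varpi/2)\,\psi_\lambda$, the Rayleigh--Ritz principle gives $E_\Lambda \leq d\varpi/2 + E_{\Lml} + \| \bVl \|$. On the other hand, from $H_\Lambda \Omega_\Lambda = 0$, i.e.\ $\langle \Omega_\Lambda, \widehat H_\Lambda \Omega_\Lambda \rangle = E_\Lambda$, together with $\langle \Omega_\Lambda, G_{\Lml}\, \Omega_\Lambda \rangle \geq E_{\Lml}$ and $\langle \Omega_\Lambda, \bVl\, \Omega_\Lambda \rangle \geq - \| \bVl \|$, one obtains
\[
\langle \Omega_\Lambda, H_\lambda^0\, \Omega_\Lambda \rangle = E_\Lambda - \langle \Omega_\Lambda, \bVl\, \Omega_\Lambda \rangle - \langle \Omega_\Lambda, G_{\Lml}\, \Omega_\Lambda \rangle \leq E_\Lambda + \| \bVl \| - E_{\Lml} \leq d\varpi/2 + 2\, \| \bVl \| \, .
\]
Subtracting $d\varpi/2$ and invoking the bound $\| \bVl \| \leq 2^{\, d}\, \| V \|$ recorded before Lemma \ref{l4.6}, this yields $\langle \Omega_\Lambda, H_\lambda\, \Omega_\Lambda \rangle \leq 2\, \| \bVl \| \leq 2^{\, d + 1}\, \| V \| \leq 2^{\, d + 2}\, \| V \|$, which together with the Jensen estimate proves the claim.

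I do not expect a genuine obstacle here. The only points requiring some care are the manipulations with the unbounded operators $\widehat H_\Lambda$ and $G_{\Lml}$ — all harmless, since $V$ is bounded, so $\bVl$ is bounded and $G_{\Lml}$ differs from the Hamiltonian $H_{\Lml}$ by a scalar — and checking that the combinatorial factor in the estimate of $\| \bVl \|$ is compatible with the constant $2^{d+2}$ of the statement (it is, with room to spare). The conceptual content is just the one–sided Rayleigh--Ritz estimate $E_\Lambda \leq E_{\Lml} + d\varpi/2 + \| \bVl \|$, which isolates the energetic cost of the interaction at the site $\lambda$ uniformly in the size of $\Lambda$.
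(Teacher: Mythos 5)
Your proof is correct, and it reaches the paper's constant by a slightly different mechanism. The energetic core is the same as in the paper: you decompose the unshifted Hamiltonian as $H_\lambda^0 + \bVl + G_{\Lml}$ and compare $E_\Lambda$ with $E_{\Lml} + d\varpi/2$ by testing with the decoupled product ground state, which is precisely the paper's estimate $0 \leq \langle \Omega, H_\Lambda \Omega \rangle \leq 2^{d+1}\|V\| + (E_{\Lml} + E_\lambda - E_\Lambda)$ in different clothing. The routes then diverge: the paper promotes this to the operator inequality $H_\Lambda \geq H_\lambda - 2^{d+2}\|V\|\,1$, applies operator monotonicity of the inverse, and evaluates in $\Omega_\Lambda$, where $(\nu 1 + H_\Lambda)^{-1}$ acts as $\nu^{-1}$; you instead extract the expectation bound $\langle \Omega_\Lambda, H_\lambda \Omega_\Lambda \rangle \leq 2\|\bVl\|$ and convert it into the resolvent bound via Jensen's inequality for the spectral measure of $H_\lambda \geq 0$ in the vector state $\Omega_\Lambda$. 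Both conversions are sound and give the same constant; note only that the paper's own Sec.~5 proof uses the cruder count $\|\bVl\| \leq 2^{d+1}\|V\|$ (two terms per neighbor) rather than the $2^{d}\|V\|$ quoted before Lemma \ref{l4.6}, so your "room to spare" shrinks to exactly $2^{d+2}\|V\|$ under that convention — still sufficient. Your variant has the mild bonus of producing a uniform bound on the mean single-site energy in the finite-volume ground states, whereas the paper's operator inequality between resolvents holds in every state; for the lemma the two are interchangeable. The domain points you flag (that $\Omega_\Lambda$ lies in the form domains of $H_\lambda^0$ and $G_{\Lml}$, so the expectation of $\widehat H_\Lambda$ can be split termwise, and that the Jensen mean is finite) are the only technicalities, and they are harmless since $\bVl$ is bounded and $H_\lambda^0$, $G_{\Lml}$ are commuting and bounded below.
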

\begin{proof}
Proceeding to the Schr\"odinger representation induced 
by the state $\rho_\infty^\Lambda$ on $\cK(\Lambda)$, we 
decompose the generator of the underlying dynamics, 
$$
H_\Lambda = H_{\Lml} + H_\lambda + 
\sum_{\lap \in \Lambda} (V_{\lap \frown \lambda} +
V_{\lambda \frown \lap}) + 
(E_{\Lml} + E_{\lambda} - E_\Lambda) \, 1 \, .
$$ 
Since $H_{\Lml}$ and $H_\lambda$ commute, they have a 
common normalized ground state vector $\Omega$, \ie 
$H_{\Lml} \, \Omega = H_\lambda \, \Omega = 0$. As 
$\| \sum_{\lap \in \Lambda} (V_{\lap \frown \lambda} +
V_{\lambda \frown \lap}) \| \leq 2^{d + 1} \, \| V \|$ and
$H_\Lambda$ is a positive operator which has $\Omega$ in its domain,
this gives 
$$
0 \leq \langle \Omega, H_\Lambda \Omega \rangle
\leq  2^{d + 1} \, \| V \| + (E_{\Lml} + E_{\lambda} - E_\Lambda) \, .
$$  
Bearing in mind that $H_{\Lml} \geq 0$, it follows 
that $H_\Lambda \geq (H_\lambda - 2^{d+2} \, \| V \|)$ and hence 
$$ 
(\nu 1 + H_\Lambda )^{-1} \leq 
((\nu - 2^{d+2} \, \| V \|) 1 + H_\lambda)^{-1}
$$ 
for $\nu > 2^{d+2} \, \| V \|$. Taking the expectation value of 
these operators in the ground state for~$H_\Lambda$ and putting 
$\nu = (\mu + 2^{d+2}) \, \| V \|$, one finally obtains 
the lower bound 
$\rho_\infty^\Lambda((\mu 1 + H_\lambda)^{-1}) 
\geq  (\mu + 2^{d+2} \|V \|)^{-1}$. 
But the chosen extension  $\omega_\infty^\Lambda$ of $\rho_\infty^\Lambda$ 
to the algebra $\cK(\Z)$ does not affect this lower bound,
so the proof of the lemma is complete.
\qed \end{proof}

We can proceed now exactly as in the construction of 
the thermodynamic limits of equilibrium states.
Let $\Lambda_n \subset \Z$,  $n \in \NN$, be increasing sets,
covering all of $\Z$ in the limit and let 
$K_{\infty , n} \subset K_\infty $ be the weak-*-closures of the convex
hulls of ground states $\omega_\infty^{\Lambda_m}$ with $m \geq n$. 
In complete analogy to Lemma \ref{l4.6}, we have the following result.

\begin{lemma} \label{l5.2}
The intersection $\bigcap_{n \in \NN} \, K_{\infty , n} $ is not empty 
and the extremal points of this set 
are pure ground states which are locally normal on $\Z$. 
\end{lemma}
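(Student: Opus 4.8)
The plan is to mimic exactly the argument used in the proof of Lemma~\ref{l4.6}, replacing the perturbation-theoretic estimate of Lemma~\ref{l4.4} by the lower bound just established in the previous lemma. First I would observe that $\bigcap_{n \in \NN} K_{\infty,n}$ is non-empty: each $K_{\infty,n}$ is weak-$^*$-closed (hence compact, being a subset of the state space of the unital C*-algebra $\cK(\Z)$), the family is decreasing, so the finite intersection property gives a non-empty intersection. By central decomposition \cite[Prop.~5.3.37]{BrRo}, any state in $\bigcap_{n} K_{\infty,n}$ decomposes into pure ground states for the C*-dynamical system $(\cK(\Z),\alpha_\RR)$, so it suffices to show that any pure ground state $\omega_\infty \in \bigcap_{n} K_{\infty,n}$ is locally normal on $\Z$.

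The key step is the following uniform estimate. Fix $\lambda \in \Z$ and choose $n$ so large that $\lambda \in \Lambda_m$ for all $m \geq n$. For each such $m$, the preceding lemma gives
$$
\omega_\infty^{\Lambda_m}\big((\mu 1 + H_\lambda)^{-1}\big) \geq (\mu + 2^{d+2} \|V\|)^{-1} \, , \quad \mu > 0 \, .
$$
Since $(\mu 1 + H_\lambda)^{-1} \in \cC(\lambda)$ is a fixed positive operator of norm at most $1/\mu$, and the above bound is the same for every $m \geq n$, it passes to every convex combination $\sum_{m \geq n} p_m \, \omega_\infty^{\Lambda_m}$ and then, by weak-$^*$-continuity of evaluation against a fixed element of $\cK(\Z)$, to every state in $K_{\infty,n}$, in particular to $\omega_\infty$. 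Hence $\omega_\infty\big((\mu 1 + H_\lambda)^{-1}\big) \geq (\mu + 2^{d+2}\|V\|)^{-1} > 0$, so $\omega_\infty \upharpoonright \cC(\lambda) \neq 0$; that is, $\omega_\infty$ is not singular at $\lambda$. Since $\lambda \in \Z$ was arbitrary, $\omega_\infty$ is singular at no point of $\Z$. By Lemma~\ref{l4.5}, applied with $\beta = \infty$ as noted in the text, the associated set $\bLambda$ of points of normality must be all of $\Z$, so $\omega_\infty$ is locally normal on $\Z$. Finally, the extremal points of $\bigcap_n K_{\infty,n}$ are among the pure ground states produced by the central decomposition above, and we have just shown all such states in the intersection are locally normal; this establishes the assertion.

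The one point requiring a little care — and the analogue of the subtlety in Lemma~\ref{l4.6} — is that normality at every individual point $\lambda$ must be upgraded to local normality at every finite $\Lambda \subset \Z$; this is precisely what Lemma~\ref{l4.5} (together with Lemma~\ref{l4.1}(i)) delivers, since for a primary, a fortiori pure, state the center is trivial and the "impenetrable boundary" dichotomy forces each point into either the normal or the singular class. I expect no genuine obstacle here: the hard analytic work (the uniform lower bound independent of the box size) has already been done in the previous lemma, and the remaining argument is the verbatim transcription of the high-temperature case with $\|\rho_\beta^{\Lambda_m} - \rho_\beta^{\Lambda_m\backslash\lambda,\lambda}\|$ replaced by the positive quantity $\omega_\infty^{\Lambda_m}((\mu 1 + H_\lambda)^{-1})$.
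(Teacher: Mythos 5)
Your proposal is correct and follows essentially the same route as the paper: non-emptiness by weak-\(^*\) compactness, the uniform lower bound \(\omega_\infty^{\Lambda_m}((\mu 1 + H_\lambda)^{-1}) \geq (\mu + 2^{d+2}\|V\|)^{-1}\) passed through convex combinations and weak-\(^*\) limits, and then Lemma~\ref{l4.5} (with purity implying primarity) to upgrade non-singularity at each point to local normality. You merely spell out the limiting step that the paper compresses into the parenthetical ``proceeding to a suitable weak-\(^*\)-limit,'' which is a harmless elaboration.
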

\begin{proof}
The intersection $\bigcap_{n \in \NN}  K_{\infty , n} $ is not 
empty because of the reasons  
given in the proof of Lemma \ref{l4.6}, so let 
$\omega_\infty \in  \bigcap_{n \in \NN} \, K_{\infty , n}$ be an extremal 
(hence pure) state in this closed, convex set. Given any 
$\lambda \in \Z$ we choose $n \in \NN$ sufficiently large such 
that $\Lambda_m \ni \lambda$, $m \geq n$. It then follows from
the preceding lemma (proceeding to a suitable weak-*-limit of the states 
for $\Lambda \nearrow \Z$) that for $\mu > 0$
one has $\omega_\infty((\mu 1 + H_\lambda)^{-1}) \geq (\mu + 2^{d+2} \| 
V \|)^{-1} > 0$. Since $\omega_\infty$ is pure and thus 
\textit{a fortiori} a primary state on $\cK(\Z)$ and since $\lambda \in \Z$
was arbitrary, it follows from Lemma~\ref{l4.5} that 
$\omega_\infty$ is locally normal on $\Z$. 
\qed \end{proof}

Equipped with this information, one can now establish the following 
proposition. Since the proof is identical to the one given for equilibrium 
states, we can omit it.

\begin{proposition} \label{p5.3}
Let $V \in \Co(\RR^d)$ be the interaction potential 
between nearest neighbors on 
the lattice $\Z$. There exist regular and pure ground states 
$\omega_\infty$ for the corresponding  
dynamics $\alpha_\RR$ on the resolvent algebra $\cR(\Z)$. 
\end{proposition}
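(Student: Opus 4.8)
The plan is to carry over the argument of Proposition~\ref{p4.7} essentially verbatim, with Lemma~\ref{l5.2} playing the role of Lemma~\ref{l4.6} and with the implementation of the dynamics in a ground state replacing the modular (KMS) implementation used there. First I would appeal to Lemma~\ref{l5.2} to pick an extremal point $\omega_\infty$ of $\bigcap_{n \in \NN} K_{\infty,n}$; by that lemma $\omega_\infty$ is a pure ground state for the C*-dynamical system $(\cK(\Z), \alpha_\RR)$ which is locally normal on all of $\Z$. Let $(\pi_\infty, \cH_\infty, \Omega_\infty)$ denote its GNS triple.

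Next I would use the standard fact (cf.\ \cite{BrRo}) that a ground state of a C*-dynamical system is time invariant and that the dynamics is therefore implemented in its GNS representation by a strongly continuous unitary group $U_\infty(t) = e^{i t H_\infty}$, $t \in \RR$, with positive generator $H_\infty \geq 0$ satisfying $H_\infty \Omega_\infty = 0$ and $U_\infty(t)\, \pi_\infty(\,\bdot\,)\, U_\infty(-t) = \pi_\infty \circ \alpha_t(\,\bdot\,)$. In particular $\pi_\infty$ and $\pi_\infty \circ \alpha_t$ are unitarily equivalent for every $t \in \RR$, and since $\omega_\infty$, hence $\pi_\infty$, is locally normal, so is each $\pi_\infty \circ \alpha_t$.

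I would then lift everything to the full resolvent algebra by Lemma~\ref{l4.1}(i): the locally normal representation $\pi_\infty$, and likewise each $\pi_\infty \circ \alpha_t$, extends uniquely in the strong operator topology on every $\cC(\Lambda)$, $\Lambda \subset \Z$ finite, to a regular representation of the C*-inductive limit $\cR(\Z)$; I keep the name $\pi_\infty$ for the extension. Because $\cR(\Z)$ is stable under $\alpha_\RR$, both $\pi_\infty \circ \alpha_t$ and $\mathrm{Ad}\,U_\infty(t) \circ \pi_\infty$ are regular extensions of one and the same locally normal representation of $\cR(\Z)$, and by the uniqueness of the regular extension they agree on all of $\cR(\Z)$. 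Hence $\omega_\infty(\,\bdot\,) \doteq \langle \Omega_\infty,\, \pi_\infty(\,\bdot\,)\, \Omega_\infty \rangle$, now read as a state on $\cR(\Z)$, is regular, it extends the original ground state on $\cK(\Z)$, and it is a ground state for $\alpha_\RR$ on $\cR(\Z)$ since the positive operator $H_\infty$ implements the dynamics and annihilates $\Omega_\infty$. Purity is then automatic: $\omega_\infty$ being pure on $\cK(\Z)$ gives $\pi_\infty(\cK(\Z))' = \CC 1$, whence $\pi_\infty(\cR(\Z))' \subseteq \pi_\infty(\cK(\Z))' = \CC 1$, so the extended representation is irreducible and $\omega_\infty$ is pure.

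The only genuine departure from the equilibrium argument is the first input of the second paragraph -- replacing Araki's perturbation theory and the KMS implementation by the invariance and positive-generator implementation of ground states -- and this is entirely standard, so I expect no real obstacle; the remaining steps are a literal transcription of the proof of Proposition~\ref{p4.7}, which is why it may be omitted.
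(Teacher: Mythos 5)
Your proposal is correct and follows essentially the same route as the paper: the paper omits the proof of Proposition~\ref{p5.3} precisely because it is the argument of Proposition~\ref{p4.7} transcribed verbatim, with Lemma~\ref{l5.2} in place of Lemma~\ref{l4.6} and the invariance/positive-generator implementation of a ground state replacing the KMS implementation, followed by the extension to $\cR(\Z)$ via Lemma~\ref{l4.1}(i). Your additional remarks on the uniqueness of the regular extension and on purity of the extended state are consistent with what the paper's omitted argument requires.
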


\section{More dynamics}
\setcounter{equation}{0}

In the preceding analysis we have considered the simplest non-trivial 
examples of infinite lattice systems with nearest neighbor 
interactions. The forces keeping
the particles in a neighborhood of their respective 
lattice points were of harmonic 
nature and their interaction was described by a 
regular class of potentials. We have restricted attention 
to this case in order not to obscure the novel features of 
our construction of dynamics and states. 

In order to indicate how one can deal with the functional
analytic problems appearing in more complicated 
situations, such as singular interactions and non-harmonic
binding forces, 
and to reveal also certain limitations of the present
framework, we outline here three characteristic examples.
Moreover, we consider only the simple  
case of a single particle with one degree of freedom and
deal with its resolvent algebra~$\cR$ in the (faithful)
Schr\"odinger representation. 

The 
canonical momentum and position operators 
of the particle are denoted by $P,Q$ with
their standard domain of essential 
selfadjointness {$\cD \subset L^2(\RR)$}, consisting of
Schwartz test functions on which~$Q$ acts as a multiplication operator. 
The resolvent algebra $\cR$ is then concretely given as 
the C*-algebra which is generated by the resolvents 
\mbox{$R(a,b,c) \doteq (i c 1 + a P + b \, Q)^{-1}$}, 
$a,b \in \RR$, $c \in \RR \backslash \{0\}$; 
it contains the ideal of compact operators $\cC \subset \cR$.
The free Hamiltonian $H_0 = P^{\, 2}$ acts on~$\cD$ and,  
for the sake of simplicity, we consider here only unbounded 
potentials $V = V(Q)$ which are defined on $\cD$ as well
and for which the resulting Hamiltonian $H = (H_0 + V)$ is essentially 
selfadjoint on $\cD$. 

In order to prove that for $R \in \cR$ also all operators 
$e^{itH} R e^{-itH} \in \cR$, \mbox{$t \in \RR$}, we make use of the fact 
that $e^{itH_0} R e^{-itH_0} \in \cR$, $t \in \RR$, since the free 
dynamics maps the set of resolvents
of linear combinations of $P,Q$ onto itself. Thus, if one can show that 
for $R \in \cR$ all differences 
$(e^{itH} R e^{-itH} - e^{itH_0} R e^{-itH_0})$, $t \in \RR$, are compact 
operators, \ie are elements of $\cC \subset \cR$, the assertion follows. 
In \cite{BuGr2} and also in the preceding analysis, this task was 
accomplished by proving that the operators $(e^{itH} e^{-itH_0} - 1)$, 
$t \in \RR$, are compact. This strategy can work if the essential spectra
of $H_0$ and $H$ coincide, but it is bound to fail otherwise. In the 
following we will deal with either case.

\subsection{Singular potentials}

Before discussing some example of a singular potential, we establish a general
result about perturbations of dynamics, thereby avoiding 
the usage of the Dyson series which entered in the preceding analysis. 

\begin{lemma} \label{l6.1} 
Let $V$ be a potential, defined on the domain~$\cD$, which satisfies  
$\int_0^t \! ds \, \| V \, e^{-is H_0} \Phi \|^2 \leq \| C_t \, \Phi \|^2$ 
for $\Phi \in \cD$, 
where $C_t$ are compact operators, $t \geq 0$. Then 
$H \doteq (H_0 + V)$ is essentially selfadjoint on $\cD$ and 
$\big( e^{itH} \, e^{-itH_0} - 1 \big) \in \cC$, $t \in \RR$.
\end{lemma}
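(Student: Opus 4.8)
The plan is to prove the two conclusions of Lemma~\ref{l6.1} in sequence: first essential selfadjointness of $H=(H_0+V)$ on $\cD$, then compactness of $(e^{itH}e^{-itH_0}-1)$. For essential selfadjointness I would not try to apply a Kato--Rellich type criterion directly --- the hypothesis is an integrated $L^2$-in-time bound, not an operator inequality $\|V\Phi\|\le a\|H_0\Phi\|+b\|\Phi\|$ --- but rather construct the unitary group directly. Concretely, define on $\cD$ the Dyson-type series
\begin{equation}
U(t)\Phi \doteq e^{-itH_0}\Phi + \sum_{n=1}^\infty (-i)^n \int_0^t \! ds_n \int_0^{s_n}\! ds_{n-1}\cdots\int_0^{s_2}\! ds_1\, e^{-i(t-s_n)H_0} V e^{-i(s_n-s_{n-1})H_0} V \cdots V e^{-is_1 H_0}\Phi\, ,
\end{equation}
and show it converges. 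The hypothesis $\int_0^t ds\,\|V e^{-isH_0}\Phi\|^2 \le \|C_t\Phi\|^2$ with $C_t$ compact gives (via Cauchy--Schwarz in the innermost time variable) a bound $\|V e^{-is_1 H_0}\Phi\|$ that is square-integrable, and iterating together with the fact that $e^{-isH_0}$ maps $\cD$ into $\cD$ and is unitary yields the $n$-th term bounded by something like $\|C_t\|^n |t|^{n/2}/\sqrt{n!}\,\|\Phi\|$ --- I would need to be a little careful about which operators the intermediate $V$'s act on, iterating the hypothesis at each level. This gives norm-convergence on $\cD$, hence a bounded operator $U(t)$ for each $t$; one then checks the group law and strong continuity on $\cD$, concludes $U(t)$ extends to a strongly continuous one-parameter unitary group, and identifies its generator with the closure of $H$ on $\cD$ (Stone's theorem plus the differential equation satisfied by $U(t)\Phi$ for $\Phi\in\cD$). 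That establishes essential selfadjointness and $e^{itH}=U(-t)$.

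For the second assertion I would work from the same series. Writing $e^{itH}e^{-itH_0}-1 = (U(-t)-e^{itH_0})e^{-itH_0}$, the difference is exactly the sum over $n\ge 1$ of the Dyson terms with the outer $e^{-itH_0}$ stripped; it therefore suffices to show each individual term is compact, since the series converges in operator norm (by the bounds just obtained) and $\cC$ is norm-closed. For the $n=1$ term, $\int_0^t ds\, e^{i(t-s)H_0} V e^{-isH_0}e^{-itH_0}(\text{stripped})$, i.e. $\int_0^t ds\, e^{itH_0}e^{-isH_0}Ve^{-isH_0}$: here the key observation is that the hypothesis says precisely that the map $\Phi\mapsto \int_0^t V e^{-isH_0}\Phi\,ds$ factors through the compact operator $C_t$ in the $L^2$-in-time sense, and by an argument of the type used in Lemma~\ref{l2.1} --- dominated convergence against weakly null sequences, using that the integrand is weakly continuous and majorized --- one concludes the integral operator is compact. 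For higher $n$ one peels off the outermost $V e^{-is_n H_0}$ factor, which by the same reasoning contributes a compact operator, and multiplies by the bounded remainder.

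The main obstacle I anticipate is the bookkeeping in the iteration: the hypothesis is stated for $V e^{-isH_0}$ applied to $\Phi\in\cD$, but in the $n$-th Dyson term the innermost group acts on $\Phi$ while the subsequent $V$'s act on vectors of the form $e^{-i(s_{k+1}-s_k)H_0}V\cdots\Phi$, which need not lie in $\cD$ a priori; one must argue that they do (using that $e^{-isH_0}\cD\subseteq\cD$ and that $V\cD\subseteq$ the domain of $H_0$, or more carefully that the partial sums stay in $\cD$) and then re-apply the hypothesis at each nesting level, absorbing the compact operators $C_{s_{k}}$ into uniform bounds $\sup_{0\le s\le t}\|C_s\|$. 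Getting the combinatorial factor right so that the series converges --- the $1/\sqrt{n!}$ coming from the nested simplex of time integrations combined with the square-root from Cauchy--Schwarz --- is the delicate estimate. Once that is in place, compactness of each term and norm-closedness of $\cC$ finish the proof.
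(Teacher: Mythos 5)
Your route through a full Dyson series is where the argument breaks down, and the obstacle you flag yourself is in fact fatal under the stated hypotheses. The potential $V$ is only assumed to be defined on $\cD$, and the integrated bound $\int_0^t ds\,\|Ve^{-isH_0}\Phi\|^2\le\|C_t\Phi\|^2$ is only assumed for $\Phi\in\cD$; nothing gives $V\cD\subseteq\cD$ (for the application in Lemma~\ref{l6.2}, $V(x)=g|x|^{-\kappa}$ does \emph{not} map Schwartz functions to Schwartz functions), so the vectors $Ve^{-is_1H_0}\Phi$ entering the $n\ge 2$ terms are not in $\cD$, the hypothesis cannot be re-applied at the next nesting level, and the higher Dyson terms are not even well defined, let alone bounded by $\|C_t\|^n|t|^{n/2}/\sqrt{n!}$. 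Your compactness step for $n\ge 2$ has the same defect in sharper form: ``peeling off the outermost $Ve^{-is_nH_0}$'' does not produce a compact factor times a bounded remainder, because at fixed $s_n$ that factor is an unbounded operator --- only its $L^2$-in-time integral is controlled by $C_t$ --- and the simplex ordering couples $s_n$ to the inner integral, so there is no such factorization. Finally, essential selfadjointness via constructing $U(t)$ and invoking Stone's theorem rests on the same unjustified iteration; the paper instead obtains it (without detailed proof) from the observation that the hypothesis makes $V$ relatively compact with respect to $H_0$.

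What you did get right is the compactness mechanism for the $n=1$ term: domination of an operator by a compact operator, plus the fact that an operator mapping weakly null sequences to norm null sequences is compact. The paper's proof consists of exactly this mechanism applied \emph{once} to the full difference, with no series at all: by the fundamental theorem of calculus and Cauchy--Schwarz,
\begin{equation*}
|\langle\Psi,(e^{itH}e^{-itH_0}-1)\Phi\rangle|^2
=\Big|\int_0^t\! ds\,\langle\Psi,e^{isH}Ve^{-isH_0}\Phi\rangle\Big|^2
\le t\,\|\Psi\|^2\int_0^t\! ds\,\|Ve^{-isH_0}\Phi\|^2 ,
\end{equation*}
whence $\|(e^{itH}e^{-itH_0}-1)\Phi\|^2\le t\,\|C_t\Phi\|^2$ on $\cD$; compactness of $C_t$ then gives $(e^{itH}e^{-itH_0}-1)\in\cC$ for $t\ge 0$, and the identity $(e^{-itH}e^{itH_0}-1)=-e^{-itH}(e^{itH}e^{-itH_0}-1)e^{itH_0}$ extends this to all $t\in\RR$. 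If you replace your termwise analysis by this single Duhamel step, your compactness argument closes immediately; the Dyson-series scaffolding is both unnecessary and, as it stands, not justifiable from the hypotheses.
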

\begin{proof}
We omit the proof of essential selfadjointness of $H$, which is 
based on the fact that~$V$ is relatively compact with respect to $H_0$
under the given assumption. The second statement follows from 
the simple estimate for $\Phi, \Psi \in \cD$, 
\begin{align*}
| \langle \Psi, (e^{itH} e^{-itH_0} - 1) \Phi \rangle |^2 
& = | \! \int_0^t \! ds \, \langle  \Psi, e^{is H} V e^{-is H_0} 
\Phi \rangle |^2 \\ 
& \leq   \int_0^t \! ds^\prime \, \| e^{-is^\prime H} \Psi \|^2 \,
\int_0^t \! ds \, \| V e^{-is H_0}  \Phi \|^2 \, 
\end{align*}
which implies \ $ \| (e^{itH} e^{-itH_0} - 1) \Phi \|^2 \leq t \, 
\| C_t \Phi \|^2$, $\Phi \in \cD$. 
Since $C_t$ is compact, this shows that 
$(e^{itH} e^{-itH_0} - 1)$ maps weakly convergent 
null sequences of vectors to strongly convergent null sequences; it
thus is a compact operator, \ie an element of $\cC$, $t \geq 0$. 
Moreover, $ (e^{-itH} e^{itH_0} - 1) = 
- e^{-itH} \, (e^{itH} e^{-itH_0} - 1) \, e^{itH_0}$, 
so this holds true for all $t \in \RR$. 
\qed \end{proof}

Equipped with this technical result we can exhibit now singular potentials
which lead to dynamics of the resolvent algebra $\cR$. We do not consider 
here the most general case, but illustrate the method by a 
significant example. 

\begin{lemma} \label{l6.2}
Let  $0 < \kappa < 1/2$, $g \in \RR$, and let 
$x \mapsto V(x) \doteq g \, |x|^{-\kappa}$ on $\RR \backslash \{0 \}$,
$V(0) = 0$. The corresponding Hamiltonian $H = (H_0 + V)$
is essentially selfadjoint on $\cD$, and one has 
$(e^{itH} e^{-itH_0} - 1) \in \cC$, $t \in \RR$. Hence  $\cR$  is stable 
under the adjoint action of $e^{itH}$, $t \in \RR$. 
\end{lemma}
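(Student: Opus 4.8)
The plan is to verify the hypothesis of Lemma~\ref{l6.1} for the potential $V(x) = g\,|x|^{-\kappa}$ with $0 < \kappa < 1/2$; once that is done, the conclusions $(e^{itH}e^{-itH_0}-1) \in \cC$ and the stability of $\cR$ under $\mathrm{Ad}\,e^{itH}$ follow immediately, the latter by the argument already given in Sec.~6: since $e^{itH_0}Re^{-itH_0} \in \cR$ for $R \in \cR$ and $e^{itH}Re^{-itH} - e^{itH_0}Re^{-itH_0} = (e^{itH}e^{-itH_0}-1)\,e^{itH_0}Re^{-itH_0} + e^{itH_0}Re^{-itH_0}\,(e^{itH_0}e^{-itH}-1)$ is a product of bounded operators with a compact factor, hence lies in $\cC \subset \cR$.

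So the whole task reduces to the estimate
$$
\int_0^t ds\,\| V\,e^{-isH_0}\Phi\|^2 \leq \|C_t\,\Phi\|^2, \qquad \Phi \in \cD,
$$
with $C_t$ compact. First I would recall that $e^{-isH_0} = e^{-isP^2}$ is the free Schr\"odinger evolution on $L^2(\RR)$, which for $s \neq 0$ has the explicit integral kernel $(4\pi i s)^{-1/2}\exp(i|x-y|^2/4s)$; in particular $e^{-isH_0}$ maps $L^1(\RR)$ into $L^\infty(\RR)$ with $\|e^{-isH_0}\Phi\|_\infty \leq (4\pi|s|)^{-1/2}\|\Phi\|_1$. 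The key pointwise bound is then
$$
\| V\,e^{-isH_0}\Phi\|^2 = |g|^2\!\int |x|^{-2\kappa}\,|(e^{-isH_0}\Phi)(x)|^2\,dx,
$$
and since $2\kappa < 1$ the weight $|x|^{-2\kappa}$ is locally integrable. I would split the $x$-integral at $|x|=1$: on $|x| > 1$ the weight is bounded and the term is $\leq |g|^2\|\Phi\|^2$; on $|x| \leq 1$ I would use the smoothing estimate $\|e^{-isH_0}\Phi\|_\infty \leq (4\pi|s|)^{-1/2}\|\Phi\|_1$ together with $\int_{|x|\leq 1}|x|^{-2\kappa}\,dx < \infty$ (valid precisely because $\kappa < 1/2$), giving a bound $\lesssim |s|^{-1}\|\Phi\|_1^2$. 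The factor $|s|^{-1}$ is not integrable near $s=0$, so I would instead interpolate: $\|e^{-isH_0}\Phi\|_\infty^2 \lesssim |s|^{-1}\|\Phi\|_1^2$ and $\|e^{-isH_0}\Phi\|_2^2 = \|\Phi\|_2^2$, hence $\int_{|x|\leq1}|x|^{-2\kappa}|(e^{-isH_0}\Phi)|^2 \lesssim \min(|s|^{-1}\|\Phi\|_1^2,\|\Phi\|_2^2)$, whose integral over $s \in [0,t]$ converges. To produce a genuinely \emph{compact} dominating operator $C_t$ (rather than just a bounded one), I would bound $\|\Phi\|_1$ on a fixed bounded region: writing $\Phi = \chi\Phi + (1-\chi)\Phi$ with $\chi$ a smooth cutoff, one estimates the near-origin contribution by a quantity like $\| (1+Q^2)^{-N}\Phi \|$ times a constant, and such operators $(1+Q^2)^{-N}$ are compact on $L^2(\RR)$ (they lie in $\cC$). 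Assembling these pieces, $\int_0^t\|Ve^{-isH_0}\Phi\|^2\,ds \leq \|C_t\Phi\|^2$ for a suitable compact $C_t$, and Lemma~\ref{l6.1} applies.

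The main obstacle I anticipate is the singularity at $s = 0$ in the dispersive estimate combined with getting a \emph{compact} (not merely bounded) majorant $C_t$: one must control the $L^1$-mass of $\Phi$ that feeds the smoothing bound $\|e^{-isH_0}\Phi\|_\infty \lesssim |s|^{-1/2}\|\Phi\|_1$ by a compact operator, which forces the interpolation between the $L^1\to L^\infty$ and the unitary $L^2\to L^2$ bounds and a careful localization argument near $x=0$; the restriction $\kappa < 1/2$ enters exactly to make $|x|^{-2\kappa}$ integrable near the origin, which is what the localization needs. Essential selfadjointness of $H$ on $\cD$ is comparatively routine — $V$ is $H_0$-form bounded with relative bound zero (indeed relatively compact), as the same estimates show — and Lemma~\ref{l6.1} already absorbs that part of the statement.
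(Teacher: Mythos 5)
There is a genuine gap at precisely the point you yourself identify as the main obstacle: producing a \emph{compact} majorant $C_t$. Your proposed fix is based on a false claim: $(1+Q^2)^{-N}$ is a multiplication operator with essential range not reduced to $\{0\}$, hence it is \emph{not} compact on $L^2(\RR)$ and does not lie in $\cC$ (the ideal $\cC$ is generated by products $f(P)g(Q)$, not by functions of $Q$ alone). Worse, the defect cannot be repaired within your scheme: no compact operator $C$ whatsoever satisfies $\|\chi\Phi\|_1 \leq \|C\Phi\|_2$ for all $\Phi$. Indeed, take $\Phi_n = e^{inx}\phi$ with $\phi$ a fixed test function; then $\Phi_n \rightharpoonup 0$ weakly, so $\|C\Phi_n\| \to 0$ for every compact $C$, while $\|\chi\Phi_n\|_1 = \|\chi\phi\|_1$ stays constant. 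Since your entire control of the region $|x|\le 1$, $s$ small rests on the dispersive bound $\|e^{-isH_0}\Phi\|_\infty \lesssim |s|^{-1/2}\|\Phi\|_1$, which is blind to oscillations of $\Phi$, the resulting majorant is necessarily bounded away from zero along such weakly null sequences and therefore can never be dominated by a compact quadratic form --- even though the true quantity $\int_0^t \|Ve^{-isH_0}\Phi_n\|^2\,ds$ does tend to zero, because high-momentum wave packets leave the singularity region quickly. A related structural problem: your interpolated bound $\int_0^t \min(|s|^{-1}\|\Phi\|_1^2,\|\Phi\|_2^2)\,ds$ is not of the form $\|C_t\Phi\|^2$ for any linear operator $C_t$ (the crossover point depends on $\Phi$), so even the bounded version of the estimate is not in the shape required by Lemma~\ref{l6.1}.

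The mechanism that actually yields compactness is the oscillatory energy denominator, and this is how the paper proceeds: one approximates $V^2$ by compactly supported functions $V^2_{\,n}$ (chosen as exact derivatives so that $\widehat{V^2_{\,n}}(0)=0$, with $|\widehat{V^2_{\,n}}(k)| \lesssim |k|^{2\kappa-1}$ at large $|k|$), computes the momentum-space kernel of $\int_0^t ds\, e^{isH_0} V^2_{\,n}\, e^{-isH_0}$, namely $i\,\widehat{V^2_{\,n}}(p-q)\,(1-e^{it(p^2-q^2)})/(p^2-q^2)$, and checks it is square integrable --- here $\kappa<1/2$ enters through the integrability of $|k|^{4\kappa-3}$ at infinity, and the factor $(1-e^{it(p^2-q^2)})/(p^2-q^2)$ supplies exactly the high-energy decay that your dispersive estimate discards. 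Since $\|V^2 - V^2_{\,n}\| = O(n^{-2\kappa})$, the form $\int_0^t ds\, e^{isH_0} V^2 e^{-isH_0}$ is a norm limit of Hilbert--Schmidt operators, hence compact, and its square root serves as $C_t$ in Lemma~\ref{l6.1}. (A minor further point: your algebraic identity in the first paragraph is missing a factor $e^{itH_0}e^{-itH}$ on the right of its first term, though this does not affect the compactness conclusion there.)
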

\noindent \textbf{Remark:}  The admissible singularities of 
$V$ depend on the number of space dimensions. 
\begin{proof} It is apparent that $V$ is defined on $\cD$. 
We will show that the positive quadratic form 
$\int_0^t \! ds \, e^{isH_0} V^2 e^{-isH_0}$ on $\cD \times \cD$ extends to
a compact operator. The statement then follows from the preceding lemma.
Let $\chi$ be a smooth function which is equal to 
$1$ for $|x| \leq 1$ and~$0$ for $|x| \geq 2$. We 
consider approximations of the squared potential, 
$n \in \NN$,  
$$
x \mapsto V^2_{\ n}(x) \doteq g^2 \, |x|^{-2 \kappa} \big( \chi(x/n)
+ (1 - 2 \kappa)^{-1} \, (x/n) \chi^\prime(x/n) \big) \, ,
$$
where the prime $^\prime$ indicates the derivative. The functions
$x \mapsto V^2_{\ n}(x)$ have compact support and are absolutely 
integrable, hence
their Fourier transforms \mbox{$p \mapsto \widehat{V^2_{\ n}}(p)$} 
are entire analytic. Moreover, since 
$$
V^2_{\ n}(x) = 
\mbox{\large $\frac{d}{dx}$} 
\big( (1 - 2 \kappa)^{-1} g^2 |x|^{-2 \kappa} x 
\chi(x/n) \big) \, , \quad x \neq 0 \, , 
$$
the Fourier transforms have a zero at $p=0$. Since 
$\int \! dx \, |x|^{-2 \kappa} \, e^{ipx} = c \, |p|^{2 \kappa - 1}$
for $p \neq 0$ and each $\widehat{V^2_{\ n}}$ is obtained from
$\widehat{V^2}$ by convolution with a test function, 
it is also clear that one obtains the bound 
$|\widehat{V^2_{\ n}}(p)| \leq c_n \, |p|^{2 \kappa - 1}$
for large~$|p|$. After these preparations we can 
analyze now the kernels of the quadratic forms 
$ \int_0^t \! ds \, e^{is H_0} {V^2_{\ n}} e^{-is H_0}$ in 
momentum space, which are defined in the sense of distributions.
Making use of Dirac's bra-ket notation, they are given by
$$
\langle p | \int_0^t \! ds \, e^{is H_0} {V^2_{\ n}}  
e^{-is H_0} \ | q \rangle = i \, \widehat{V^2_{\ n}}(p-q) \,    
(1 - e^{it(p^2 - q^2)})/(p^2 - q^2) \, .
$$
These kernels are square integrable in $p,q$, as one sees by substituting   
{$k \doteq (p-q)$}, $l \doteq (p+q)$,
\begin{align*}
& \int \! dp \! \! \int \! dq \, |\widehat{V^2_{\ n}}(p-q) \,    
(1 - e^{it(p^2 - q^2)})/(p^2 - q^2)|^2 \\ 
& = 8 \int \! dk \! \! \int \! dl \, |\widehat{V^2_{\ n}}(k)|^2
\, \sin^4(kl/2) / k^2 l^2 
= 8 \int \! dk  |\widehat{V^2_{\ n}}(k)|^2 / |k| \! \! 
\int \! dl \, \sin^4(l/2)/l^2 \, .
\end{align*}
The latter integral with respect to $l$ exists and the 
integral with respect to $k$ exists as well since the 
functions $k \mapsto  |\widehat{V^2_{\ n}}(k)|^2 / |k|$ are continuous
due to the zero of $\widehat{V^2_{\ n}}(k)$ at $k = 0$ and the
asymptotic bound $|\widehat{V^2_{\ n}}(k)|^2 / |k| \leq c_n^2 \, |k|^{4\kappa -3}$.
So we conclude that the forms 
$ \int_0^t \! ds \, e^{is H_0} {V^2_{\ n}} e^{-is H_0}$ extend to operators 
in the Hilbert Schmidt class and hence are compact.
Finally, we make use of the fact that the operators 
$(V^2 - {V^2_{\ n}})$ are bounded by construction with norm
satisfying $\| ( V^2  - {V^2_{\ n}} ) \| \leq c^\prime \, n^{- 2 \kappa}$, where 
$c^\prime$ does not depend on $n \in \NN$. It follows that
$ \| \int_0^t \! ds \, e^{isH_0} ( V^2 - V^2_n )  e^{-isH_0} \| 
\leq t c^\prime \, n^{- 2 \kappa} $. 
Hence the form $\int_0^t \! ds e^{isH_0} V^2 e^{-isH_0}$ can be approximated
in norm by compact operators and thus extends to a compact operator
as well. The statement then follows from the preceding lemma.
\qed \end{proof}

\subsection{Unbounded potentials}

We turn now to the discussion of some non-harmon\-ic 
binding force involving an unbounded potential such that 
$H$ has discrete spectrum. As already mentioned, the method
of proof of stability of $\cR$ under the corresponding 
dynamics, used in the preceding subsection, is then bound
to fail. 

\begin{lemma} \label{l6.3}
Let  $0 < \kappa < 1$, $g, x_0 > 0$, and let 
$x \mapsto V(x) \doteq g \, (x^2 + x_0^2)^{\kappa/2}$.
The corresponding Hamiltonian $H = (H_0 + V)$
is essentially selfadjoint on $\cD$, has a compact resolvent, 
and the resolvent 
algebra~$\cR$  is stable under the adjoint action of $e^{itH}$, $t \in \RR$. 
\end{lemma}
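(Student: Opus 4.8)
The plan is to establish the three assertions in turn, the stability of $\cR$ being the substantial point. For the first two I would only indicate the classical arguments. Since $V \ge g\,x_0^{\,\kappa} > 0$ is smooth and bounded below, $H = H_0 + V$ is essentially selfadjoint on $C_c^\infty(\RR)$ by standard criteria; and since $V$ grows only like $|x|^\kappa$, the Schwartz domain $\cD$ lies between $C_c^\infty(\RR)$ and the domain of the selfadjoint closure of $H$, hence is a core, as claimed. As $V(x) \to \infty$ for $|x| \to \infty$, the form domain of $H$ embeds compactly into $L^2(\RR)$, so $H$ has purely discrete spectrum and $(\mu 1 + H)^{-1} \in \cC \subset \cR$ for $\mu > 0$.

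For the stability of $\cR$ under $\mathrm{Ad}\, e^{itH}$ it suffices, this being a norm-continuous automorphism of the $C^*$-algebra of all bounded operators on $\cH$, to show that $e^{itH} R(a,b,c)\, e^{-itH} \in \cR$ for each generating resolvent. As emphasised in the text, the method of the previous subsection — proving $(e^{itH} e^{-itH_0} - 1) \in \cC$ — must fail here, because $H$ has empty while $H_0$ has non-empty essential spectrum. Instead I would compare $e^{itH} R(a,b,c)\, e^{-itH}$ directly with its free counterpart. The free flow maps resolvents of linear combinations of $P$ and $Q$ onto such resolvents, $e^{itH_0} R(a,b,c)\, e^{-itH_0} = R_0(t) \doteq (ic1 + (a + 2tb)P + bQ)^{-1} \in \cR$, and Duhamel's formula gives, on the common core $\cD$ and with the integral in the strong operator topology,
\[
e^{itH} R(a,b,c)\, e^{-itH} - R_0(t) \;=\; i \int_0^t\! ds \; e^{isH}\,[V,\, R_0(t-s)]\, e^{-isH}\,.
\]
Only the free-evolved resolvent enters the integrand, so there is no circularity.

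The decisive point is that the hypothesis $\kappa < 1$ forces $V'(x) = g\kappa\, x\,(x^2 + x_0^2)^{(\kappa-2)/2}$ to lie in $\Co(\RR)$. Setting $L_u \doteq (a + 2ub)P + bQ$, the identity $[P, V(Q)] = -i V'(Q)$ yields the \emph{bounded} operator
\[
[V,\, R_0(u)] \;=\; -\,i\,(a + 2ub)\,(ic1 + L_u)^{-1}\, V'(Q)\,(ic1 + L_u)^{-1}\,.
\]
When $a + 2ub \neq 0$ the operators $L_u$ and $Q$ are linearly independent, so by von Neumann's uniqueness theorem $(ic1 + L_u)^{-1} V'(Q)$ is unitarily equivalent to an operator $f(P)g(Q)$ with $f,g \in \Co(\RR)$ and is therefore compact — an element of $\cC$, exactly as in the structural results of \cite{BuGr2} used in the proof of Lemma \ref{l2.2}; when $a + 2ub = 0$, which occurs for at most one value of $u$, the commutator vanishes. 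Hence $s \mapsto e^{isH}[V, R_0(t-s)]e^{-isH}$ is uniformly norm-bounded and takes values in the compact operators for almost every $s$, and the dominated-convergence argument of Lemma \ref{l2.1} — applying the integral above to a weakly null sequence of unit vectors — shows that this integral is itself compact. Therefore $e^{itH} R(a,b,c)\, e^{-itH}$ differs from $R_0(t) \in \cR$ by an element of $\cC \subset \cR$, and the lemma follows by passing to the generated $C^*$-algebra.

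The main obstacle is precisely this last estimate: one has to recognise $[V, R_0(u)]$, a priori only a sesquilinear form because $V$ is unbounded, as a genuine compact operator. This rests on the two facts just used — that $V' \in \Co(\RR)$, which is where the assumption $\kappa < 1$ enters and which makes the commutator bounded, and that a product $f(L_1)g(L_2)$ with $f,g \in \Co(\RR)$ and $L_1,L_2$ independent real combinations of $P$ and $Q$ lies in the compact ideal of $\cR$. Turning the Duhamel identity into a rigorous operator identity on $\cD$ is a routine but non-empty technicality.
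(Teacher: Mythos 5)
Your argument is correct and follows essentially the same route as the paper's proof: a Duhamel comparison of the interacting and free evolution of a generating resolvent, the commutator identity producing $R_0\,V'\,R_0$ with $V'\in\Co(\RR)$, compactness of that product for almost every time, and compactness of the strong-operator integral, whence the difference lies in $\cC\subset\cR$. The only variation is in a sub-step: you justify compactness of $(ic1+L_u)^{-1}V'(Q)$ by a Stone--von Neumann conjugation onto an operator of the form $f(P)g(Q)$ with $f,g\in\Co(\RR)$, where the paper instead appeals to the kernel arguments of Lemma \ref{l2.1}; both justifications are sound.
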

\begin{proof}
Again, we do not deal here with the domain and spectral properties of 
$H$ since they are well known, cf.\ \cite[Thm.~XIII.67]{ReSi}.
For the proof of the main part of the statement 
we note that it suffices to establish it for the generating resolvents 
$\biR \doteq R(a,b,c) \in \cR$, 
where $a,b \in \RR$, $c \in \RR \backslash \{ 0 \}$. 
Putting $\Gamma(t) \doteq e^{itH} e^{-itH_0}$, $t \in \RR$,
we have in the sense of bilinear forms on $\cD \times \cD$,
making use of the fundamental theorem of calculus,  
\begin{align*}
 (\Gamma(t) \biR \, \Gamma(t)^{-1} - \biR) 
 = i \int_0^t \! ds \, e^{is H} \, [V, \biR(-s)] \, e^{-is H} \, ,
\end{align*}
where $\biR(-s) \doteq e^{-isH_0} \biR e^{isH_0} = R(a-2s,b,c)$. 
Since the inverse of $\biR$ is linear in the operators
$P, Q$ and $1$, we obtain   
$[V, \biR(-s)] = -i(a - 2s) \, \biR(-s) V^\prime \biR(-s)$,
where 
$x \mapsto V^{\, \prime}(x) = g \kappa \, x \, (x^2 + x_0^2)^{(\kappa/2 -1)}$. 
So we arrive at 
$$
(\Gamma(t) \biR \, \Gamma(t)^{-1} - \biR) =  
\int_0^t \! ds \, (a - 2s) \,  e^{is H}  \biR(-s) V^{\, \prime} \biR(-s)  
\, e^{-is H} \, .
$$
But $x \mapsto V^\prime(x) \in \Co(\RR)$, so it follows
from arguments similar to those given in the proof 
of Lemma~\ref{l2.1} that the operator function 
$$
s \mapsto (a - 2s) \, 
 e^{is H} \, \biR(-s) V^\prime \biR(-s)\, e^{-is H} 
$$ 
is, for almost all $s \in \RR$, a compact 
operator. Since it is also bounded on compact sets of $\RR$,
its integral (defined in the strong operator topology) is a compact 
operator, as well. This shows that for all resolvents $\biR$ and 
$t \in \RR$
$$
(e^{-itH_0} \biR e^{itH_0} - e^{-itH} \biR e^{itH}) 
= e^{-itH} ( \Gamma(t) \biR \Gamma(t)^{-1} - \biR )  e^{itH} \in \cC \, . 
$$  
Since the set of polynomials of the resolvents is norm dense 
in $\cR$, it proves that the resolvent algebra is stable under
the perturbed dynamics. 
\qed \end{proof}

\subsection{Inadmissible dynamics} \label{s6.3} 
\setcounter{equation}{0}

We conclude this outline with the remark that the non-interacting relativistic 
Hamiltonian $H_m = (P^2 + m^2)^{1/2}$, $m > 0$,
which is related to the boundary case
$\kappa = 1$ in the preceding lemma, does \textit{not} lead to an automorphism
group of $\cR$. In order to prove this, consider the resolvent of the
position operator $(i c + Q)^{-1}$, $c \in \RR \backslash \{0 \}$. 
By an elementary computation one obtains  
\begin{equation} \label{evolution}
e^{it H_m} (i c + Q)^{-1} e^{-it H_m}
= (i c + Q + tP (P^2 + m^2)^{-1/2})^{-1} \, , \quad t \in \RR \, .
\end{equation}
To see that the operator on the right hand side is not an element 
of the resolvent algebra if $t \neq 0$, 
we consider the unitary representation of the dilations 
$D(\delta)$, $\delta \in \RR_+$, on the underlying
Hilbert space whose adjoint action on the basic resolvents is given by 
$$
D(\delta) (i c + a P + b \, Q)^{-1} D(\delta)^{-1} =
(i c + \delta a P + \delta^{-1} b \, Q)^{-1} \, , 
\quad \delta \in \RR_+ \, . 
$$ 
By arguments already used in the proof of 
\cite[Thm.\ 4.8]{BuGr2}, one obtains the following result.

\begin{lemma} \label{l6.4}
For given $a,b \in \RR$, $c \in \RR \backslash \{ 0 \}$,  
one has in the strong operator topology 
$$\lim_{\delta \rightarrow \infty} \, D(\delta) (i c + a P + b \, Q)^{-1} 
D(\delta)^{-1} = 
\begin{cases}
(i c)^{-1} \, 1  & \mbox{if} \quad a=0 \\
0                     & \mbox{if} \quad a \neq 0 \, .
\end{cases}
$$
Hence 
$\lim_{\delta \rightarrow \infty} \, D(\delta) R D(\delta)^{-1} \in \CC \, 1$
for any $R \in \cR$. 
\end{lemma}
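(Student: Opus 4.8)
The plan is to reduce the statement to the two explicitly computed limits and then propagate them through the algebraic operations that generate $\cR$. First I would handle the basic resolvents $R(a,b,c) \doteq (ic + aP + bQ)^{-1}$ directly. Conjugating by $D(\delta)$ gives $(ic + \delta a P + \delta^{-1} b Q)^{-1}$. If $a = 0$ this is $(ic + \delta^{-1} b Q)^{-1}$, and since $\delta^{-1} b Q \to 0$ in the strong resolvent sense as $\delta \to \infty$ (it converges to $0$ on the dense domain $\cD$), the resolvent converges strongly to $(ic)^{-1} 1$. If $a \neq 0$, then $\| \delta a P \Phi \| \to \infty$ on any vector $\Phi$ with $P\Phi \neq 0$, which forces $(ic + \delta a P + \delta^{-1} b Q)^{-1} \to 0$ strongly; the standard way to see this is to write the resolvent identity relating $(ic + \delta a P + \delta^{-1} b Q)^{-1}$ to $(ic + \delta a P)^{-1}$ and note the latter tends to $0$ strongly because $\delta a P$ has purely continuous spectrum escaping to infinity. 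This is exactly the argument invoked in \cite[Thm.~4.8]{BuGr2}, so I would cite it rather than reproduce it.

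Next I would pass from the generating resolvents to the full algebra. The point is that $\lim_{\delta\to\infty} D(\delta)(\,\bdot\,)D(\delta)^{-1}$, wherever it exists in the strong operator topology, is multiplicative on products of uniformly bounded operators: if $D(\delta) A D(\delta)^{-1} \to A_\infty$ and $D(\delta) B D(\delta)^{-1} \to B_\infty$ strongly with $\|D(\delta) A D(\delta)^{-1}\| = \|A\|$ bounded, then $D(\delta)(AB)D(\delta)^{-1} = (D(\delta)AD(\delta)^{-1})(D(\delta)BD(\delta)^{-1}) \to A_\infty B_\infty$ strongly. It is likewise linear. Hence on the dense $^*$-subalgebra of $\cR$ consisting of polynomials in the basic resolvents, the limit exists and takes values in the algebra generated by the limits of the individual resolvents, which by the first part of the lemma is $\CC 1$ (products and sums of multiples of $1$ and $0$ stay in $\CC 1$).

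Finally I would remove the restriction to the dense subalgebra. Let $R \in \cR$ be arbitrary and let $P_k$ be a sequence of such polynomials with $\| R - P_k \| \to 0$. For each $k$ we have $D(\delta) P_k D(\delta)^{-1} \to c_k 1$ strongly for some $c_k \in \CC$ with $|c_k| \leq \|P_k\|$; moreover the sequence $c_k$ is Cauchy since $|c_k - c_l| \leq \|P_k - P_l\|$, so $c_k \to c \in \CC$. A standard three-epsilon estimate then shows $D(\delta) R D(\delta)^{-1} \to c 1$ strongly: on a fixed vector $\Phi$, bound $\|(D(\delta) R D(\delta)^{-1} - c 1)\Phi\|$ by $\|R - P_k\|\,\|\Phi\|$ plus $\|(D(\delta) P_k D(\delta)^{-1} - c_k 1)\Phi\|$ plus $|c_k - c|\,\|\Phi\|$, choosing $k$ large first and then $\delta$ large. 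I expect no genuine obstacle here; the only subtlety worth care is that the limit for the basic resolvents is only in the strong (not norm) topology, so one must stay at the level of vectors throughout and use the uniform boundedness of the conjugated operators to control the passage through products and through the norm-limit in $\cR$.
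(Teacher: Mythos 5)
Your proposal is correct and follows essentially the same route as the paper: compare the conjugated resolvent with $(ic+\delta a P)^{-1}$ via the resolvent identity for the first part, then pass to general $R\in\cR$ by norm density of polynomials in the basic resolvents (where your three-epsilon argument spells out what the paper leaves implicit). The only point the paper makes explicit that you gloss over is that the resolvent-identity remainder contains the unbounded factor $\delta^{-1}b\,Q(ic+\delta a P)^{-1}$, which must first be commuted through the resolvent on the domain $\cD$, yielding the extra term $iab\,(ic+\delta a P)^{-1}-\delta^{-1}b\,Q$ that then vanishes strongly together with uniform boundedness.
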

\begin{proof}
By a routine computation, one obtains on the domain $\cD$ the equality 
\begin{align*}
& \big( (i c + \delta a P + \delta^{-1} b \, Q)^{-1} 
- (i c + \delta a P)^{-1} \big) \\ 
& = (i c + \delta a P + \delta^{-1} b \, Q)^{-1} \, (- \delta^{-1} b \, Q) \, 
   (i c + \delta a P)^{-1}  \\ 
& = (i c +  \delta a P  + \delta^{-1} b \, Q)^{-1} 
(i c + \delta a P)^{-1} \big( iab \ (i c + \delta a P)^{-1} 
- \delta^{-1} b \, Q  \big) \, .
\end{align*}
Since the resolvents are uniformly bounded for fixed $c \neq 0$ and
$(i c + \delta a P)^{-1} \rightarrow 0$ in the strong operator
topology if $\delta \rightarrow \infty$ and $a \neq 0$, the first part 
of the statement follows. The second part is then a consequence of the fact
that the polynomials of all resolvents form a norm dense set in $\cR$. 
\qed \end{proof}

The fact that the operator on the right
hand side of equation~(\ref{evolution}) does not belong to $\cR$ 
if $t \neq 0$ is now a consequence of the following lemma. 
\begin{lemma} \label{l6.5}
Let $c \neq 0$. Then
$$
\lim_{\delta \rightarrow \infty} \, D(\delta) 
(i c + Q + t P (P^2 + m^2)^{-1/2})^{-1} 
D(\delta)^{-1} = (i c + t P / |P| )^{-1} 
$$
in the strong operator topology. 
\end{lemma}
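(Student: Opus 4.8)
The plan is to conjugate the resolvent by the dilation $D(\delta)$ explicitly, identify the strong limit of the family of self-adjoint operators that appears inside the resolvent, and then upgrade this pointwise convergence of generators to strong convergence of the resolvents themselves.

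First I would use that the dilations act on the basic operators by $D(\delta)\,P\,D(\delta)^{-1} = \delta P$ and $D(\delta)\,Q\,D(\delta)^{-1} = \delta^{-1} Q$, so that by the functional calculus $D(\delta)\,g(P)\,D(\delta)^{-1} = g(\delta P)$ for any bounded Borel function $g$. Applying this with $g(p) = p(p^2+m^2)^{-1/2}$, and leaving the scalar $ic$ untouched, gives
$$
D(\delta)\,(ic + Q + t P (P^2+m^2)^{-1/2})^{-1}\,D(\delta)^{-1} = (ic + B_\delta)^{-1}, \qquad
B_\delta \doteq \delta^{-1} Q + t\,\delta P(\delta^2 P^2 + m^2)^{-1/2},
$$
where $B_\delta$ is self-adjoint on the domain $\cD$, being a bounded self-adjoint perturbation of $\delta^{-1}Q$. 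The candidate limit operator inside the resolvent is $B_\infty \doteq t P/|P|$, which is bounded and self-adjoint.

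Next I would compute the limit of the generators. Rewriting $t\,\delta P(\delta^2 P^2+m^2)^{-1/2} = t\,P(P^2+m^2\delta^{-2})^{-1/2}$, the multiplication functions $p \mapsto t\,p(p^2+m^2\delta^{-2})^{-1/2}$ are bounded by $|t|$ uniformly in $\delta$ and converge, for every $p \neq 0$, to $t\,p/|p|$ as $\delta \to \infty$; by dominated convergence the corresponding bounded operators converge strongly to $t P/|P| = B_\infty$. Since moreover $\delta^{-1} Q\,\Phi \to 0$ for every $\Phi \in \cD$, it follows that $B_\delta\,\Phi \to B_\infty\,\Phi$ for all $\Phi \in \cD$. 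To pass to the resolvents, note that the operators $(ic+B_\delta)^{-1}$ are uniformly bounded by $1/|c|$ since $c$ is real and nonzero and the $B_\delta$ are self-adjoint. For $\Phi \in \cD$ one has
$$
(ic+B_\delta)^{-1}(ic+B_\infty)\Phi - \Phi = (ic+B_\delta)^{-1}\big((B_\infty - B_\delta)\Phi\big) \longrightarrow 0
$$
by the previous step, so $(ic+B_\delta)^{-1}$ converges strongly to $(ic+B_\infty)^{-1}$ on the set $(ic+B_\infty)\,\cD$, which is dense because $ic+B_\infty$ is boundedly invertible; a three-epsilon argument using the uniform bound $1/|c|$ then gives strong convergence on all of $L^2(\RR)$, which is the assertion. (Equivalently, the pointwise convergence $B_\delta\Phi \to B_\infty\Phi$ on the core $\cD$ of the bounded operator $B_\infty$ is precisely the standard criterion for convergence in the strong resolvent sense.)

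The only real subtlety I anticipate is that the perturbation $\delta^{-1}Q$ is unbounded and does not tend to zero in operator norm, so the second resolvent identity cannot be applied globally to compare $(ic+B_\delta)^{-1}$ with $(ic+B_\infty)^{-1}$. The point is that only strong operator convergence is claimed, and for that purpose the combination of the uniform resolvent bound with the domain-vector estimate above is exactly what is needed.
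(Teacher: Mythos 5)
Your proof is correct, and it follows the same basic skeleton as the paper's: conjugate explicitly to obtain $(ic+\delta^{-1}Q+tP(P^2+\delta^{-2}m^2)^{-1/2})^{-1}$, then pass to the limit by a resolvent identity combined with the uniform bound $1/|c|$ and a density argument. The difference lies in how the limiting step is organized. The paper's (terse) argument works with the auxiliary dense subspace $\cD_0 \subset \cD$ of test functions whose momentum-space support avoids the origin, because in its version of the resolvent-identity computation the operators $\delta^{-1}Q$ and the momentum symbols are applied to $(ic+tP/|P|)^{-1}\Psi$, and that vector stays in the domain of $Q$ (and sees uniform convergence of the symbols) only when $\Psi \in \cD_0$. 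You instead apply the difference of the generators directly to $\Phi \in \cD$ and use the dense set $(ic+tP/|P|)\,\cD$; the discontinuity of $p/|p|$ at the origin is then harmless, since $\{p=0\}$ has measure zero and dominated convergence yields $B_\delta\Phi \to B_\infty\Phi$ on all of $\cD$. This buys a slightly cleaner argument that never needs $\cD_0$ and is, as you note, just the standard criterion for strong resolvent convergence. The only cosmetic imprecision is calling $B_\delta$ ``self-adjoint on $\cD$''; it is essentially self-adjoint there, with self-adjoint closure on the domain of $Q$, and you use it only in that correct sense.
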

\begin{proof}
The adjoint action of the dilations on the operator 
is given by, $\delta \in \RR_+$, 
$$  D(\delta) 
(i c + Q + t P (P^2 + m^2)^{-1/2})^{-1} 
D(\delta)^{-1} = (i c + \delta^{-1} Q + 
t P (P^2 + \delta^{-2} m^2)^{-1/2})^{-1} \, .
$$
Now the operator $(i c + t P/|P|)^{-1}$ maps 
the dense set $\cD_0 \subset \cD$ of test functions, 
vanishing in momentum space in a neighborhood of the origin,
onto itself. The statement then follows from the above equality by a similar
computation as in the preceding argument.
\qed \end{proof}

According to the preceding Lemma \ref{l6.5}, 
the scaling limits of the operators  
\eqref{evolution} are not 
multiples of the identity, so by Lemma \ref{l6.4} 
they are not elements 
of the resolvent algebra.
So we conclude that the resolvent algebra $\cR$ is not stable under
the adjoint action of the unitaries $U_m(t)$ for $t \in \RR \backslash \{0\}$.
But, whereas the relativistic dynamics of particles 
does not leave the quantum mechanical resolvent algebra invariant, 
it is noteworthy that 
there do not appear such problems for the resolvent algebra of the 
corresponding relativistic field theory. 

\section{Conclusions}
\setcounter{equation}{0}

In the present article we have continued our study of the resolvent algebra,
which is a C*-algebraic framework for the description of the kinematics of 
quantum systems, putting emphasis here on its applications. 

It was already pointed out in \cite{BuGr2} that any kinematical algebra  
which is capable of describing a variety of 
dynamics of physical interest, must have ideals. Indeed, the
familiar algebra $\cB(\cH)$ of bounded operators on some Hilbert
space does comply with this condition,
it contains the ideal of compact operators. But, in contrast to the 
resolvent algebra, $\cB(\cH)$ does not contain specific information 
about the underlying quantum system. The ideal structure of 
the resolvent algebra is more complex than that of  $\cB(\cH)$.
Compact operators appear in disguise also in subalgebras of 
the resolvent algebra, \ie they 
are homomorphic to compact operators, but have in general infinite multiplicity.
In fact, one can extract from the nesting of these subalgebras the number 
of degrees of freedom of the system \cite{Bu1}. 

In the present study of infinite lattice systems we have exhibited 
several reasons why this ideal structure is essential. 
First, it greatly simplifies proofs that a given dynamics
acts by automorphisms on the resolvent algebra, cf.\ in this context
the Lemmas \ref{l6.2} and \ref{l6.3}. Second, we made use of the 
fact that non-simple algebras, \ie algebras having ideals, 
in general admit outer bounded 
generators describing perturbations of the dynamics, cf.\ Lemma
\ref{l2.2}. Third, the ideal structure was vital in the description
of finitely localized systems on the global algebra; for there exist
states which are annihilated by parts of its ideals, describing a 
situation, where the quantum system is confined to a box, 
cf.\ Lemma~\ref{l4.3}. 

Making use of these features of the resolvent algebra, we were able to 
establish the existence of global dynamics, of global 
equilibrium states at high temperatures and of global ground states.
All of these states are regular.
Proofs to that effect were not available, to the best of our knowledge. 

Our arguments were based on well known results in the context of C*-dynamical 
systems, the vital ingredient being the fact that we were able to show that 
they are applicable here. The resolvent algebra of the lattice theory 
does not belong to this class; yet it contains a C*-algebra formed by
(in regular representations) weak-*-dense subalgebras, on which the 
dynamics acts pointwise norm-continuously. Exploiting this 
structure, we could confine our analysis to states on this subalgebra and 
extend the locally normal states back to regular states on the full algebra 
at the very end of our analysis. Thus the ideal structure entered
again at this point.

The present results suggest to explore the applicability of the 
resolvent algebra to other problems of physical interest, such 
as Pauli-Fierz type models 
of small systems coupled to a thermal reservoir, cf.\ for example \cite{Sch}; 
there we expect no major problems. More ambitiously, one can also
study within this framework the asymptotic 
time behavior of states, where
one couples two thermal reservoirs, filling half spaces, through 
a finite 
layer and ask, whether they approach a steady state in the course of
time, cf.\ for example \cite{Ru}. It seems also possible to treat 
along these lines models of quantum crystals, where the interaction
between neighboring atoms is described by harmonic forces, cf.\ for
example \cite{MiVeZa}; note that these forces induce automorphisms 
of the resolvent algebra. 

Considerably more involved are 
infinite systems, where all particles can 
interact with each other. There the statistics of particles will enter,
\ie particles can no longer be treated as distinguishable, such as in 
the present context, where each particle is confined 
by binding forces to a neighborhood
of its respective lattice point. As already indicated in connection with 
the no-go theorem 
concerning the relativistic dynamics, presented in Subsection~\ref{s6.3}, 
one should then no longer deal with the particle picture, but rely 
on the field theoretic version of the resolvent algebra \cite{BuGr2}.
We hope to return to these problems elsewhere.

\begin{acknowledgement} 
I am grateful to Hendrik Grundling for numerous discussions,  
valuable hints which helped to clarify my views on the present topic, 
and his critical reading of this article before its publication.  
\end{acknowledgement}

\end{document}